\documentclass[12pt]{article}

\usepackage[margin=1in]{geometry}
\usepackage{amsmath,amsthm,bm}
\usepackage{newtxtext,newtxmath}
\usepackage{microtype}

\usepackage{mathtools}

\usepackage{enumitem}
\usepackage{graphicx}
\usepackage{subcaption}
\usepackage{tikz}
\usetikzlibrary{patterns}
\usepackage[ruled]{algorithm2e}
\usepackage{natbib}
\bibpunct[, ]{(}{)}{,}{a}{}{,}

\usepackage{xcolor}
\usepackage{hyperref}
\usepackage[noabbrev,capitalize]{cleveref}
\crefname{equation}{}{}
\hypersetup{
  colorlinks=true,
  linkcolor=blue,
  citecolor=blue,
  urlcolor=blue
}

\theoremstyle{plain}
\newtheorem{theorem}{Theorem}
\newtheorem{lemma}{Lemma}
\newtheorem{proposition}{Proposition}
\newtheorem{corollary}{Corollary}

\theoremstyle{definition}
\newtheorem{definition}{Definition}
\newtheorem{property}{Property}

\newtheorem{myclaim}{Claim}

\providecommand{\Halmos}{}
\newcommand{\nonl}{\renewcommand{\nl}{\let\nl\oldnl}}

\newcommand{\paren}[1]{\left( #1 \right)}
\newcommand{\sqb}[1]{\left[ #1 \right]}
\newcommand{\set}[1]{\left\{ #1 \right\}}

\DeclareMathOperator*{\argmin}{arg\,min}

\definecolor{codered}{rgb}{.8,.1,.1}
\definecolor{codegreen}{rgb}{0,0.6,0}
\definecolor{codeblue}{rgb}{0,0.1,1}
\definecolor{codegray}{rgb}{0.5,0.5,0.5}
\definecolor{codepurple}{rgb}{0.58,0,0.82}

\newcommand{\Dcal}{\mathcal{D}}
\newcommand{\Ecal}{\mathcal{E}}

\newcommand{\Ical}{\mathcal{I}}

\newcommand{\Lcal}{\mathcal{L}}

\newcommand{\Ocal}{\mathcal{O}}
\newcommand{\Qcal}{\mathcal{Q}}
\newcommand{\Rcal}{\mathcal{R}}

\newcommand{\Ebb}{\mathbb{E}}
\newcommand{\Nbb}{\mathbb{N}}
\newcommand{\Pbb}{\mathbb{P}}

\newcommand{\Rbb}{\mathbb{R}}

\newcommand{\Rect}{\textnormal{LS}}
\newcommand{\Area}{\textnormal{Area}}
\newcommand{\DP}{\textnormal{DP}}

\newcommand{\negspace}{\hspace{-12em}}

\title{Approximation Algorithms for Inventory Problems with\\
Decomposable Submodular Ordering Costs}
\author{
Retsef Levi\thanks{Sloan School of Management, Massachusetts Institute of Technology. Email: \href{mailto:retsef@mit.edu}{retsef@mit.edu}.}
\and
Georgia Perakis\thanks{Sloan School of Management, Massachusetts Institute of Technology. Email: \href{mailto:georgiap@mit.edu}{georgiap@mit.edu}.}
\and
Emily Zhang\thanks{Operations Research Center, Massachusetts Institute of Technology. Email: \href{mailto:eyzhang@mit.edu}{eyzhang@mit.edu}.}
}
\date{}

\begin{document}
\maketitle

\begin{abstract}
This paper develops an approximation algorithm for the submodular joint replenishment problem (SJRP) under a broad family of decomposable submodular ordering cost functions. In the SJRP, a central planner coordinates orders to satisfy deterministic demand for multiple items over a finite discrete planning horizon while minimizing total holding and ordering costs, with the latter modeled as a submodular function of the subset of items ordered in each period. The ordering cost functions considered in this paper are defined based on a decomposition of the items into $k$ categories, where the cost is a function of weighted aggregate quantities within each category and allows for arbitrary interactions across categories through a joint cost function. The proposed algorithm rounds the solution to a linear programming relaxation by partitioning the fractional solution into nested regions according to marginal costs using a novel water-filling procedure, and then selecting one order from each region to obtain a feasible integral schedule. The resulting algorithm achieves an $O(k)$-approximation. When the number of categories $k$ is fixed, this yields the first constant-factor guarantee for this broad class of submodular ordering costs, significantly expanding the class of cost functions for which such guarantees are known.
\end{abstract}

\medskip
\noindent\textbf{Keywords:} Inventory management; submodular optimization; approximation algorithms; combinatorial optimization.
\medskip
\section{Introduction}
The deterministic inventory problem studied in this paper captures several well-known models in the supply chain literature, including the economic lot size model \citep{wagner1958dynamic}, the inventory routing problem (IRP) \citep{bell1983improving}, and the submodular joint replenishment problem (SJRP) \citep{cheung2016submodular}. In all of these models, the decision maker must choose when and how much to order to satisfy known period-dependent demands over a finite discrete planning horizon with the goal of minimizing total ordering and holding costs. The IRP and SJRP are known to be NP-hard, but their approximability remains poorly understood due to the complexity of their combinatorial cost structures.

The economic lot size model considers a single item with deterministic, time-varying demand over $T$ periods. In each period, the decision maker selects an order quantity to satisfy current and future demand, where demand due in period~$d$ can be fulfilled by any order placed no later than its deadline~$d$. The goal is to determine a replenishment schedule that minimizes the total cost over all periods, which includes fixed ordering costs incurred whenever an order is placed, and holding costs proportional to the inventory carried between periods. The model captures a fundamental trade-off between frequent small orders (reducing holding costs) and infrequent large orders (reducing ordering costs), and has long served as a foundational framework in inventory theory. It admits efficient algorithms for computing an optimal cost-minimizing sequence of orders, beginning with the dynamic programming algorithm of \citet{wagner1958dynamic}.

The joint replenishment problem (JRP) generalizes the economic lot size model to a multi-item setting with $N$ items, where orders can be coordinated across items to reduce costs. The ordering cost function is defined through an item-specific cost and a fixed ordering (or setup) cost. Accordingly, the cost of each order is the sum of the fixed ordering cost (regardless of the items included) and the sum of the item-specific costs of all items included in the order, a cost structure which is called \textit{additive}. The JRP captures economies of scale from joint ordering but introduces combinatorial complexity in determining which subsets of items to order in each period under shared ordering costs.

The JRP is NP-hard \citep{arkin1989computational}, and a substantial body of work has focused on developing approximation algorithms for the problem \citep{levi2004primal, levi2008constant, bienkowski2014better}. The JRP has been used in many applications including inventory management, maintenance scheduling, logistics, and transportation problems \citep{cheung2016submodular}.

The submodular joint replenishment problem (SJRP), studied by \citet{cheung2016submodular}, generalizes the JRP by allowing the ordering cost in each period to be a monotone submodular function of the set of items ordered. This assumption includes the JRP as a special case, but offers significantly greater modeling flexibility. The submodular ordering cost can represent shared resources such as trucks, machines, or labor, and more richly captures economies of scale in joint ordering.

A closely related problem is the inventory routing problem (IRP), in which a central depot serves multiple geographically distributed customers. The delivery cost for each order corresponds to the length of the shortest tour that starts and ends at the depot while visiting the subset of customers served in the respective period. The objective is to minimize the total routing and holding costs over the planning horizon.

Both the SJRP and IRP are known to be computationally challenging and little is known about their approximability \citep{nagarajan2016approximation}. Beyond the JRP, constant-factor approximation algorithms have been developed only for special cases of the SJRP, including the cases when the ordering cost function is defined based on a tree, laminar, or cardinality structure \citep{bienkowski2014better, cheung2016submodular}. In the tree case, items are leaves in a rooted tree and the ordering cost of a subset is the sum of costs along paths from the root to the selected leaves. In the laminar case, the ordering cost is defined as the minimum total cost of a collection of nested groups that cover the item set. In the cardinality case, the cost is a concave function of the number of items ordered. However, the approximation algorithms rely on the cost function exhibiting very specific structure, and whether a constant-factor approximation exists for the SJRP with more general submodular ordering cost functions or for the IRP on general metrics remains a long-standing open question. A series of works has progressively improved worst-case approximation guarantees for the general SJRP and IRP \citep{nagarajan2016approximation, bosman2020improved}, with the current best \( O(\log \log \min(N, T)) \)-approximation ($N$ being the number of items and $T$ the planning horizon length) achieved by \cite{bosman2020improved}.

\subsection{Main contribution}

This paper develops an approximation algorithm for the SJRP under a broad family of submodular ordering cost functions that are based on a decomposition of the items into categories. These functions depend on a weighted aggregate quantity within each category and allow arbitrary interactions across categories through a joint cost function. 

Specifically, the class of submodular functions, formally defined in \cref{def:form_cost}, is constructed from a continuous DR-submodular function, item-specific weights, and a partition of the items into $k$ disjoint \textit{categories}. 
% For any given set of items, the function evaluates the weighted sum of items within each part of the partition and applies the concave function to the resulting $k$-dimensional vector.
The partition and the item-specific weights are used to associate a vector with each subset of items $S$, consisting of the sum of the weights of the items included in the intersection of $S$ with each of the item categories. The cost of ordering $S$ is obtained by applying a continuous DR-submodular function to this vector.

% Intuitively, each part of the partition represents a distinct \textit{category} of items. 
This class of functions enables flexible and realistic modeling of cross-category interactions, as the continuous function is not required to be separable, allowing it to capture complex dependencies between categories. Within each category, the use of weighted sums reflects the natural assumption that the costs depend on weighted aggregate quantities---such as total weight, volume, or value---with the item-specific weights reflecting heterogeneous resource consumption across items. This formulation is particularly well-suited to supply chain and logistics applications, where cross-category cost interactions commonly arise due to shared storage, transportation, or operational constraints.

It is worth noting that the newly proposed class of submodular functions captures many well-studied special cases of submodular cost functions. For example, when the decomposition consists of a single part ($k = 1$), it recovers both the JRP (where the additive ordering cost is the sum of item-specific weights plus a fixed base cost) and cardinality-based costs (where the ordering cost depends only on the number of items in the order, regardless of their identities). %While constant-factor approximation algorithms were previously known for each of these cases separately, our framework captures both simultaneously when $k = 1$ and provides a unified approach to obtain a constant-factor guarantee. 

The main result of the paper shows that the SJRP with the class of submodular ordering cost functions described above admits an efficient approximation algorithm, with a worst-case approximation guarantee that scales with the number of item categories $k$. Specifically, we construct an algorithm 
% (Algorithm~\ref{alg:main}, described in \cref{section: algorithm}) 
that runs in polynomial time and achieves an $(8k+1)$-approximation guarantee. This yields a \textit{constant-factor approximation} when $k$ is a constant---a guarantee that was previously unknown even in the special case of $k = 1$. This result significantly expands the class of cost functions for which constant-factor guarantees are known. Even for small $k$, the cost functions considered are highly expressive, as the decomposition permits arbitrary cross-category interactions. Additionally, the structural insights obtained as part of the analysis in this paper may inform future progress for the SJRP and the IRP.

%It includes general submodular functions when each \( P_j \) is a singleton (i.e., \( k = N \)). 

%\cref{lemma: f conditions} provides conditions on the function $f$ under which the cost function $K$, written in the format from \cref{def:form_cost}, is submodular. 

% Even for a constant value of $k$, the cost functions we consider are highly expressive, as the decomposition function $f$ can encode arbitrary interactions among its arguments. To our knowledge, this is the first constant-factor approximation result that accommodates such general interactions. 

% Does not cover the following submodular functions (because the cost of an item depends on the membership of other items):
% \begin{itemize}
% \item Laminar: (don't think so)
% \item Tree: Let the subsets be all of the sets of the children of all nodes in the tree. Need $O(N)$ subsets because all of the leaves are not exchangeable. 
% \end{itemize}

\subsubsection{Algorithm Overview}

The approximate solution to the SJRP is obtained by first constructing a corresponding instance of the \emph{Disjoint Interval Covering Problem} (DICP). In the DICP, there is a collection of disjoint \textit{demand intervals} over the horizon for each item, and the objective is to place orders so that each demand interval includes at least one order with the respective item. This formulation removes the need to consider holding costs, focusing the objective solely on minimizing ordering costs. Unlike prior approaches that eliminate holding costs without additional structural constraints, the DICP enforces disjoint demand intervals for each item (that is, they do not overlap along the time horizon), simplifying the structure of the problem and facilitating algorithmic design. The approximation algorithm for the SJRP is obtained by leveraging a novel approximation algorithm for the DICP as a subroutine. It is shown that by doing so, there is only a constant factor loss in the approximation to the SJRP.

The DICP approximation algorithm is the primary technical contribution of this work. The algorithm begins by solving the linear programming relaxation of the DICP and then applies a rounding procedure. The approach exploits structural properties of the LP solution and the specific class of submodular ordering cost functions considered. The algorithm's design and analysis further rely on a notion of \textit{level} for subsets of items relative to a given category, which intuitively captures the marginal cost of adding one more item from that category to the subset. These levels induce a total order over the subsets ordered by the optimal LP solution. A water-filling procedure then partitions the LP solution into regions that group subsets with similar levels. The first region contains the lowest-level subsets, and subsequent regions are formed by progressively incorporating subsets with higher levels.

% , which is used to divide them into regions. Each region groups together subsets with similar levels and is constructed through a water-filling procedure: the first region includes the subsets with the lowest levels, and subsequent regions are formed by progressively adding higher-level subsets.

An integral schedule is obtained by solving a computationally tractable dynamic program to select one subset to order from each constructed region, minimizing the total cost. Leveraging the LP’s optimality conditions, it is shown that ordering at least one subset from each region suffices to satisfy all demands.

A key structural insight is that, for each demand point, there exists a level~$\ell$ such that every subset ordered by the LP in its demand interval has level less than or equal to~$\ell$ if and only if it contains the corresponding item. This observation, formalized in \cref{lemma: consistent levels}, is central to the correctness of the rounding scheme. These structural properties may also be useful in the design of future algorithms for the SJRP and related problems with submodular costs.

% One of the main points of constructing the regions via the water-filling procedure is that these will form a particular nested structure, well-suited for satisfying these regions via a randomized procedure. More precisely, we will satisfy regions starting from lower-level regions as follows: for any region that is not yet satisfied, we randomly select a subset from this region with probabilities given by the LP. As a note, proceeding by increasing order of the level $L_j$ ensures that larger subsets are ordered first since these also fulfill corresponding regions from higher levels.

\subsection{Literature Review}

% Most of the literature on JRP focuses on the additive ordering cost structure, where placing an order incurs a fixed joint cost plus item-specific fixed costs. This version of the problem 
The JRP was shown to be NP-hard by \cite{arkin1989computational} and APX-hard by \cite{nonner2009approximating}, which means that it does not admit a polynomial-time approximation scheme (PTAS)---that is, a polynomial-time algorithm with approximation ratio $(1 + \epsilon)$. A sequence of increasingly tighter approximation algorithms has been developed for this setting. The current best approximation ratios are 2 due to \cite{levi2004primal}, 1.80 due to \cite{levi2008constant}, and 1.791 due to \cite{bienkowski2014better}. In contrast, this paper studies the submodular JRP, which features a more general and expressive cost structure, and for which constant-factor approximations are not yet known.

A closely related problem is the IRP, introduced by \cite{bell1983improving}, which modifies the JRP by defining setup costs as the minimum routing cost to serve a set of customers in a metric space. The IRP has been extensively studied in operations research and logistics \citep{burns1985distribution, federgruen1984combined}; see \cite{coelho2014thirty} for a comprehensive review of the literature. \cite{fukunaga2014deliver} provided a constant-factor approximation under the restriction that orders are scheduled periodically, but the approximability of general ordering policies is not known. The IRP shares the same integer programming formulation as the JRP \citep{bosman2020improved}, differing only in the form of the joint setup cost.% As a result, many works develop approximation algorithms for both problems in parallel \citep{bosman2020improved, nagarajan2016approximation}.

The SJRP was studied by \cite{cheung2016submodular}. In contrast with the JRP, it remains open whether the SJRP admits a constant-factor approximation in general. \cite{cheung2016submodular} provide constant-factor approximation algorithms for special cases of submodular cost functions, including tree, laminar, and cardinality costs. For general submodular cost functions, \cite{nagarajan2016approximation} established the first sub-logarithmic approximation guarantee, assuming that all holding costs are fixed-degree polynomial functions of the holding time. Their algorithm achieves an \( O(\log T / \log \log T) \)-approximation for both the SJRP and the IRP. This was later improved exponentially by \cite{bosman2020improved}, who obtained an \( O(\log \log \min(N, T)) \)-approximation. In contrast, this paper develops an approximation algorithm for a broad family of submodular functions, with an approximation guarantee that depends only on the number of item categories.

The submodular function decomposition studied in this paper generalizes the decomposable submodular functions introduced by \citet{stobbe2010efficient}. In their formulation, the cost of a given set of items is expressed as a sum of concave functions, each applied to a weighted sum of item indicators, with one concave function and weight vector per category of items. When the weight vectors have disjoint supports—that is, each item belongs to only one category—their model becomes a special case of the decomposition considered in this paper. The class of functions considered in this paper is significantly more general, as it allows for arbitrary interactions across categories through a joint cost function, whereas their model assumes an additive structure with separable components. 

\cite{stobbe2010efficient} demonstrate that such functions arise naturally in a range of applications, including classification tasks and multiclass queuing systems. They further develop efficient algorithms that exploit the decomposable structure, significantly improving scalability compared to generic submodular minimization methods. The decomposable framework thus strikes a useful balance between modeling power and tractability.

\subsection{Paper Organization}
The remainder of the paper is organized as follows. \cref{section: model} presents a mathematical formulation for the SJRP. \cref{section: algorithm} presents the approximation algorithm for the simpler DICP, and \cref{section: correctness} proves its correctness. \cref{section: Reduction} then describes how this algorithm is used as a subroutine to approximate the original SJRP. Finally, we conclude in \cref{section: conclusion}.

\section{Model} \label{section: model}

This section formally defines the submodular joint replenishment problem (SJRP) and presents a mathematical formulation, following the frameworks of \cite{bosman2020improved,cheung2016submodular}, and \cite{nagarajan2016approximation}. %The two models share the same underlying structure and differ only in the form of the joint setup cost---in the IRP it captures the minimum routing cost to serve a set of demand points in a metric space, whereas in the SJRP it captures a submodular ordering cost.
Throughout the paper, $[k]$ denotes the set $\{1, \dots, k\}$ for any integer $k$, and $[k, k']$ for any integers $k \leq k'$ denotes $\{k, k+1, \dots, k'\}$. 
For any subset $S \subseteq [N]$, let $\mathbf{e}_S \in \mathbb{R}^N$ denote its indicator vector, where the $i$th coordinate is $1$ if $i \in S$ and $0$ otherwise.

Consider a finite planning horizon of $T$ discrete time periods, indexed by $t \in [T]$, and a set of $N$ items, indexed by $i\in [N]$. For each item $i\in[N]$ and time period $d\in[T]$, $q_{id}\geq 0$ denotes the demand quantity of item $i$ that must be ordered by the deadline $d$. Let $\mathcal D$ denote the set of demand points $(i,d)$ with $q_{id} >0$. Each such demand point $(i,d)\in\mathcal D$ must be fulfilled by an order including item $i$ placed no later than its deadline $d$.

Let $h^i_{sd} \geq 0$ denote the cost of holding one unit of item $i$ from period $s$ to $d$. Serving demand point $(i,d)$ using an order placed at time period $s$ incurs a holding cost of $q_{id}h^i_{sd}$. Note that per unit production costs for ordering a unit of an item $i$ in a period $s$ can be captured by this framework through the holding cost $h^i_{sd}$. The assumption is that $h^i_{sd}$ is nonnegative and non-increasing in $s$ for fixed $i$ and $d$, following the typical setting where holding items in inventory for longer incurs higher costs. This implies that in an optimal policy, each demand point $(i,d)$ is served by the latest order including item $i$ placed at or before the deadline period $d$.

The decision maker has to determine which items to order in each time period, with the goal of minimizing the total cost, which consists of the joint ordering and the holding costs for carrying inventory until the demand deadline. Notably, the demand magnitude influences only the holding cost, as the ordering cost depends solely on the set of distinct items ordered, not on the quantity ordered.

Let the function $K(S)$ denote the cost of ordering a subset of items $S\subseteq[N]$, and assume that $K: 2^{[N]} \rightarrow \mathbb R_+$ is nonnegative and monotone increasing. 
% In the IRP, demand points are embedded in a metric space along with a designated depot, and the joint setup cost $K(S)$ is defined as the length of the shortest tour that visits the depot and each item in $S$. In contrast, under the SJRP, 
$K(S)$ is further assumed to be a submodular set function. That is, for any item $i\in [N]$, the marginal cost of adding item $i$ to a smaller subset is at least as large as the cost of adding it to a larger one, i.e., for all $S_1\subseteq S_2 \subseteq [N]$ and $i\notin S_2$, we have $K(S_1 \cup \set{i}) - K(S_1)\geq K(S_2 \cup \set{i}) - K(S_2)$. This paper focuses on cost functions $K$ that admit a structured decomposition, formally defined as follows:

\begin{definition}[Submodular function decomposition]
\label{def:form_cost}
Let $P_1, P_2,\dots, P_k$ be a partition of the items $[N]$ into $k$ disjoint subsets. 
Let $\mathbf{w} \in \mathbb{R}_{>0}^N$ denote a vector of item-specific weights, and define $\mathbf{w}_j := \mathbf{w}\circ \mathbf e_{P_j}$ for $j\in[k]$, where $\circ$ denotes the Hadamard (elementwise) product. We consider ordering cost functions $K: 2^{[N]} \rightarrow \mathbb{R}_+$ of the form:
\begin{align} \label{eq:form_cost}
    K(S) = f\paren{\mathbf w_1^\top \mathbf e_{S},\mathbf w_2^\top \mathbf e_{S},\dots, \mathbf w_k^\top \mathbf e_{S}},
\end{align}
where $f:\mathbb{R}_+^k \to \mathbb{R}_+$ is monotone nondecreasing, twice continuously differentiable, and DR-submodular, i.e.,
\[
\frac{\partial^2 f}{\partial x_i \partial x_j}(\mathbf{x}) \le 0 
\quad \text{for all } i,j\in[k] \text{ and } \mathbf{x} \in \mathbb{R}_+^k.
\]
\end{definition}

The assumption on $f$ in \cref{def:form_cost} ensures that the resulting ordering cost function $K$ is submodular (see \cref{lemma: f conditions}). For notational convenience, given a submodular cost function $K$ of the form defined in \eqref{eq:form_cost}, define the vector $\mathbf u(S) := \paren{\mathbf w_j^\top \mathbf e_{S}}_{j\in[k]} \in \mathbb R^k$ for any subset of items $S\subseteq [N]$. That is, $\mathbf u(S)$ is the vector of weighted sums over each category, and the cost function satisfies $K(S) = f\paren{\mathbf u(S)}$.

The submodular function decomposition introduced in \cref{def:form_cost} captures a broad class of well-studied submodular cost functions. To illustrate, consider an additive cost structure in which placing an order incurs a fixed cost $K_0$, and ordering item $i \in [N]$ incurs an additional additive cost $K_i$. Let $K_{\min} := \min_{i \in [N]} K_i$. This cost can be represented within our framework by setting $k = 1$, letting $\mathbf{w}_1 = (K_i)_{i \in [N]}$, and choosing a concave function $f : \mathbb{R} \to \mathbb{R}$ that passes through the points $(0,0)$ and $(K_{\min}, K_0 + K_{\min})$, and is linear with slope 1 beyond $K_{\min}$. As another example, cardinality-based cost functions are recovered by setting all weights equal to one and choosing $f$ to be any concave function.

% \begin{itemize}
%     % \item $K(S) > 0$ for all non-empty subsets $S\subseteq[N]$ and $K(\emptyset) = 0$,
%     % \item for every $S_1\subseteq S_2 \subseteq [N]$, we have $K(S_1)\leq K(S_2)$, and
%     \item for every $S_1\subseteq S_2 \subseteq [N]$ and any item $i\in [N]$, we have $K(S_1 \cup \set{i}) - K(S_1)\geq K(S_2 \cup \set{i}) - K(S_2)$. 
% \end{itemize}

A natural integer programming formulation for the SJRP is described in \cref{eq: IP} below. The binary variable $y_s^S$ is equal to 1 if the subset $S\subseteq [N]$ is ordered in period $s$, and 0 otherwise. The binary variable $x_{sd}^i$ is equal to 1 if demand $(i,d)$ is fulfilled using an order placed in period $s$, and 0 otherwise.

\begin{align}\label{eq: IP}
  \text{minimize}\quad  &\sum_{S\subseteq [N] }\sum_{s=1}^T K(S) y^S_s + \sum_{(i,d)\in \mathcal D} \sum_{s=1}^d q_{id}h^i_{sd} x^i_{sd} \tag{IP}&&\\
    \text{subject to}\quad&
     \sum_{s=1}^d x^i_{sd} = 1, &&\negspace
     (i,d)\in \mathcal D \tag{Constraint 1}\\
    & x^i_{sd} \leq \sum_{S: i\in S \subseteq [N]}y_s^S, &&\negspace(i,d)\in \mathcal D, s\in [d]  \tag{Constraint 2}\\
    &x^i_{sd}, y_s^S \in\set{0,1}, &&\negspace(i,d)\in \mathcal D, s\in [d], S\subseteq [N]\nonumber
\end{align}

In the formulation described in \eqref{eq: IP}, Constraint 1 ensures that each demand $(i,d)\in \Dcal$ is served by an order placed at some period $s\leq d$. Constraint 2 ensures that if a demand $(i,d)$ is served by an order placed at some period $s$, then there must be a subset ordered at period $s$ that includes item $i$.

Let (LP) denote the linear programming relaxation of \cref{eq: IP}, obtained by replacing the integrality constraints with nonnegativity constraints. Although (LP) contains an exponential number of variables, its dual can be solved in polynomial time using the ellipsoid method, since a separation oracle for identifying violated dual constraints can be implemented using submodular function minimization \citep{cheung2016submodular}. A corresponding primal solution can then be recovered from the dual solution in polynomial time \citep{schrijver1998theory}. %We obtain an approximate solution to \eqref{eq: IP} by rounding the optimal solution to (LP).

\subsection{Special Cases of the SJRP}

This section discusses several special cases of the SJRP. Some of these special cases have been studied in prior literature that is focused on models with simpler cost functions. These variants are described below, and it is subsequently explained in \cref{section: Reduction} how approximation algorithms developed for the simpler models can be leveraged to approximate solutions for the general SJRP.

\paragraph{SJRP with Deadlines.}
The \textit{SJRP with deadlines} is a special case of the SJRP in which holding costs are eliminated, and each demand must be fulfilled within a specified time interval, referred to as the demand interval. Following \citet{bienkowski2015approximation} and \citet{nonner2009approximating}, each demand for an item \( i \in [N] \) is represented by a triple \( (i, c, d) \) with \( c \leq d \in [T] \), where $c$ is when the demand interval commences and $d$ is its deadline. That is, item \( i \) must be ordered during the interval \( [c, d] \). The input consists of a set \( \mathcal{D} \) of such demands, and the goal is to place orders that satisfy each within its designated interval, while minimizing ordering costs. This formulation can be viewed as a special case of the SJRP in which the holding cost \( h^i_{sd} \) is zero when $s$ is within the associated demand interval and infinite otherwise. Prior work has shown that the SJRP can be reduced to the deadline variant with only a constant-factor loss in the approximation guarantee \citep{cheung2016submodular, nagarajan2016approximation}.

\paragraph{Disjoint Interval Covering Problem (DICP).}
The disjoint interval covering problem (DICP) is a variant of the SJRP with deadlines in which the demand intervals for each item are mutually \emph{disjoint}—that is, they do not overlap along the time horizon. Formally, each item $i\in[N]$ is associated with a collection of pairwise disjoint time intervals $\set{[c,d]: (i,c,d)\in \mathcal D}$. The objective remains the same as in the SJRP with deadlines: to minimize the total joint ordering cost while ensuring that each item is ordered at least once within each of its associated demand intervals.

The DICP can be formulated as an integer program, where binary decision variables $y_s^S$ indicate whether a subset of items $S$ is ordered at time $s$. Below is a linear programming relaxation of the DICP, in which the binary constraints are replaced with nonnegativity constraints. 
\begin{align}\label{eq: primal}
\text{minimize}\quad&\sum_{s=1}^T\sum_{S\subseteq [N] } K(S) y^S_s \tag{DICP}\\
    \text{subject to}\quad
    & \sum_{s\in [c,d]} \sum_{S: i\in S }y_s^S \geq 1, && (i,c,d)\in \mathcal D \nonumber\\
    &y_s^S \geq 0 &&s\in [T], S\subseteq [N].\nonumber
\end{align}

\section{DICP Approximation algorithm}
 \label{section: algorithm}

This section presents an approximation algorithm for the simpler DICP. \cref{section: Reduction} then describes how this algorithm is used as a subroutine to approximate the original SJRP.

The algorithm takes as input the LP solution $(y_t^S)_{t\in[T],S\subseteq [N]}$ to the DICP relaxation \eqref{eq: primal} and uses it as the basis for a rounding algorithm that yields a $4k$-approximation to the optimal integral DICP solution, where $k$ is the number of item categories or parts in the partition as in \cref{def:form_cost}. The output of the rounding algorithm is a feasible integral solution---specifically, a sequence of orders $O_t\subseteq [N]$ placed at each time period $t\in[T]$ such that each item $i\in[N]$ is ordered in each of its associated demand intervals. At a high level, the algorithm uses a water-filling partitioning procedure to group fractionally ordered sets in the LP solution into regions based on  appropriately defined marginal costs, and then applies a dynamic program to select orders within these regions to satisfy all demand while minimizing cost. The algorithm is shown to be feasible (\cref{subsection: feasibility of alg for decomposition}) and achieves a total ordering cost at most $4k$ times the LP optimum (\cref{subsection: bounding cost}).

% \subsection{Preliminaries}
% \label{subsection: prelims}

% This section establishes two key structural properties of the optimal solution \( (y_t^S)_{t \in [T], S \subseteq [N]} \) to the LP relaxation of \eqref{eq: primal}, which are essential for the design and analysis of the approximation algorithm. The fractional LP solution is interpreted as a probabilistic schedule of orders: at each time $t$, a subset $S$ is ordered with probability $y_t^S$. 

The approximation algorithm will leverage two structural properties of the optimal solution of \eqref{eq: primal}. Property 1 below follows from \citet{cheung2016submodular} and shows that the fractional orders in each period can be assumed, without loss of generality, to form a collection of nested sets. %This structural property is formalized below.

\begin{property}[Nested structure property]
\label{property: nested}
    For any time period $t\in[T]$ and subsets $R, S\subseteq [N]$, if $y^R_t>0$ and $y^S_t>0$, then either $R\subseteq S$ or $S\subseteq R$. 
\end{property}

Accordingly, for a given time period $t\in[T]$, the support of the fractional orders placed by the optimal solution of \eqref{eq: primal}, $\set{S\subseteq [N]: y_t^S>0}$, can be indexed as 
\begin{align}\label{eq: rewrite_LP_solution}
    S_1(t)\supsetneq S_2(t) \supsetneq \ldots \supsetneq S_{n(t)}(t) \supsetneq S_{n(t)+1}(t):=\emptyset,
\end{align}
where $n(t)$ is the number of different sets with positive fractional order quantity at time $t$. To simplify notation, for each $r\in [n(t)]$, let $y_r(t):=y_t^{S_r(t)}$ be the fractional order quantity of the subset $S_r(t)$ at time $t$.

% To obtain theoretical guarantees for the DICP approximation algorithm, we further require a second structural property of the optimal LP solution, stated in \cref{property:LP_solution_reduced}. 
% This property can be achieved through a pre-processing step that modifies the LP solution \( (y_t^S)_{t \in [T], S \subseteq [N]} \) without affecting its feasibility or optimality for \eqref{eq: primal}. %The transformed solution satisfies both \cref{property: nested} and \cref{property:LP_solution_reduced}. %Since \cref{property:LP_solution_reduced} is not needed for understanding the main algorithm, we defer its introduction to \cref{subsection: preprocessing} for readability.

% \subsection{Preprocessing} \label{subsection: preprocessing}

% As an initial pre-processing step, we transform the LP solution \( (y_t^S)_{t \in [T], S \subseteq [N]} \) so that, in addition to being optimal for \eqref{eq: primal} and satisfying \cref{property: nested}, it also satisfies an additional structural property which allows for consistent tie-breaking across time steps. We first introduce some notation. 

% Now, for each item $i \in S_1(t)$, let $r(i) \in [n(t)]$ denote the largest index such that $S_{r(i)}(t)$ contains item $i$. Formally,
% \[
% r(i) := \max \left\{ r\in [n(t)] : i \in S_r(t) \right\}.
% \]
% For convenience, define $B_t^i := S_{r(i)}(t)$, and define $A_t^i := S_{r(i)+1}(t)$. 

For any period $t\in[T]$ and any item $i\in[N]$, let $A^i_t$ denote the largest subset in $\set{S_1(t), \dots, S_{n(t)+1}(t)}$ that does not contain $i$. If $i\in S_1(t)$, let $B^i_t$ denote the smallest subset in this collection that contains $i$. The sets $A^i_t$ and $B^i_t$ are adjacent in the ordering. That is, if $B^i_t=S_r(t)$, then $A^i_t=S_{r+1}(t)$. 
% , or the full item set $[N]$ if no such subset exists. 
% In other words, $B_t^i$ is the smallest subset in the sequence $\{S_r(t)\}_{r \in [n(t)]}$ that contains $i$, and $A_t^i$ is the largest subset in $\{S_r(t)\}_{r \in [n(t)+1]}$ that does not contain $i$. 

% Notably, the LP optimality conditions imply the following constraint on the marginal costs of adding items to the subsets ordered in the solution.
Lemma 1 below states a property of the optimal fractional solution of the LP:
\begin{lemma}\label{claim:optimality_condition_dual_variable}
The optimal solution satisfies the following property. For any demand point $(i,c,d)\in\mathcal D$, the following holds:
    \begin{align*}
\max_{t\in [c,d]: i\in S_1(t)}& \sqb{ K\paren{B^i_t} - K\paren{B^i_t\setminus \set{i}}}\\
\leq 
&\min_{t\in [c,d]} \sqb{ K\paren{A^i_t \cup \set{i}} - K\paren{A^i_t}}.
    \end{align*}
\end{lemma}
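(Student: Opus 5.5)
The plan is an exchange (local improvement) argument against the optimality of the LP solution. Suppose, for contradiction, that the inequality fails for some $(i,c,d)\in\mathcal D$. Then there exist periods $t,t'\in[c,d]$ with $i\in S_1(t)$ and
\[
K\paren{B^i_t}-K\paren{B^i_t\setminus\set{i}} > K\paren{A^i_{t'}\cup\set{i}}-K\paren{A^i_{t'}}.
\]
We then build, for small $\varepsilon>0$, a perturbed solution that is feasible and strictly cheaper, which contradicts optimality. (If $t=t'$ the argument is the same: $A^i_t\subsetneq B^i_t\setminus\set{i}$ would give the reverse inequality by submodularity. So the interesting case is $t\neq t'$, but the construction below does not need to distinguish the two.)

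The perturbation has two parts. At period $t$, move $\varepsilon\le y^{B^i_t}_t$ mass from $B^i_t$ to $B^i_t\setminus\set{i}$. At period $t'$, move $\varepsilon\le y^{A^i_{t'}}_{t'}$ mass from $A^i_{t'}$ to $A^i_{t'}\cup\set{i}$. If $A^i_{t'}=\emptyset$ and it carries no mass, we instead simply add $\varepsilon$ to $y^{\set{i}}_{t'}$; this is still valid, since it increases cost by $K(\set{i})-K(\emptyset)$ once we take $K(\emptyset)$ to be the empty order's cost. The empty-set case is the main obstacle: I would handle it by allowing the empty set as a zero-cost placeholder, so that adding $\varepsilon$ to $\set{i}$ costs exactly $K(\set i)-K(\emptyset)$ per unit with $K(\emptyset)=f(0)$. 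If $f(0)>0$, note that the LP never benefits from ordering $\emptyset$, so one just adds $\set i$ outright, at cost $K(\set i)\ge K(\set i)-K(\emptyset)$. I would check this carefully, and if necessary normalize $f(0)=0$.

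Next we verify feasibility. Only coverage of item $i$ changes. At $t$, the coverage of $i$ drops by $\varepsilon$; at $t'$, it rises by $\varepsilon$. Both periods lie in $[c,d]$, so the constraint for $(i,c,d)$ is unchanged. The key point where disjointness of the DICP intervals is used is this: every other demand interval of item $i$ is disjoint from $[c,d]$, so it contains neither $t$ nor $t'$ and is unaffected. Coverage of every other item $j$ is also unchanged, because the sets are modified only by adding or removing $i$. Nonnegativity holds for small $\varepsilon$.

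Finally, the cost changes by $\varepsilon$ times
\[
\sqb{K\paren{A^i_{t'}\cup\set i}-K\paren{A^i_{t'}}}-\sqb{K\paren{B^i_t}-K\paren{B^i_t\setminus\set i}},
\]
which is strictly negative by the assumed violation, contradicting LP optimality. The perturbed solution may break Property~\ref{property: nested}, but that is irrelevant: we only need it to be a feasible LP solution of lower cost. Hence the inequality holds for every demand point.
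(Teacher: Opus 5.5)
Your proof is essentially the paper's: the same $\varepsilon$-exchange (move mass from $B^i_t$ to $B^i_t\setminus\{i\}$ and from $A^i_{t'}$ to $A^i_{t'}\cup\{i\}$) with the same bracketed cost change. You are also more explicit than the paper about why feasibility is preserved, namely the disjointness of item $i$'s other demand intervals.

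One sub-step is wrong as written: your fallback for $A^i_{t'}=\emptyset$ when $f(0)>0$.
\begin{itemize}
\item Paying $K(\{i\})\ge K(\{i\})-K(\emptyset)$ at $t'$ makes the increase \emph{larger} than the marginal in the bracket. The net change $\varepsilon\bigl[K(\{i\})-K(B^i_t)+K(B^i_t\setminus\{i\})\bigr]$ is then no longer guaranteed to be negative.
\item Normalizing $f(0)=0$ is not without loss of generality. Subtracting $f(0)$ from every nonempty order shifts the LP objective by $-f(0)\sum_t\sum_{S\neq\emptyset}y^S_t$, which is not constant, so the optimal solutions can change.
\end{itemize}

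The fix is that this case needs no exchange at all. If $A^i_{t'}=\emptyset$, then $\emptyset\subseteq B^i_t\setminus\{i\}$, and submodularity gives $K(B^i_t)-K(B^i_t\setminus\{i\})\le K(\{i\})-K(\emptyset)$ directly. That is exactly the required inequality.

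If $A^i_{t'}\neq\emptyset$, then $A^i_{t'}=S_r(t')$ for some $r\le n(t')$, so it has positive mass and your exchange applies verbatim. The step at $t$ is harmless even when $B^i_t=\{i\}$: removing that mass saves at least $K(\{i\})-K(\emptyset)$.

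The same submodularity remark also settles $t=t'$. There $A^i_t\subseteq B^i_t\setminus\{i\}$, though not necessarily strictly as you wrote.
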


\cref{claim:optimality_condition_dual_variable} captures the intuition that the optimality of the fractional solution implies that the marginal cost of increasing the amount of item $i$ ordered at any time within its demand interval must be no less than the marginal cost of reducing the amount ordered at any other time in that interval with positive fractional order of $i$. Otherwise, one could shift mass across time periods to reduce cost, contradicting the optimality of the LP solution. It turns out that the inequalities in \cref{claim:optimality_condition_dual_variable} can be made strict by adjusting the fractional order quantities while preserving optimality, which will be important for establishing the theoretical guarantees. This is formalized in the following property: 
% The desired property is stated formally below: 

\begin{property}[Temporal tie-breaking property]
\label{property:LP_solution_reduced}
    There exists an LP optimal solution such that for any demand point $(i,c,d)\in\Dcal$ and any two distinct times $t_1< t_2$ within its demand interval, either item $i$ is not ordered by the LP solution in period $t_2$ (i.e., $\sum_{S:i\in S} y_{t_2}^S=0$), or $$  K(B_{t_2}^i) - K(B_{t_2}^i\setminus\{i\})< K(A_{t_1}^i \cup\{i\}) - K(A_{t_1}^i).$$
\end{property}

Intuitively, Property~\ref{property:LP_solution_reduced} prioritizes ordering subsets in earlier times $t_1<t_2$. Suppose instead that $ K(B_{t_2}^i) - K(B_{t_2}^i\setminus\{i\})\geq K(A_{t_1}^i \cup\{i\}) - K(A_{t_1}^i) $. Then by \cref{claim:optimality_condition_dual_variable}, the equality $K(A_{t_1}^i \cup\{i\}) - K(A_{t_1}^i) = K(B_{t_2}^i) - K(B_{t_2}^i\setminus\{i\})$ must hold. 
In this case, it is possible to increase the amount of item \( i \) ordered at time \( t_1 \) and decrease the amount ordered at \( t_2 \) without increasing the LP cost. The role of Property~\ref{property:LP_solution_reduced} is to ensure that, in the presence of such ties, item \( i \) is ordered at the earlier time \( t_1 \). This provides a consistent tie-breaking convention across time steps, which is an important invariant for the algorithm, as it relies on a unified ordering of subsets ordered in the LP solution. 
It is shown in \cref{lemma: exists_satisfying_properties} of \cref{appendix:preprocessing} that there exists an optimal solution to \eqref{eq: primal} satisfying both \cref{property: nested} and \cref{property:LP_solution_reduced}.

% Due to the submodular cost structure, the fractional subsets ordered at each time period can be adjusted (without increasing ordering cost) to have a nested structure: for any time period $s$ and subsets $R, S\subseteq [N]$, if $y^R_s>0$ and $y^S_s>0$, then either $R\subseteq S$ or $S\subseteq R$. The proof of this statement and a method for obtaining such a solution by adjusting an optimal LP solution are established in Lemma 1 of \cite{cheung2016submodular}.

% \subsection{Algorithm}\label{subsection: DICP algorithm}
\paragraph{Algorithm overview. } The algorithm operates separately on each item category $P_j$ for $j\in[k]$ and consists of two main steps. The first step applies a water-filling procedure to all subsets ordered across all time periods by the optimal LP solution to \eqref{eq: primal}, $\set{S_r(t),t\in[T],r\in[n(t)]}$. Subsets are processed in order of their \textit{levels} with respect to a given category $j$, introduced in \cref{subsection: level}. The water-filling procedure, described in \cref{subsection: water filling}, constructs a collection of regions from the solution to \cref{eq: primal}. The second step, detailed in \cref{subsection: placing orders}, places orders within each constructed region using a dynamic program. The procedures described in \cref{subsection: water filling} and \cref{subsection: placing orders} are performed separately for each category $j\in[k]$. The final order schedule is then constructed by consolidating the resulting orders across all categories, as detailed in \cref{subsection: final schedule}.

\subsection{Subset levels.}\label{subsection: level}
The water-filling algorithm described subsequently, relies on the notion of subset \emph{levels} that is defined for each subset $S$ of items and item category $j\in[k]$, which determine the water level in the water-filling procedure. Intuitively, the level represents the marginal cost of adding an additional item from a given category to the subset. %These levels induce a total order over the subsets ordered by the optimal LP solution, which is used to divide them into regions.
Let $\frac{\partial f}{\partial u_j}$ denote the partial derivative of $f$ with respect to its $j$-th argument.

\begin{definition}[The $j$-level of subsets]
\label{def:level}
    For any category $j\in[k]$ and subset of items $S\subseteq [N]$, let $i^\star \in \argmin\set{w_i, i\in P_j\setminus S}$ and let \( \mathbf{e}_j \in \mathbb{R}^k \) denote the unit vector in direction \( j \). The \textit{$j$-level} of $S$ is defined as 
    \begin{equation*}
        L_j(S) := \begin{cases}
            \frac{K(S \cup \{i^\star\})-K(S)}{w_{i^\star}} & \text{if } P_j\not\subseteq S,\\
            0
             & \text{otherwise.}
        \end{cases}
    \end{equation*}
\end{definition}

To simplify the exposition, when the category $j$ is clear from context, the $j$-level of $S$ will be simply called the \emph{level} of $S$. Note that for any $S\subseteq [N]$ with $P_j \not\subseteq S$:
\begin{align*}
 L_j(S) &= \frac{K(S \cup \{i^\star\})-K(S)}{w_{i^\star}} \\
 &= \frac{f(\mathbf u(S) + w_{i^\star} \cdot \mathbf e_j) - f(\mathbf u(S))}{w_{i^\star}} \\
 &\approx  \left. \frac{\partial f}{\partial u_j} \right|_{u = \mathbf u(S)}.
\end{align*}
Therefore, $L_j(S)$ approximately captures the marginal cost of ordering an additional item with weight 1 from category $P_j$. That is, ordering an item from $P_j$ with weight $w$ would incur marginal cost approximately $w \cdot L_j(S)$. While the precise definition of the $j$-level is somewhat involved, its precise form is not critical for describing the algorithm. The key idea is that these values serve as \emph{levels} for comparing different subsets ordered by the LP solution across time periods.

Note that, by the concavity of $f$ in its $j$-th coordinate, the quantity $$\frac{f(\mathbf u(S) + w \cdot \mathbf e_j) - f(\mathbf u(S))}{w}$$ is non-increasing in $w$. Hence, similarly to marginal costs, the $j$-levels satisfy a natural monotonicity property:

\begin{lemma}\label{lemma:monotonicity_levels}
    For any subset of items $S_1\supseteq S_2$, the $j$-level satisfies $L_j(S_1) \leq L_j(S_2)$ for any $j\in [k]$.
\end{lemma}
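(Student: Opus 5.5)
We prove that $S_1\supseteq S_2$ implies $L_j(S_1)\le L_j(S_2)$ by cases on whether $P_j$ is contained in each set. Write $\mathbf u_1:=\mathbf u(S_1)$ and $\mathbf u_2:=\mathbf u(S_2)$. Since $S_1\supseteq S_2$ and the weights are positive, $\mathbf u_1\ge \mathbf u_2$ coordinatewise.

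\emph{Case $P_j\subseteq S_1$.} Then $L_j(S_1)=0$. Monotonicity of $K$ (equivalently, of $f$) gives $L_j(S_2)\ge 0$ whether or not $P_j\subseteq S_2$, so the inequality holds.

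\emph{Case $P_j\not\subseteq S_1$.} Then $P_j\not\subseteq S_2$ as well. Let $i_1^\star$ and $i_2^\star$ be the minimum-weight items of $P_j\setminus S_1$ and $P_j\setminus S_2$, with weights $a:=w_{i_1^\star}$ and $b:=w_{i_2^\star}$. Since $P_j\setminus S_1\subseteq P_j\setminus S_2$, we have $b\le a$. For $\mathbf x\in\mathbb R_+^k$ and $w>0$ define
\[
g(\mathbf x,w):=\frac{f(\mathbf x+w\mathbf e_j)-f(\mathbf x)}{w}=\int_0^1 \frac{\partial f}{\partial u_j}(\mathbf x+\theta w\mathbf e_j)\,d\theta .
\]
Then $L_j(S_1)=g(\mathbf u_1,a)$ and $L_j(S_2)=g(\mathbf u_2,b)$. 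We use two monotonicity facts.

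First, $g$ is non-increasing in $w$. Because $\partial^2 f/\partial u_j^2\le 0$, the partial derivative $\partial f/\partial u_j$ is non-increasing along $\mathbf e_j$, so the integral representation shows $g$ cannot increase with $w$. (This is the concavity remark made just before the lemma.)

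Second, $g$ is non-increasing in $\mathbf x$ coordinatewise. All mixed second partials of $f$ are nonpositive, so $\partial f/\partial u_j$ is non-increasing in every coordinate. Integrating along the segment from $\mathbf u_2$ to $\mathbf u_1$ gives $\frac{\partial f}{\partial u_j}(\mathbf x')\le \frac{\partial f}{\partial u_j}(\mathbf x)$ whenever $\mathbf x'\ge \mathbf x$, and the integral representation transfers this to $g$.

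Combining the two facts,
\[
L_j(S_1)=g(\mathbf u_1,a)\le g(\mathbf u_1,b)\le g(\mathbf u_2,b)=L_j(S_2).
\]

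The step needing the most care is the second fact. The set-level statement $K(S_1\cup\{i\})-K(S_1)\le K(S_2\cup\{i\})-K(S_2)$ alone does not suffice, because the two levels add different items $i_1^\star\neq i_2^\star$. We therefore work directly with $f$'s DR-submodularity, via the integral representation, rather than through set submodularity of $K$.
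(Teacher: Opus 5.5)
Your proof is correct, and it is organized differently from the paper's. The paper first reduces to adding a single item: it proves $L_j(S\cup\{i\})\le L_j(S)$ and chains these steps. It then splits on whether $i$ is a minimum-weight item of $P_j\setminus S$. If it is not, the same item $i_0$ realizes both levels, and set submodularity of $K$ gives the inequality. If it is, the paper sandwiches both levels around $\frac{\partial f}{\partial u_j}(\mathbf u(S\cup\{i\}))$ using concavity of $f$ along $\mathbf e_j$.

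You instead handle arbitrary $S_1\supseteq S_2$ in one step. You write both levels as values of the difference quotient $g(\mathbf x,w)$ and show that $g$ is antitone in each argument, so the only combinatorial facts you need are $b\le a$ and $\mathbf u(S_1)\ge\mathbf u(S_2)$. Your two inequalities play the roles of the paper's two cases: the cheapest missing weight grows, and the aggregate vector grows. This avoids the telescoping and the bookkeeping of how the minimizing item changes along a chain. It also makes explicit what is actually used: $\partial^2 f/\partial u_l\partial u_j\le 0$ for every $l$, plus monotonicity of $f$ for the zero case.

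The paper's version stays closer to set-function language and invokes calculus only when the added item is itself the cheapest missing one. However, it routes the cross-category effect through submodularity of $K$, which the paper derives from the same cross-partial condition. Your closing remark is accurate: the paper gets around the ``different items'' issue precisely by its case split, applying set submodularity only when the minimizing item is unchanged.
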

% This monotonicity property reflects the intuition that marginal costs decrease as more items are added to a set. 
In particular, combined with Property~\ref{property: nested}  of the optimal LP solution, it follows that for each period $t\in[T]$:
\begin{equation*}
    L_j(S_1(t)) \leq L_j(S_2(t)) \leq\ldots \leq L_j(S_{n(t)}(t)). %,\quad t\in[T].
\end{equation*}

The algorithm depends only on the relative ordering of the levels, not on their numerical values. To break potential ties between subsets ordered by the LP with the same level, the time (subsets ordered earlier are treated as having lower levels) and then the size (larger subsets are treated as having lower levels) are used. This results in a complete order on the subsets ordered by the LP solution, captured by the \textit{index} of the subsets, defined below.

% \begin{definition}[Index of subsets]
% \label{def:index}
%    For any subset $S_{r}(t)$, where $t\in[T]$ and $r\in[n(t)]$, we define its \textit{$j$-index} $\alpha_j(S_r(t))\in\Nbb$ as the position of $S_r(t)$ among all subsets selected by the LP according to the lexicographic order on $(L_j(S_r(t)),t,r)$. That is, the order is determined first by increasing $j$-level, then (to break ties) by increasing time, and finally (for any remaining ties) by decreasing subset size.
% \end{definition}

% \begin{definition}[Index of subsets]
% \label{def:index}
%    For any subset $S=S_{r}(t)$, where $t\in[T]$ and $r\in[n(t)]$, we define its \textit{$j$-index} $\alpha_j(S,t,r)\in\Nbb$ as the position of $(S,t,r)$ among all subsets selected by the LP according to the lexicographic order on $(L_j(S),t,r)$. That is, the order is determined first by increasing $j$-level, then (to break ties) by increasing time, and finally (for any remaining ties) by decreasing subset size.
% \end{definition}

\begin{definition}[The $j$-index of subsets]
\label{def:index}
   For a subset $S$ ordered by the LP optimal solution at period $t$, the \textit{$j$-index} $\alpha_j(S,t)\in\Nbb$ is the position of $(S,t)$ among all subsets ordered by the LP solution according to the lexicographic order on $(L_j(S),t,-|S|)$. That is, the order is determined first by increasing $j$-level, then (to break ties) by increasing time, and finally (for any remaining ties) by decreasing subset size.
\end{definition}

\cref{lemma:monotonicity_levels} and the definition of the index in \cref{def:index} readily imply the following monotonicity property for the index.

\begin{corollary}\label{lemma:monotonicity_index}
    Let $j\in[k]$, $t\in[T]$, and $r_1<r_2\in[n(t)]$. Then, $\alpha_j(S_{r_1}(t),t) < \alpha_j(S_{r_2}(t),t)$.
\end{corollary}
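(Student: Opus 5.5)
The plan is to compare the two lexicographic keys $(L_j(S_{r_1}(t)),t,-|S_{r_1}(t)|)$ and $(L_j(S_{r_2}(t)),t,-|S_{r_2}(t)|)$ that determine the $j$-indices in \cref{def:index}. Since the $j$-index is the position of $(S,t)$ in a total order induced by these keys, it suffices to show that the key of $(S_{r_1}(t),t)$ is strictly lexicographically smaller than the key of $(S_{r_2}(t),t)$.

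First, by the indexing in \eqref{eq: rewrite_LP_solution}, which relies on \cref{property: nested}, $r_1<r_2$ implies $S_{r_1}(t)\supsetneq S_{r_2}(t)$. Applying \cref{lemma:monotonicity_levels} with $S_1=S_{r_1}(t)$ and $S_2=S_{r_2}(t)$ gives $L_j(S_{r_1}(t))\le L_j(S_{r_2}(t))$. If this inequality is strict, the first coordinate already decides the comparison and we are done.

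If the levels are equal, the second coordinates coincide because both subsets are ordered at the same period $t$. The comparison then falls to the third coordinate. Strict containment gives $|S_{r_1}(t)|>|S_{r_2}(t)|$, so $-|S_{r_1}(t)|<-|S_{r_2}(t)|$. Hence $(S_{r_1}(t),t)$ precedes $(S_{r_2}(t),t)$, which yields $\alpha_j(S_{r_1}(t),t)<\alpha_j(S_{r_2}(t),t)$.

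The only point that needs care is that the order in \cref{def:index} is genuinely a strict total order on the pairs $(S,t)$. Two distinct pairs with the same $t$ are distinct sets in the chain \eqref{eq: rewrite_LP_solution}, so they have different cardinalities and the key never ties. In particular, $S_{r_1}(t)$ and $S_{r_2}(t)$ cannot receive the same position, so the inequality above is strict. Beyond this observation the argument is routine.
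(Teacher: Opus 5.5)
Your proof is correct and takes the route the paper indicates when it says \cref{lemma:monotonicity_levels} and \cref{def:index} ``readily imply'' the result: the level monotonicity for $S_{r_1}(t)\supsetneq S_{r_2}(t)$ settles the first key coordinate, and when the levels tie, the shared period $t$ together with the decreasing-size tie-breaker gives strict precedence. Your observation that distinct pairs never share a key, so the positions are well defined, is a detail the paper leaves implicit.
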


One key insight regarding the indices is that any item $i\in P_j$ with demand interval $I$ can be associated with an index $\alpha$ such that the subsets ordered by the LP during $I$ that contain item $i$ are precisely those with $j$-index at most $\alpha$. The formal statement and proof of this property is deferred to \cref{lemma: consistent levels}.

% Another key property is that any item $i\in P_j$ with demand interval $I$ can be associated with a level $\ell$ such that the subsets ordered by the LP during $I$ that contain item $i$ are precisely those with $j$-level at most $\ell$. The formal statement requires further definitions and is deferred to \cref{lemma: consistent levels}.

Building on the notion of $j$-index, the algorithm can now be described.
The algorithm takes as input an optimal LP solution $(y_t^S)_{t\in[T],S\subseteq [N]}$ to \cref{eq: primal}, representing a schedule of fractional subset orders, and consists of two main steps. First, it constructs regions of the LP solution via a water-filling procedure and second, it places orders within each region. These two steps are carried out independently for each item category, after which the resulting orders across all categories are combined.

\subsection{Step 1: Construct regions using water-filling procedure. } \label{subsection: water filling} 
To conveniently describe the water-filling procedure, a continuous relaxation of the time periods $[T]$ is used. From now on, instead of viewing time discretely, each original time period $t$ is considered as a continuous interval $(t-1,t]$. In particular, the problem now spans the continuous time interval $(0,T]$. A subset $S$ ordered by the LP at time $t$ then corresponds to a subset ordered uniformly during the interval $(t-1,t]$. That is, $y_{s}^S=y_t^S$ for all $s\in(t-1,t]$ and all item subsets $S\subseteq [N]$. %Furthermore, we extend the total order and index of subsets to continuous time accordingly.

To illustrate the algorithm and its analysis, it is helpful to use a pictorial representation of the fractional solution in the 2-dimensional Cartesian plane (see \cref{fig: algorithm}a), where the \(x\)-axis denotes time and the \(y\)-axis denotes the fractional order quantity of various subsets. Fractional orders placed in each period are depicted as rectangles stacked on top of one another. This representation is used to demonstrate the water-filling procedure described below (see \cref{fig: algorithm}). In this coordinate system, each subset ordered by the LP is naturally associated with an \textit{area}. Specifically, for a subset $S$ ordered in period $s$, its area is given by the product of its fractional order quantity $y_s^S$ and the length of the time interval it spans.

\paragraph{Intuitive description of the water-filling procedure.}
The water-filling procedure for a given item category $j$ begins by filling the subset $S$ with the smallest $j$-index $\alpha_j(S)$, followed by the subset with second smallest $j$-index, and so on. Filling continues until the total filled area reaches $1/2$, with the last subset possibly only partially filled. The filled region constitutes the first \textit{layer}. The time interval $(0,T]$ is then partitioned into two intervals such that exactly half the filled area lies on each side of the split. Note that the split point may correspond to a non-integral time. 
The two filled regions on either side of time horizon are called \textit{level-sets}. They partition the first layer, with each level-set having area exactly $1/4$. An example of the resulting level-sets and the time-horizon split for the first layer are illustrated in \cref{fig: algorithm}b.

The same water-filling procedure is then applied independently to each half of the time horizon defined by the first layer's split. In each sub-interval, filling again begins with the subset of smallest index, regardless of whether it was filled in the previous layer, and continues until a total area of $1/2$ is filled, forming the second layer. Each sub-interval is then split into two parts, producing a partition of the second layer into four level-sets, each with area $1/4$. The four level-sets for the second layer and their corresponding time intervals determined by the splits are illustrated in green in \cref{fig: algorithm}c and \cref{fig: algorithm}d. 

This process continues recursively. If the water-filling procedure exhausts all available subsets within a given interval before reaching area $1/2$, then no level-set is created in that interval, and it is not split further. While this cannot occur in the first layer, it serves as the termination condition for subsequent layers. For example, in \cref{fig: algorithm}e, the third layer contains only four level-sets because the termination criterion is met for the right side of the time horizon.

\begin{figure}[htbp]
    \centering

\begin{subfigure}{\textwidth}
\centering
\resizebox{.5\textwidth}{!}{
\begin{tikzpicture}[scale=2]

  % Axes
  \draw[->] (0,0) -- (3.1,0) node[right] {time};
  \draw[->] (0,0) -- (0,2.1) node[above] {order quantity};

  % Y-axis ticks and labels every 0.2, scaled by 2
  \foreach \y/\ylab in {0.4/0.2, 0.8/0.4, 1.2/0.6, 1.6/0.8, 2.0/1.0} {
    \draw (-0.05,\y) -- (0,\y);
    \node[left] at (-0.05,\y) {\ylab};
  }

  % X-axis ticks and labels at 1, 2, 3
  \foreach \x in {1,2,3} {
    \draw (\x,0) -- (\x,-0.02);
    \node[below] at (\x,-0.02) {\x};
  }

  % A tiny padding used to place the index in the top-right corner
  \def\pad{-0.04}

  % ----- Rectangles for x in [0,1] -----
  % {1,2}  -> index 3
  \draw[fill=none] (0,0) rectangle (1,0.6);
  \node at (0.5,0.30) {\{1,2\}};
  \node[font=\bfseries\footnotesize, text=violet, anchor=north east] at (1-\pad,0.6-\pad) {3};

  % {1}    -> index 5
  \draw[fill=none] (0,0.6) rectangle (1,1.0);
  \node at (0.5,0.80) {\{1\}};
  \node[font=\bfseries\footnotesize, text=violet, anchor=north east] at (1-\pad,1.0-\pad) {5};

  % ----- Rectangles for x in [1,2] -----
  % {1,2,3,4} -> index 1
  \draw[fill=none] (1,0) rectangle (2,0.8);
  \node at (1.5,0.40) {\{1,2,3,4\}};
  \node[font=\bfseries\footnotesize, text=violet, anchor=north east] at (2-\pad,0.8-\pad) {1};

  % {1,3,4}    -> index 4
  \draw[fill=none] (1,0.8) rectangle (2,1.0);
  \node at (1.5,0.90) {\{1,3,4\}};
  \node[font=\bfseries\footnotesize, text=violet, anchor=north east] at (2-\pad,1.0-\pad) {4};

  % {3,4}      -> index 6
  \draw[fill=none] (1,1.0) rectangle (2,1.2);
  \node at (1.5,1.10) {\{3,4\}};
  \node[font=\bfseries\footnotesize, text=violet, anchor=north east] at (2-\pad,1.2-\pad) {6};

  % {4}        -> index 8
  \draw[fill=none] (1,1.2) rectangle (2,2.0);
  \node at (1.5,1.60) {\{4\}};
  \node[font=\bfseries\footnotesize, text=violet, anchor=north east] at (2-\pad,2.0-\pad) {8};

  % ----- Rectangles for x in [2,3] -----
  % {2,3}  -> index 2
  \draw[fill=none] (2,0) rectangle (3,0.6);
  \node at (2.5,0.30) {\{2,3\}};
  \node[font=\bfseries\footnotesize, text=violet, anchor=north east] at (3-\pad,0.6-\pad) {2};

  % {3}    -> index 7
  \draw[fill=none] (2,0.6) rectangle (3,0.8);
  \node at (2.5,0.70) {\{3\}};
  \node[font=\bfseries\footnotesize, text=violet, anchor=north east] at (3-\pad,0.8-\pad) {7};

\end{tikzpicture}
}    
 \vspace{-0em}\caption{}
\end{subfigure}
    
    % Second row
    \begin{subfigure}{0.49\textwidth}
        \centering
        \resizebox{\linewidth}{!}{
\begin{tikzpicture}[scale=2]

  % Axes
  \draw[->] (0,0) -- (3.1,0) node[right] {time};
  \draw[->] (0,0) -- (0,2.1) node[above] {order quantity};

  % Y-axis ticks and labels every 0.2, scaled by 2
  \foreach \y/\ylab in {0.4/0.2, 0.8/0.4, 1.2/0.6, 1.6/0.8, 2.0/1.0} {
    \draw (-0.05,\y) -- (0,\y);
    \node[left] at (-0.05,\y) {\ylab};
  }

  % X-axis ticks and labels at 1, 2, 3
  \foreach \x in {1,2,3} {
    \draw (\x,0) -- (\x,-0.02);
    \node[below] at (\x,-0.02) {\x};
  }

  % === Shading (no border, behind everything else) ===
  \fill[cyan!20] (1,0) rectangle (2,0.8);   % Full shading
  \fill[cyan!20] (2,0) rectangle (3,0.2);   % Partial shading

  % === Vertical line that cuts shaded volume in half ===
  \draw[line width=1.5pt, blue, dashed] (1.625, 0) -- (1.625, 2.1);

  % === Box outlines and text ===
  \draw[fill=none] (0,0) rectangle (1,0.6);
  \node at (0.5,0.3) {\{1,2\}};

  \draw[fill=none] (0,0.6) rectangle (1,1.0);
  \node at (0.5,0.8) {\{1\}};

  \draw[fill=none] (1,0) rectangle (2,0.8);
  \node at (1.5,0.4) {\{1,2,3,4\}};

  \draw[fill=none] (1,0.8) rectangle (2,1.0);
  \node at (1.5,0.9) {\{1,3,4\}};

  \draw[fill=none] (1,1.0) rectangle (2,1.2);
  \node at (1.5,1.1) {\{3,4\}};

  \draw[fill=none] (1,1.2) rectangle (2,2.0);
  \node at (1.5,1.6) {\{4\}};

  \draw[fill=none] (2,0) rectangle (3,0.6);
  \node at (2.5,0.3) {\{2,3\}};

  \draw[fill=none] (2,0.6) rectangle (3,0.8);
  \node at (2.5,0.7) {\{3\}};

\end{tikzpicture}
        }
        \vspace{-0em}\caption{}
    \end{subfigure}
    \hfill
    \begin{subfigure}{0.49\textwidth}
        \centering
        \resizebox{\linewidth}{!}{
\begin{tikzpicture}[scale=2]

  % Axes
  \draw[->] (0,0) -- (3.1,0) node[right] {time};
  \draw[->] (0,0) -- (0,2.1) node[above] {order quantity};

  % === Gray overlay to the right of x = 1.625 ===
\fill[gray!20, pattern=north east lines, pattern color=gray] (1.625, 0) rectangle (3.1, 2.1);

  % Y-axis ticks and labels every 0.2, scaled by 2
  \foreach \y/\ylab in {0.4/0.2, 0.8/0.4, 1.2/0.6, 1.6/0.8, 2.0/1.0} {
    \draw (-0.05,\y) -- (0,\y);
    \node[left] at (-0.05,\y) {\ylab};
  }

  % X-axis ticks and labels at 1, 2, 3
  \foreach \x in {1,2,3} {
    \draw (\x,0) -- (\x,-0.02);
    \node[below] at (\x,-0.02) {\x};
  }

  % === Green shading ===
  \fill[green!30] (0,0) rectangle (1,0.5);                % Full box {1,2}
  \fill[green!30] (1,0) rectangle (1.625,0.8);            % Left part of {1,2,3,4}

  % === Vertical lines
  \draw[line width=1.5pt, blue, dashed] (1.625, 0) -- (1.625, 2.1);
  \draw[line width=1.8pt, draw=green!50!black, dashed] (1, 0) -- (1, 2.1);

  % === Box outlines and text ===
  \draw[fill=none] (0,0) rectangle (1,0.6);
  \node at (0.5,0.3) {\{1,2\}};

  \draw[fill=none] (0,0.6) rectangle (1,1.0);
  \node at (0.5,0.8) {\{1\}};

  \draw[fill=none] (1,0) rectangle (2,0.8);
  \node at (1.5,0.4) {\{1,2,3,4\}};

  \draw[fill=none] (1,0.8) rectangle (2,1.0);
  \node at (1.5,0.9) {\{1,3,4\}};

  \draw[fill=none] (1,1.0) rectangle (2,1.2);
  \node at (1.5,1.1) {\{3,4\}};

  \draw[fill=none] (1,1.2) rectangle (2,2.0);
  \node at (1.5,1.6) {\{4\}};

  \draw[fill=none] (2,0) rectangle (3,0.6);
  \node at (2.5,0.3) {\{2,3\}};

  \draw[fill=none] (2,0.6) rectangle (3,0.8);
  \node at (2.5,0.7) {\{3\}};
\end{tikzpicture}
        }
        \vspace{-0em}\caption{}
    \end{subfigure}

    \vspace{-.1em}

    % Third row
    \begin{subfigure}{0.49\textwidth}
        \centering
        \resizebox{\linewidth}{!}{
\begin{tikzpicture}[scale=2]

  % Axes
  \draw[->] (0,0) -- (3.1,0) node[right] {time};
  \draw[->] (0,0) -- (0,2.1) node[above] {order quantity};

  % === Gray overlay to the LEFT of x = 1.625 ===
  \fill[gray!20, pattern=north east lines, pattern color=gray] (0, 0) rectangle (1.625, 2.1);

  % Y-axis ticks and labels every 0.2, scaled by 2
  \foreach \y/\ylab in {0.4/0.2, 0.8/0.4, 1.2/0.6, 1.6/0.8, 2.0/1.0} {
    \draw (-0.05,\y) -- (0,\y);
    \node[left] at (-0.05,\y) {\ylab};
  }

  % X-axis ticks and labels at 1, 2, 3
  \foreach \x in {1,2,3} {
    \draw (\x,0) -- (\x,-0.02);
    \node[below] at (\x,-0.02) {\x};
  }

  % === Green shading (now to the RIGHT of blue line) ===
  \fill[green!30] (1.625,0) rectangle (2,1.06666);            % Right part of {1,2,3,4}
  \fill[green!30] (2,0) rectangle (3,0.6);                %  {2,3}

  % === Vertical lines
  \draw[line width=1.5pt, blue, dashed] (1.625, 0) -- (1.625, 2.1);
  \draw[line width=1.8pt, draw=green!50!black, dashed] (2.166, 0) -- (2.166, 2.1);

  % === Box outlines and text ===
  \draw[fill=none] (0,0) rectangle (1,0.6);
  \node at (0.5,0.3) {\{1,2\}};

  \draw[fill=none] (0,0.6) rectangle (1,1.0);
  \node at (0.5,0.8) {\{1\}};

  \draw[fill=none] (1,0) rectangle (2,0.8);
  \node at (1.5,0.4) {\{1,2,3,4\}};

  \draw[fill=none] (1,0.8) rectangle (2,1.0);
  \node at (1.5,0.9) {\{1,3,4\}};

  \draw[fill=none] (1,1.0) rectangle (2,1.2);
  \node at (1.5,1.1) {\{3,4\}};

  \draw[fill=none] (1,1.2) rectangle (2,2.0);
  \node at (1.5,1.6) {\{4\}};

  \draw[fill=none] (2,0) rectangle (3,0.6);
  \node at (2.5,0.3) {\{2,3\}};

  \draw[fill=none] (2,0.6) rectangle (3,0.8);
  \node at (2.5,0.7) {\{3\}};
\end{tikzpicture}
        }
        \vspace{-0em}\caption{}
    \end{subfigure}
    \hfill
    \begin{subfigure}{0.49\textwidth}
        \centering
        \resizebox{\linewidth}{!}{
\begin{tikzpicture}[scale=2]

  % Axes
  \draw[->] (0,0) -- (3.1,0) node[right] {time};
  \draw[->] (0,0) -- (0,2.1) node[above] {order quantity};

  % % === Gray overlay to the LEFT of x = 1.625 ===
  % \fill[gray!20, pattern=north east lines, pattern color=gray] (0, 0) rectangle (1.625, 2.1);

  % Y-axis ticks and labels every 0.2, scaled by 2
  \foreach \y/\ylab in {0.4/0.2, 0.8/0.4, 1.2/0.6, 1.6/0.8, 2.0/1.0} {
    \draw (-0.05,\y) -- (0,\y);
    \node[left] at (-0.05,\y) {\ylab};
  }

  % X-axis ticks and labels at 1, 2, 3
  \foreach \x in {1,2,3} {
    \draw (\x,0) -- (\x,-0.02);
    \node[below] at (\x,-0.02) {\x};
  }

  % === Red shading ===
  \fill[red!30] (1,0) rectangle (1.625,1.6);            % Left part of {1,2,3,4}
    \fill[red!30] (0,0) rectangle (1,1);            % Left part of {1,2,3,4}

  % === Vertical lines
  \draw[line width=1.5pt, blue, dashed] (1.625, 0) -- (1.625, 2.1);
  \draw[line width=1.8pt, draw=green!50!black, dashed] (2.166, 0) -- (2.166, 2.1);
\draw[line width=1.8pt, draw=green!50!black, dashed] (1, 0) -- (1, 2.1);
  \draw[line width=1.5pt, red, dashed] (1+5/16, 0) -- (1+5/16, 2.1);
  \draw[line width=1.5pt, red, dashed] (.5, 0) -- (.5, 2.1);

  % === Box outlines and text ===
  \draw[fill=none] (0,0) rectangle (1,0.6);
  \node at (0.5,0.3) {\{1,2\}};

  \draw[fill=none] (0,0.6) rectangle (1,1.0);
  \node at (0.5,0.8) {\{1\}};

  \draw[fill=none] (1,0) rectangle (2,0.8);
  \node at (1.5,0.4) {\{1,2,3,4\}};

  \draw[fill=none] (1,0.8) rectangle (2,1.0);
  \node at (1.5,0.9) {\{1,3,4\}};

  \draw[fill=none] (1,1.0) rectangle (2,1.2);
  \node at (1.5,1.1) {\{3,4\}};

  \draw[fill=none] (1,1.2) rectangle (2,2.0);
  \node at (1.5,1.6) {\{4\}};

  \draw[fill=none] (2,0) rectangle (3,0.6);
  \node at (2.5,0.3) {\{2,3\}};

  \draw[fill=none] (2,0.6) rectangle (3,0.8);
  \node at (2.5,0.7) {\{3\}};
\end{tikzpicture}
        }
        \vspace{-0em}\caption{}
    \end{subfigure}

\caption{
This figure illustrates Step~1 of Algorithm~\ref{alg:water_filling} on an instance with \( N = 4 \) and \( T = 3 \). Subfigure (a) shows the fractional solution to \eqref{eq: primal} with the $j$-index of each subset indicated in purple. In (b), subsets with the lowest indices are filled to a total area of \(1/2\); the blue region represents the first layer, and the dotted blue line splits the time horizon so that each side contains exactly half of the filled area. In (c) and (d), the same procedure is applied recursively to each sub-interval, producing the second layer (green). In (e), only two of the four resulting sub-intervals has area at least \(1/2\); they are filled to \(1/2\) (red, indicating the third layer) and again split evenly (dotted red line). All intervals now have area below \(1/2\), satisfying the termination condition.
}
\label{fig: algorithm}
\end{figure}

\begin{figure}[tbp]
       \begin{subfigure}{0.33\textwidth}
        \centering
        \resizebox{\linewidth}{!}{
\begin{tikzpicture}[scale=2]

  % Axes
  \draw[->] (0,0) -- (3.1,0) node[midway, below=15pt] {time};
  \draw[->] (0,0) -- (0,2.1) node[above] {order quantity};

  % Y-axis ticks and labels every 0.2, scaled by 2
  \foreach \y/\ylab in {0.4/0.2, 0.8/0.4, 1.2/0.6, 1.6/0.8, 2.0/1.0} {
    \draw (-0.05,\y) -- (0,\y);
    \node[left] at (-0.05,\y) {\ylab};
  }

  % X-axis ticks and labels at 1, 2, 3
  \foreach \x in {1,2,3} {
    \draw (\x,0) -- (\x,-0.02);
    \node[below] at (\x,-0.02) {\x};
  }

  % Rectangles for x in [0,1], y scaled
  \draw[fill=none] (0,0) rectangle (1,0.6);
  \node at (0.5,0.3) {\{1,2\}};

  \draw[fill=none] (0,0.6) rectangle (1,1.0);
  \node at (0.5,0.8) {\{1\}};

  % Rectangles for x in [1,2], y scaled
  \draw[fill=none] (1,0) rectangle (2,0.8);
  \node at (1.5,0.4) {\{1,2,3,4\}};

  \draw[fill=none] (1,0.8) rectangle (2,1.0);
  \node at (1.5,0.9) {\{1,3,4\}};

  \draw[fill=none] (1,1.0) rectangle (2,1.2);
  \node at (1.5,1.1) {\{3,4\}};

  \draw[fill=none] (1,1.2) rectangle (2,2.0);
  \node at (1.5,1.6) {\{4\}};

  % Rectangles for x in [2,3], y scaled
  \draw[fill=none] (2,0) rectangle (3,0.6);
  \node at (2.5,0.3) {\{2,3\}};

  \draw[fill=none] (2,0.6) rectangle (3,.8);
  \node at (2.5,0.7) {\{3\}};

       % Cyan box 1
       \draw[cyan, line width=3.5pt] (0,0) -- (2,0);
        \draw[cyan, line width=3.5pt] (1,0) -- (2,0);
        \draw[cyan, line width=3.5pt] (1,0.8) -- (2,0.8);
        \draw[cyan, line width=3.5pt] (1,0) -- (1,0.8);
        \draw[cyan, line width=3.5pt] (2,.2) -- (2,0.8);

        % Cyan box 2
        \draw[cyan, line width=3.5pt] (2,0) -- (3,0);
        \draw[cyan, line width=3.5pt] (2,0.2) -- (3,0.2);
        % \draw[cyan, line width=3.5pt] (2,0) -- (2,0.2);
        \draw[cyan, line width=3.5pt] (3,0) -- (3,0.2);

        % Vertical split inside first cyan box
        \draw[cyan, line width=3.5pt] (1.625, 0) -- (1.625, 0.8);

  \node[cyan, font=\footnotesize] at (1,0) [above right] {a1};
  \node[cyan, font=\footnotesize] at (1.625,0) [above right] {a2};
\end{tikzpicture}
        }
        \caption{}
    \end{subfigure}
       \begin{subfigure}{0.33\textwidth}
        \centering
        \resizebox{\linewidth}{!}{
\begin{tikzpicture}[scale=2]

  % Axes
  \draw[->] (0,0) -- (3.1,0) node[midway, below=15pt] {time};
  \draw[->] (0,0) -- (0,2.1) node[above] {order quantity};

  % Y-axis ticks and labels every 0.2, scaled by 2
  \foreach \y/\ylab in {0.4/0.2, 0.8/0.4, 1.2/0.6, 1.6/0.8, 2.0/1.0} {
    \draw (-0.05,\y) -- (0,\y);
    \node[left] at (-0.05,\y) {\ylab};
  }

  % X-axis ticks and labels at 1, 2, 3
  \foreach \x in {1,2,3} {
    \draw (\x,0) -- (\x,-0.02);
    \node[below] at (\x,-0.02) {\x};
  }

  % Rectangles for x in [0,1], y scaled
  \draw[fill=none] (0,0) rectangle (1,0.6);
  \node at (0.5,0.3) {\{1,2\}};

  \draw[fill=none] (0,0.6) rectangle (1,1.0);
  \node at (0.5,0.8) {\{1\}};

  % Rectangles for x in [1,2], y scaled
  \draw[fill=none] (1,0) rectangle (2,0.8);
  \node at (1.5,0.4) {\{1,2,3,4\}};

  \draw[fill=none] (1,0.8) rectangle (2,1.0);
  \node at (1.5,0.9) {\{1,3,4\}};

  \draw[fill=none] (1,1.0) rectangle (2,1.2);
  \node at (1.5,1.1) {\{3,4\}};

  \draw[fill=none] (1,1.2) rectangle (2,2.0);
  \node at (1.5,1.6) {\{4\}};

  % Rectangles for x in [2,3], y scaled
  \draw[fill=none] (2,0) rectangle (3,0.6);
  \node at (2.5,0.3) {\{2,3\}};

  \draw[fill=none] (2,0.6) rectangle (3,.8);
  \node at (2.5,0.7) {\{3\}};

        % Green boxes (manually drawn)
        \draw[green!60!black, line width=3.5pt] (0,0) -- (1,0);
        \draw[green!60!black, line width=3.5pt] (0,0.5) -- (1,0.5);
        \draw[green!60!black, line width=3.5pt] (0,0) -- (0,0.5);
        \draw[green!60!black, line width=3.5pt] (1,0) -- (1,0.5);

        \draw[green!60!black, line width=3.5pt] (1,0) -- (1.625,0);
        \draw[green!60!black, line width=3.5pt] (1,0.8) -- (1.625,0.8);
        \draw[green!60!black, line width=3.5pt] (1,0) -- (1,0.8);
        \draw[green!60!black, line width=3.5pt] (1.625,0) -- (1.625,0.8);

        \draw[green!60!black, line width=3.5pt] (1.625,0) -- (2,0);
        \draw[green!60!black, line width=3.5pt] (1.625,1.06666) -- (2,1.06666);
        \draw[green!60!black, line width=3.5pt] (1.625,0) -- (1.625,1.06666);
        \draw[green!60!black, line width=3.5pt] (2,0.6) -- (2,1.06666);

        \draw[green!60!black, line width=3.5pt] (2,0) -- (3,0);
        \draw[green!60!black, line width=3.5pt] (2,0.6) -- (3,0.6);
        % \draw[green!60!black, line width=3.5pt] (2,0) -- (2,0.6);
        \draw[green!60!black, line width=3.5pt] (3,0) -- (3,0.6);

        % Internal vertical split
        \draw[green!60!black, line width=3.5pt] (2.166, 0) -- (2.166, 0.6);

  \node[green!60!black, font=\footnotesize] at (0,0) [above right] {b1};
  \node[green!60!black, font=\footnotesize] at (1,0) [above right] {b2};
  \node[green!60!black, font=\footnotesize] at (1.625,0) [above right] {b3};
  \node[green!60!black, font=\footnotesize] at (2.166,0) [above right] {b4};

\end{tikzpicture}
        }
       \caption{}
    \end{subfigure}
       \begin{subfigure}{0.33\textwidth}
        \centering
        \resizebox{\linewidth}{!}{
\begin{tikzpicture}[scale=2]

  % Axes
  \draw[->] (0,0) -- (3.1,0) node[midway, below=15pt] {time};
  \draw[->] (0,0) -- (0,2.1) node[above] {order quantity};

  % Y-axis ticks and labels every 0.2, scaled by 2
  \foreach \y/\ylab in {0.4/0.2, 0.8/0.4, 1.2/0.6, 1.6/0.8, 2.0/1.0} {
    \draw (-0.05,\y) -- (0,\y);
    \node[left] at (-0.05,\y) {\ylab};
  }

  % X-axis ticks and labels at 1, 2, 3
  \foreach \x in {1,2,3} {
    \draw (\x,0) -- (\x,-0.02);
    \node[below] at (\x,-0.02) {\x};
  }

  % Rectangles for x in [0,1], y scaled
  \draw[fill=none] (0,0) rectangle (1,0.6);
  \node at (0.5,0.3) {\{1,2\}};

  \draw[fill=none] (0,0.6) rectangle (1,1.0);
  \node at (0.5,0.8) {\{1\}};

  % Rectangles for x in [1,2], y scaled
  \draw[fill=none] (1,0) rectangle (2,0.8);
  \node at (1.5,0.4) {\{1,2,3,4\}};

  \draw[fill=none] (1,0.8) rectangle (2,1.0);
  \node at (1.5,0.9) {\{1,3,4\}};

  \draw[fill=none] (1,1.0) rectangle (2,1.2);
  \node at (1.5,1.1) {\{3,4\}};

  \draw[fill=none] (1,1.2) rectangle (2,2.0);
  \node at (1.5,1.6) {\{4\}};

  % Rectangles for x in [2,3], y scaled
  \draw[fill=none] (2,0) rectangle (3,0.6);
  \node at (2.5,0.3) {\{2,3\}};

  \draw[fill=none] (2,0.6) rectangle (3,.8);
  \node at (2.5,0.7) {\{3\}};

        % Red region (fully drawn manually)
        \draw[red, line width=3.5pt] (1,0) -- (1.625,0);           % bottom
        \draw[red, line width=3.5pt] (1,1.6) -- (1.625,1.6);       % top
        \draw[red, line width=3.5pt] (1,0) -- (1,1.6);             % left
        \draw[red, line width=3.5pt] (1.625,0) -- (1.625,1.6);     % right

                % Red region (fully drawn manually)
        \draw[red, line width=3.5pt] (0,0) -- (1,0);           % bottom
        \draw[red, line width=3.5pt] (0,0) -- (0,1);             % left
        \draw[red, line width=3.5pt] (0,1) -- (1,1);   

        % Vertical split inside
        \draw[red, line width=3.5pt] (1+5/16, 0) -- (1+5/16, 1.6);
        \draw[red, line width=3.5pt] (.5, 0) -- (.5, 1);

  \node[red, font=\footnotesize] at (0,0) [above right] {c1};
  \node[red, font=\footnotesize] at (0.5,0) [above right] {c2};
  \node[red, font=\footnotesize] at (1,0) [above right] {c3};
  \node[red, font=\footnotesize] at (1+5/16,0) [above right] {c4};

\end{tikzpicture}
        }
        \caption{}
    \end{subfigure}

\caption{The level-sets constructed in Layers 1, 2, and 3, shown from left to right. Layer 1 contains two level-sets (shown in blue), Layer 2 contains four (shown in green), and Layer 3 contains four (shown in red). 
% While some regions may not appear rectangular in this visualization, this is due to the y-axis representing order quantity, rather than index. 
Note that although level-set a1 does not include any subset in the time interval $(0,1]$, it still spans that interval.}
    \label{fig:constructed regions}
\end{figure}

In summary, for each category \(j\in[k]\), the water-filling procedure produces a sequence of layers, each of which is partitioned into disjoint level-sets.
An example of the full procedure is shown in \cref{fig: algorithm}, and the resulting layers and level-sets within each layer are illustrated in \cref{fig:constructed regions}. 

\paragraph{Formal description of the water-filling procedure.}
The goal of the water-filling procedure is to construct a collection of \textit{level-sets}. The level-sets are in different layers, and each level-set is fully characterized by three parameters: a time interval \(I\), an index \(\alpha\), and a fractional quantity \(\beta \in (0,1]\). Specifically, the level-set contains all subsets within the interval \(I\) whose index is at most \(\alpha\), with the subset of index \(\alpha\) included only up to a proportion $\beta$. \cref{def:level-set} provides a precise mathematical definition of level-sets:
%we allow $\alpha$ to take continuous, rather than integer values, where the fractional part reflects the proportion of the final subset included. 
%In other words, let an index of $\alpha$ indicate that the subset with index $\floor{\alpha}+1$ is only partially filled, up to a proportion $\alpha-\floor{\alpha}$. 

\begin{definition}[Level-set]\label{def:level-set}
Fix a category $j\in[k]$, a continuous time interval $I$, an index $\alpha\in\mathbb N$, and a proportion $\beta\in(0,1]$. 
Consider the collection of all subsets ordered by the optimal LP solution during time interval $I$ with $j$-index at most $\alpha$, that is, $$\set{(S,t): t\in I,\,r\in[n(t)],\, S=S_r(t),\, \alpha_j(S,t)\leq \alpha}.$$ 
The level-set $\Rect_j(\alpha,\beta;I)$ consists of these fractional LP orders, except the subset with $j$-index equal to $\alpha$ is only included up to a proportion $\beta$.
\end{definition}

Recall that each subset ordered by the LP solution is associated with an area in the order quantity–time coordinate system, defined as the product of its order quantity and the length of the time interval it spans. Level-sets admit an analogous notion of area, i.e., the total area of all subsets they contain, where partially included subsets contribute proportionally to the extent of their inclusion.
Specifically, for a level-set $\Rect_j(\alpha,\beta;I)$, its area, denoted $\Area_j  \paren{\Rect_j(\alpha,\beta;I)}$, is the total area over the interval $I$ of all subsets with $j$-index at most $\alpha$, where the subset with $j$-index $\alpha$ has area contribution that is scaled by $\beta$. 
For convenience, let $\Area_j(I):=\Area_j(\Rect_j(\infty,1; I))$  denote the area of all subsets ordered by the LP solution in the time interval $I$.

% \begin{definition}[Level-set]\label{def:level-set}
%     Fix a category $j\in[k]$, a time interval $I$, and an index $\alpha\geq 0$. The level-set $\Rect_j(\alpha;I)$ is the region containing all subsets ordered by the LP solution during interval $I$ with $j$-index at most $\floor{\alpha}$, together with $\alpha-\floor{\alpha}$ proportion of the level-set with index $\floor{\alpha}+1$. Formally, the level-set is defined as $\Rect_j(\alpha;I):=\set{(S,t): t\in I,\,r\in[n(t)],\,S=S_r(t),\,q_r(t;\alpha)>0}$, where 
%     \begin{equation*}
%         %\Rect_j(\alpha;I) := \set{ (S_r(t),q_r(t;\alpha)): t\in I, r\in[n(t)] } \text{ where } 
%         q_r(t;\alpha):=\begin{cases}
%             1 & \alpha_j(S_r(t)) \leq \floor{\alpha}\\
%             \alpha-\floor{\alpha} & \alpha_j(S_r(t)) = \floor{\alpha} +1 \\
%             0 & \alpha_j(S_r(t)) > \floor{\alpha} +1,
%         \end{cases}
%     \end{equation*}
%     denotes the proportion of subset $S_r(t)$ included in the level-set.
% The volume of $\Rect_j(\alpha;I)$ is defined as the total order quantity of the included subsets:
%     \begin{equation*}
%         \Area_j \paren{\Rect_j(\alpha;I)}:= \int_{t\in I} \paren{\sum_{r\in[n(t)]} y_r(t) q_r(t;\alpha)} dt.
%     \end{equation*}
% \end{definition}

With this formal notion of level-sets and their associated area, the water-filling algorithm can be described as follows. The algorithm is applied recursively to working time intervals $I$, starting with the full time horizon $(0,T]$. For a given time interval $I:=(t_1,t_2]$, the first step is to identify the level-set on $I$ with area $1/2$; that is, find the $\alpha\in\mathbb N$ and $\beta\in(0,1]$ such that $\Area_j(\Rect_j(\alpha,\beta;I))=1/2$ (if no such level-set exists, the procedure for interval $I$ terminates). Next, determine a cutoff time $t\in I$ that splits this level-set in two parts of equal area, i.e., such that $\Area_j(\Rect_j(\alpha,\beta;(t_1,t]))=1/4$. This yields the two regions $\Rect_j(\alpha,\beta;(t_1,t])$ and $\Rect_j(\alpha,\beta;(t,t_2])$, which make up the first \textit{layer} for category $j$. The same procedure is then applied to the sub-intervals $(t_1,t]$ and $(t,t_2]$, generating additional regions in the next layer. This water-filling procedure is summarized in Step 1 of Algorithm~\ref{alg:water_filling}.

\paragraph{Tree structure of level-sets.}
This recursive construction naturally induces a tree structure. For each category $j$, consider a rooted tree whose nodes correspond to level-sets. The root node is given by the dummy level-set $\{\Rect_j(0,1;(0,T])\}$ that encompasses the entire time horizon. If applying the procedure to a level-set associated with an interval $I=(t_1,t_2]$ yields a split at time $t$, then the corresponding node has two children, given by the level-sets defined on the sub-intervals $(t_1,t]$ and $(t,t_2]$. Taking the root to be at depth $0$, the nodes at depth $\ell$ in the tree correspond exactly to the level-sets in the $\ell$-th layer, denoted $\Lcal_\ell(j)$. Let $\Lcal(j) := \bigcup_{\ell \geq 0} \Lcal_\ell(j)$ denote the set of all level-sets across layers. When the category $j$ is clear from context, the dependence on $j$ is omitted.
For any level-set $R \in \Lcal$, let $\mathrm{Dsc}(R)$ denote the set of its descendants, i.e., all level-sets in $\Lcal$ for which $R$ is an ancestor under this tree structure, and let $\mathrm{Ch}(R)$ denote the children of $R$.

% within each layer, the level-sets correspond to disjoint time intervals, and these intervals are nested across layers by construction. Specifically, any level-set $\Rect_j(\alpha,\beta;I)$ in layer $\Lcal_l$ for $l\geq 1$ is formed by splitting into two the time interval $I_p$ of some level-set $\Rect_j(\alpha_p,\beta_p;I_p)\in\Lcal_{l-1}$ in the previous layer (see Algorithm \ref{alg:water_filling}). We refer to $\Rect_j(\alpha_p,\beta_p;I_p)$ as the \emph{parent} of $\Rect_j(\alpha,\beta;I)$. For example, in \cref{fig:constructed regions}, level-set a1 is the parent of level-sets b1 and b2, while a2 is the parent of b3 and b4. Similarly, b1 is the parent of c1 and c2, and b2 is the parent of c3 and c4. 

\subsection{Step 2: Placing orders to satisfy demands for items in $P_j$. } \label{subsection: placing orders}

Step 2 of Algorithm~\ref{alg:water_filling} aims to select and order one subset from each level-set in $\Lcal$ while minimizing the total ordering cost. For convenience, we say that a level-set is \emph{satisfied} if it contains an element $(S,t)$ such that subset $S$ is ordered at time $t$. Crucially, an order placed at a node $R = \Rect_j(\alpha,\beta;I)$ satisfies all nodes along a single root-to-leaf path in the subtree rooted at $R$. Specifically, an order at time $t \in I$ corresponds to a unique root-to-leaf path consisting of descendant nodes whose associated intervals contain $t$, and it satisfies exactly those descendants. Consequently, once an order is placed at node $R$, no additional orders are required for the nodes along one path in its subtree. This will be important when bounding the cost of the orders.

\begin{algorithm}[t]

\caption{Algorithm for satisfying demands associated with items in category $j$}\label{alg:water_filling}

%\SetAlgoLined
\LinesNumbered
\everypar={\nl}

% \hrule height\algoheightrule\kern3pt\relax
\KwIn{Category $j\in[k]$. LP solution: $S_r(t),y_r(t)$ for $t\in(0,T], r\in[n(t)]$}
\KwOut{Schedule of orders $\Ocal^{(j)}$ for category $j$}

% \vspace{3mm}

{\nonl \textbf{Step 1: Construction of regions for each layer}}

Initialize $\Ical_1:=\set{ (0,T] }$ and $l\gets 1$

\While{$\Ical_l \neq \emptyset$}{
    Initialize layer $\Lcal_l\gets \emptyset$ and intervals for the next layer $\Ical_{l+1}\gets \emptyset$
    
    \For{$I=(t_1,t_2]\in \Ical_l$ such that $\Area_j(I) \geq 1/2$}{
            Find the $\alpha\in\mathbb N$ and $\beta\in(0,1]$ with $\Area_j(\Rect_j(\alpha,\beta;I)) = 1/2$

            Find smallest $t\in I$ such that $\Area_j(\Rect_j(\alpha,\beta; (t_1,t]))=1/4$

            Update the layer $\Lcal_l\gets \Lcal_l \cup \{\Rect_j(\alpha,\beta;(t_1,t]), \Rect_j(\alpha,\beta;(t,t_2]) \}$

            Update intervals for the next layer $\Ical_{l+1} \gets \Ical_{l+1} \cup\set{(t_1,t],(t,t_2]}$
    }
    $l\gets l+1$
}

% \vspace{3mm}
{\nonl \textbf{Step 2: Order placements for each level-set}}

Solve the dynamic program $\sum_{R\in\Lcal_1}\DP(R).$

Set $\Ocal^{(j)}$ to the set of pairs $(S, t)$ that attain the minimum dynamic program solution%; that is, they satisfy $\Rect_j(0,0; (0, T])$ and all its descendants with minimal cost.

\Return $\Ocal^{(j)}$

% \hrule height\algoheightrule\kern3pt\relax
\end{algorithm}

To deterministically select subsets that satisfy all nodes in $\Lcal$ with minimum cost, we formulate a dynamic programming (DP) procedure over the tree induced by the level-sets. %Recall that each node corresponds to a level-set $R \in \Lcal$, and that placing an order at time $t$ in $R$ satisfies exactly the nodes along a single root-to-leaf path in the subtree rooted at $R$.
Each DP state corresponds to a node $R \in \Lcal$. The value $\DP(R)$ denotes the minimum cost required to satisfy $R$ and all of its descendants. To compute $\DP(R)$, consider all candidate orders $(S,t) \in R$. Placing such an order incurs cost $K(S)$ and satisfies $R$, as well as all descendants whose intervals contain $t$, i.e., those lying along a single path in the subtree rooted at $R$.

A key subtlety is how to account for the remaining nodes that are \emph{not} satisfied by this order. These correspond precisely to the subtrees branching off from this path. In particular, we sum over the roots of the maximal unsatisfied subtrees, i.e., the children of nodes along the path that are not themselves on the path.
For a fixed time $t$, define the residual cost $H(R,t)$ as the minimum cost required to satisfy all descendants of $R$ that are not already satisfied by an order at time $t$. This can be computed recursively as
\begin{align*}
H(R,t)
=
\sum_{C \in \mathrm{Ch}(R)}
\begin{cases}
H(C,t), & \text{if } t \in I_C, \\
\DP(C), & \text{if } t \notin I_C,
\end{cases}
\end{align*}
where $I_C$ denotes the time interval associated with node $C$. If $R$ is a leaf, then $H(R,t)=0$.

The DP recursion is then given by
\begin{align}\label{eq: DP}
\DP(R)
=
\min_{(S,t)\in R}
\left\{
K(S) + H(R,t)
\right\}.
\end{align}

Intuitively, once an order $(S,t)$ is placed at node $R$, it ``covers'' a single path in the subtree. Along this path, no further cost is needed for the nodes themselves, but we must still account for any side branches that are not covered. 

% The final policy is obtained from minimizing the cost of satisfying all regions $\sum_{R\in\Lcal_1}\DP(R)$ by keeping track of the orders $(S,t)$ that attain the minimum in \eqref{eq: DP}. The resulting set of orders $\Ocal^{(j)}$ satisfies all level-sets in $\Lcal$ with minimum total cost. This procedure is summarized in Step 2 of Algorithm~\ref{alg:water_filling}, and it is verified in \cref{lemma: runtime} that it runs in polynomial time. In \cref{subsection: feasibility of alg for decomposition}, we show that these orders fulfill all demand associated with items in $P_j$. 

The final schedule is obtained by minimizing the total cost of satisfying all first-layer level-sets, $\sum_{R\in\Lcal_1}\DP(R),$ while keeping track of the orders $(S,t)$ that attain the minimum in \eqref{eq: DP}. The resulting set of orders $\Ocal^{(j)}$ is a minimum-cost collection of orders that satisfies all level-sets in $\Lcal$. This procedure is summarized in Step~2 of Algorithm~\ref{alg:water_filling}, and \cref{lemma: runtime} shows that it can be implemented in polynomial time. In \cref{subsection: feasibility of alg for decomposition}, we prove that these orders satisfy all demand associated with items in $P_j$.

% To deterministically select subsets that satisfy all nodes in $\Lcal$ with minimum cost, we formulate a dynamic programming (DP) procedure over the tree, which exploits this tree structure. Each DP state corresponds to a single level-set $R \in \Lcal$. The value function $\text{DP}(R)$ denotes the minimum cost needed to satisfy $R$, as well as all of its descendants in $\Lcal$. To evaluate $\text{DP}(R)$, we consider all possible subset–time pairs $(S,t)$ contained in $R$—that is, all candidate orders that could satisfy $R$. Placing an order $(S,t)$ incurs an immediate cost $K(S)$, and may also help satisfy some of $R$’s descendants if $t$ lies within their time intervals. For each descendant $\Rect_j(\alpha',\beta';I') \in \text{Dsc}(R)$, if $t \notin I'$ (meaning the order at $t$ does not satisfy that descendant), we must pay the cost of satisfying it separately, given by $\text{DP}(\Rect_j(\alpha',\beta';I'))$. This leads to the recursion:
% \begin{align}\label{eq: DP}
% \begin{split}
%     &\text{DP}(R)= \min_{(S,t)\in R}\bigg\{K(S) \\ 
%     &\qquad\qquad +\sum_{\substack{\Rect_j(\alpha',\beta';I')\\\in\text{Dsc}(R)}}\text{DP}(\Rect_j(\alpha',\beta';I')) \1[t\notin I'] \bigg\}.
% \end{split}
% \end{align}

% Intuitively, the DP chooses the order $(S,t)$ from $R$ that minimizes the total cost of placing that order, plus the additional costs needed to satisfy any descendants not already satisfied by it. If $R$ has no descendants, the summation term is zero, so the cost is simply the cheapest way to satisfy $R$ itself.

\begin{algorithm}[h]

\caption{Approximation Algorithm for the DICP} 

\label{alg:main}

%\SetAlgoLined
\LinesNumbered
\everypar={\nl}

% \hrule height\algoheightrule\kern3pt\relax
\KwIn{Optimal solution to \eqref{eq: primal}, expressed as in \eqref{eq: rewrite_LP_solution}: $S_r(t),y_r(t)$ for $t\in[T],r\in[n(t)]$}
\KwOut{Schedule of orders $O_t$ for $t\in[T]$}

\vspace{3mm}

Preprocessing: Update $(S_r(t),y_r(t))_{t\in[T],r\in[n(t)]}$ so that it satisfies both \cref{property: nested} and \ref{property:LP_solution_reduced}, as detailed in \cref{appendix:preprocessing}

Continuous time extension: for $s\in(t-1,t]$, $n(s):=n(t)$, and for $r\in[n(s)]$, $S_r(s):=S_r(t)$, $y_r(s):=y_r(t)$

\For{$j\in [k]$}{
    Construct the schedule of orders $\Ocal^{(j)}$ for category $j$ using Algorithm~\ref{alg:water_filling}
}

For $t\in[T]$, let $O_t := \bigcup\set{S : \exists j\in[k],\exists t'\in(t-1,t]: (S,t')\in\Ocal^{(j)}}$

\Return $(O_t)_{t\in[T]}$

% \hrule height\algoheightrule\kern3pt\relax
\end{algorithm}

\subsection{The final schedule of orders.}\label{subsection: final schedule}
Having constructed the order schedules $\Ocal^{(j)}$ for each category $j\in[k]$, we now combine them into a single schedule. At each time $t\in[T]$, the final order $O_t$ is defined as the union of all subsets ordered during the interval $(t-1,t]$ across all categories. Formally, 
\begin{equation*}
    O_t := \bigcup\set{S : \exists j\in[k] ,\exists t'\in(t-1,t], (S,t')\in\Ocal^{(j)}}.
\end{equation*}
The complete algorithm is summarized in Algorithm~\ref{alg:main}, and it runs in polynomial time, as shown in \cref{lemma: runtime}.

\section{Correctness of the DICP Approximation Algorithm}\label{section: correctness}

This section establishes that Algorithm~\ref{alg:main} returns a feasible integral solution to the DICP and bounds its cost. \cref{subsection: feasibility of alg for decomposition} shows that the schedule of orders produced by the algorithm satisfies all demand points. 
\cref{subsection: bounding cost} proves that the ordering cost associated with each of the $k$ categories is at most $4C^\star$ where $C^\star$ is the optimal objective value of the LP relaxation in \eqref{eq: primal}. By submodularity, this yields a bound of $4k C^\star$ on the total ordering cost.

\subsection{Feasibility}\label{subsection: feasibility of alg for decomposition}

The goal of this subsection is to show that the schedule of orders $(O_t)_{t\in[T]}$ produced by Algorithm~\ref{alg:main} is a feasible integral solution to the DICP by establishing the following result.
\begin{proposition}\label{prop:feasibility}
    The schedule of orders $(O_t)_{t\in[T]}$ returned by Algorithm~\ref{alg:main} is feasible for the DICP.
\end{proposition}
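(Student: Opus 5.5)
Fix a demand point $(i,c,d)\in\Dcal$ with $i\in P_j$, and write $J:=(c-1,d]$ for its interval in continuous time. Since $O_t$ contains every $S$ with $(S,t')\in\Ocal^{(j)}$ for some $t'\in(t-1,t]$, it suffices to find one order $(S,t')\in\Ocal^{(j)}$ with $t'\in J$ and $i\in S$. I will argue only with category $j$, and in three steps.

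\emph{Step 1: a threshold index for $(i,c,d)$.} I would first establish the structural fact the paper defers to \cref{lemma: consistent levels}. There is an index $\alpha^\star$ such that, for every LP order $(S,t)$ with $t\in J$, we have $i\in S$ if and only if $\alpha_j(S,t)\le\alpha^\star$. The argument runs as follows.
\begin{itemize}
\item Within a single period, nestedness (\cref{property: nested}) and \cref{lemma:monotonicity_index} show that the sets containing $i$ are exactly $S_1(t),\dots,B^i_t$, and these form an initial segment in the index order.
\item Across periods, I compare the set $B^i_{t_2}$ at one time with $A^i_{t_1}$ at another. Because $i\in P_j$ and $K(S)=f(\mathbf u(S))$ with $f$ DR-submodular, the marginal cost of $i$ is sandwiched by $j$-levels. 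Concavity in coordinate $j$ gives roughly $w_i L_j(B\setminus\{i\})\ge K(B)-K(B\setminus\{i\})$. In the other direction, $K(A\cup\{i\})-K(A)\ge w_i L_j(A\cup\{i\})$-type bounds hold, and some care with $i^\star$ being the lightest missing item is needed.
\item \cref{claim:optimality_condition_dual_variable} then gives non-strict comparisons of levels between the two times. Property~\ref{property:LP_solution_reduced} makes these strict exactly when $t_1<t_2$, and this matches the time-then-size tie-breaking in \cref{def:index}.
\end{itemize}
Together these show that every containing set has a smaller index than every non-containing set in $J$.

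I expect Step 1 to be the main obstacle. Translating marginal costs of item $i$ (with weight $w_i$) into $j$-levels defined via the lightest missing item $i^\star$ requires careful use of the concavity of $f$ along $\mathbf e_j$. Handling ties consistently with the lexicographic index is delicate. If the direct comparison fails, I would first try to redefine the threshold in terms of marginal-cost-per-weight and show that the index order refines it on $J$.

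\emph{Step 2: locate a level-set inside $J$ below the threshold.} The LP constraint says that the area of sets containing $i$ over $J$ is at least $1$, so $\Area_j(\Rect_j(\alpha^\star,1;J))\ge 1$. I would descend the level-set tree, always following the nodes whose intervals intersect $J$. The intervals at each layer partition $(0,T]$, so at most two nodes per layer straddle an endpoint of $J$. I would then prove by induction on depth that some node $R=\Rect_j(\alpha,\beta;I)$ satisfies $I\subseteq J$ and $(\alpha,\beta)\le(\alpha^\star,1)$ lexicographically.

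The induction step is as follows. Suppose the current interval $I'\supseteq$ a piece of $J$ has index-$\le\alpha^\star$ area at least $1/2$ within $J\cap I'$. Then water-filling on $I'$ stops at area $1/2$ with index at most $\alpha^\star$. Splitting halves that area, and the recursion continues on the sub-interval containing enough mass. Since the procedure only terminates when the area is less than $1/2$, and area inside $J$ is at least $1$, the descent must eventually produce a level-set whose interval lies within $J$.

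\emph{Step 3: conclude.} The DP in Step 2 of Algorithm~\ref{alg:water_filling} satisfies every level-set in $\Lcal(j)$. Hence some $(S,t')\in R\subseteq\Ocal^{(j)}$-relevant orders is placed, with $t'\in I\subseteq J$ and $\alpha_j(S,t')\le\alpha^\star$. By Step 1, $i\in S$, so $i\in O_{\lceil t'\rceil}$ with $\lceil t'\rceil\in[c,d]$. This proves \cref{prop:feasibility}.
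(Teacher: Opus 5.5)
Your three steps follow the paper's proof: Step 1 is \cref{lemma: consistent levels}, Step 2 is \cref{lemma: contains partition}, and Step 3 is the remark that the DP satisfies every level-set. \textbf{The gap is in Step 1, at the point you flagged.} Your two bounds, $w_i L_j(B\setminus\{i\})\ge K(B)-K(B\setminus\{i\})$ and $K(A\cup\{i\})-K(A)\ge w_i L_j(A\cup\{i\})$, are true but do not help. \cref{claim:optimality_condition_dual_variable} says the marginal of $i$ at $B^i_{t_2}$ is at most its marginal at $A^i_{t_1}$. An upper bound on the former and a lower bound on the latter cannot be chained into any comparison of levels. Moreover, $B\setminus\{i\}$ and $A\cup\{i\}$ are generally not in the LP support, so they have no index. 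You need the reverse pairing: an upper bound on the $A$-marginal by the level of $A$, and a lower bound on the $B$-marginal by the level of $B$.
\[
L_j(A^i_{t_1})\ \ge\ \frac{K(A^i_{t_1}\cup\{i\})-K(A^i_{t_1})}{w_i}\ \ge\ \frac{K(B^i_{t_2})-K(B^i_{t_2}\setminus\{i\})}{w_i}\ \ge\ L_j(B^i_{t_2}).
\]
The first inequality is where the lightest missing item enters. Since $i\in P_j\setminus A^i_{t_1}$, we have $w_{i^\star}\le w_i$, and the chord slope $w\mapsto [f(\mathbf u(A)+w\mathbf e_j)-f(\mathbf u(A))]/w$ is non-increasing by concavity along $\mathbf e_j$. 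The last inequality holds because the backward chord of length $w_i$ at $\mathbf u(B)$ is at least $\partial f/\partial u_j$ at $\mathbf u(B)$, which in turn is at least every forward chord slope. With this chain in place, your tie-breaking analysis goes through as in the paper: the comparison is strict for $t_1<t_2$ by Property~\ref{property:LP_solution_reduced}, and otherwise the time and then size tie-breaks in \cref{def:index} settle it.

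Step 2 is sound in idea. Your threshold argument is slightly cleaner than the paper's ``all filled subsets in $(s,d]$ contain $i$'': if $J\cap I'$ carries area at least $1/2$ from index-$\le\alpha^\star$ sets, then $\Area_j(\Rect_j(\alpha^\star,1;I'))\ge 1/2$, so water-filling on $I'$ stops at or below $(\alpha^\star,1)$. However, the invariant as you state it is not preserved. If the split point of $I'$ lands strictly inside $J\cap I'$ and that piece carries only mass $1/2$, neither child need carry $1/2$ inside $J$. The fix is the paper's two-phase structure:
\begin{itemize}
\item While $I'\supseteq J$, LP feasibility gives mass at least $1$, so a split inside $J$ leaves at least $1/2$ on one side.
\item After that, the working interval has the form $(s,b]$ with $s\in J$ and $b\ge d$, or the mirror form. (If $b=d$ you are already done.)
\item Its next split either lands at or after $d$, which preserves the invariant on a smaller interval, or lands inside $(s,d)$. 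In the latter case the left child $(s,s']\subseteq J$ carries threshold at most $(\alpha^\star,1)$ and is the level-set you want.
\end{itemize}
Termination also needs an explicit argument. Every interval in this descent has area at least $1/2$ and so is always split. Finiteness of the level-set tree, from the area count in \cref{lemma: runtime}, therefore forces a split inside $(s,d)$ eventually. A minor point: layer intervals need not partition $(0,T]$ once some intervals stop splitting, though your argument does not really rely on this.
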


Recall that the algorithm constructs a collection of level-sets \( R \in \Lcal \), each with area exactly $1/4$, and guarantees that every such level-set is satisfied. That is, for each level-set $R\in\Lcal$, one of the subsets within that level-set is ordered. The proof of feasibility proceeds in two steps. 

First, it is shown that every demand point $(i,c,d)\in\Dcal$ is associated with a level-set of area at least 1 spanning its demand interval, such that every subset within this level-set contains item $i$. In other words, for every demand point $(i,c,d)$, there exists a threshold index such that every subset ordered by the LP solution during the demand interval has an index below this threshold if and only if it contains item $i$. 
It is then shown that for each demand point, the subsets ordered by the LP solution to satisfy this demand point fully contains at least one of the constructed level-sets \( R \in \Lcal \) (\cref{lemma: contains partition}). Since every constructed level-set in $\Lcal$ is satisfied, the demand point is also satisfied. Together, these steps establish that all demand points are fulfilled by the orders produced by the algorithm. The first step of the proof is captured by the following lemma.

\begin{lemma}\label{lemma: consistent levels}
    Fix any demand point $(i,c,d)\in \Dcal$ and suppose that $i$ belongs to category $j\in[k]$. Let $I:=(c-1,d]$ denote its demand interval. Then, the collection of subsets ordered by the LP solution that contain item $i$ during the interval $I$ forms a level-set for category $j$. Precisely, there exists an index $\alpha \in\mathbb N$ such that every subset in $R := \Rect_j(\alpha,1; I)$ contains item $i$. Moreover, $\Area_j(R) \geq 1$.
\end{lemma}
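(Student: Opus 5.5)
The plan is to show that, within the demand interval $I=(c-1,d]$, every LP subset containing $i$ precedes, in the $j$-index order, every LP subset not containing $i$. Once that holds, let $\alpha$ be the largest $j$-index of a pair $(S,t)$ with $t\in I$ and $i\in S$. Such pairs exist because the DICP constraint for $(i,c,d)$ forces positive mass on sets containing $i$. Then $\Rect_j(\alpha,1;I)$ consists exactly of the subsets in $I$ that contain $i$. Its area equals $\sum_{t\in[c,d]}\sum_{S\ni i}y_t^S\cdot 1\ge 1$, since each period has length one in the continuous extension.

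The first step relates $j$-levels to the marginal cost of $i$ itself. Write $m(S):=\paren{K(S\cup\{i\})-K(S\setminus\{i\})}/w_i$. Two inequalities follow from concavity of $f$ along $\mathbf e_j$, which holds because $\partial^2 f/\partial u_j^2\le 0$.
\begin{itemize}[leftmargin=*]
\item If $i\in B$, then $L_j(B)\le m(B)$. Either $L_j(B)=0$, or $L_j(B)$ is the slope of a chord of $u_j\mapsto f$ starting at $\mathbf u(B)$. That chord lies to the right of the chord from $\mathbf u(B\setminus\{i\})$ to $\mathbf u(B)$, whose slope is $m(B)$, so its slope is no larger.
\item If $i\notin A$, then $P_j\not\subseteq A$, and the minimizing item $i^\star$ in the definition of $L_j(A)$ has $w_{i^\star}\le w_i$. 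A shorter chord from the same starting point $\mathbf u(A)$ has slope at least that of the longer one, so $L_j(A)\ge m(A)$.
\end{itemize}

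The second step chains these inequalities. Take $S\ni i$ ordered at $t_1\in I$ and $S'\not\ni i$ ordered at $t_2\in I$. By nestedness (\cref{property: nested}), $S\supseteq B^i_{t_1}$ and $S'\subseteq A^i_{t_2}$. Combining \cref{lemma:monotonicity_levels}, the two inequalities above, and \cref{claim:optimality_condition_dual_variable} gives
\[
L_j(S)\le L_j(B^i_{t_1})\le m(B^i_{t_1})\le m(A^i_{t_2})\le L_j(A^i_{t_2})\le L_j(S').
\]
Hence $(S,t_1)$ has index at most that of $(S',t_2)$ whenever the levels differ.

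The main obstacle is ties, which the lexicographic tie-breaking of \cref{def:index} must resolve correctly. There are three cases, according to the discrete periods of $t_1$ and $t_2$.
\begin{itemize}[leftmargin=*]
\item Same period: \cref{lemma:monotonicity_index} applies, and the larger set $S\supsetneq S'$ comes first.
\item $t_1$ in an earlier period: the time tie-break puts $(S,t_1)$ first.
\item $t_1$ in a later period: \cref{property:LP_solution_reduced}, applied with the earlier time $t_2$ and the later time $t_1$, makes the middle inequality $m(B^i_{t_1})<m(A^i_{t_2})$ strict. Then $L_j(S)<L_j(S')$, so no tie can occur.
\end{itemize}
Some care is needed to ensure the index tie-breaking uses the discrete periods that \cref{property:LP_solution_reduced} refers to. With that checked, every containing subset in $I$ has a strictly smaller index than every non-containing one, which completes the argument.
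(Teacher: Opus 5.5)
Your proposal is correct and follows essentially the same route as the paper: concavity of $f$ along $\mathbf e_j$, then \cref{claim:optimality_condition_dual_variable}, then concavity again, yielding $L_j(B^i_{t}) \le L_j(A^i_{t'})$. Ties are resolved in the same three ways: within-period size ordering, the time tie-break, and the strict inequality of Property~\ref{property:LP_solution_reduced} when the set containing $i$ lies in the later period. The only difference is cosmetic: the paper argues by contradiction on the extremal pair (the largest-index set containing $i$ versus the smallest-index set $A^i_t$ not containing $i$), whereas you compare arbitrary pairs directly and reduce to $B^i_{t_1}$ and $A^i_{t_2}$ via nestedness and \cref{lemma:monotonicity_levels}.
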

\begin{proof}[Proof of \cref{lemma: consistent levels}]
    Fix a demand point $(i,c,d)\in\mathcal D$, and let $I:=(c-1,d]$ denote its demand interval.
    Let $\mathbb{S}$ be the collection of pairs $(S,t)$, where $S$ is ordered by the LP solution at time $t$ within the demand interval $I$ and contains item $i$ (i.e., $i\in S$ and $t\in I$). Define
    \begin{equation}\label{eq:def_alpha_i_t}
        \bar{\alpha}:=\max_{(S,t)\in\mathbb{S}} \alpha_j(S,t)
    \end{equation}
    to be the maximum index among these subsets. By construction, we immediately have $\mathbb{S}\subseteq \Rect_j(\bar{\alpha},1;I)$. It remains to show the reverse inclusion. To show $\Rect_j(\bar{\alpha},1;I)\subseteq \mathbb{S}$, it suffices to show that for any $t\in I$ and any subset $S=S_r(t)$ with index $\alpha_j(S,t)\leq \bar{\alpha}$, we must have $i\in S$. By the monotonicity property of indices from \cref{lemma:monotonicity_index}, $A_{t}^i$ is the subset ordered at period $t$ which does not contain $i$ with the smallest index. Hence, it suffices to show that
    \begin{equation}\label{eq:to_be_shown}
        \min_{t\in I} \alpha_j(A_{t}^i,t) >\bar{\alpha}.
    \end{equation}
    
    Let $t_1\in I$ be the time that achieves the minimum in \cref{eq:to_be_shown}, and let $t_2\in I$ be the time that achieves the maximum in \cref{eq:def_alpha_i_t}. Note that $\sum_{S:i\in S} y_{t_2}^S>0$. 
    Next,
    \begin{align}
L_j(A_{t_1}^i)&\overset{(i)}\geq \frac{K(A_{t_1}^i\cup\{i\}) - K(A_{t_1}^i)}{w_i} \notag \\
        &\overset{(ii)}{\geq} \frac{K(B_{t_2}^i) - K(B_{t_2}^i\setminus \{i\})}{w_i} \notag\\
        % &\overset{(iii)}{\geq } \left.\frac{\partial f}{\partial u_j} \right|_{\mathbf u(B_{t_2}^i)}\\
        &\overset{(iii)}{\geq} L_j(B_{t_2}^i). \label{eq:computations_concavity}
    \end{align}
    In $(i)$, the component-wise concavity of \( f \) is used. In $(ii)$ we used \cref{claim:optimality_condition_dual_variable}. In $(iii)$, the component-wise concavity of \( f \) is used again.

    Now suppose for contradiction that \cref{eq:to_be_shown} does not hold, that is, $\alpha_j(A_{t_1}^i,t_1) < \alpha_j(B_{t_2}^i,t_2)$ (equality cannot occur since all subsets have distinct indices). Then, by the definition of the index, $L_j(A_{t_1}^i) \leq L_j(B_{t_2}^i)$, which implies that all inequalities in \cref{eq:computations_concavity} hold with equality. In particular
    \begin{equation}\label{eq:contradiction_eq}
        K(A_{t_1}^i\cup\{i\}) - K(A_{t_1}^i) = K(B_{t_2}^i) - K(B_{t_2}^i\setminus \{i\}).
    \end{equation}
    
If $t_1> t_2$, then combined with $L_j(A_{t_1}^i) = L_j(B_{t_2}^i)$, we would have $\alpha_j(A_{t_1}^i,t_1) > \alpha_j(B_{t_2}^i,t_2)$, a contradiction. If instead $t_1= t_2$, then we would still have $\alpha_j(A_{t_1}^i,t_1) > \alpha_j(B_{t_2}^i,t_2)$ due to the fact that $A_{t_1}^i$ and $B_{t_2}^i$ have equal levels, but $B_{t_2}^i$ is the larger subset out of the two and thus precedes $A_{t_1}^i$ in the lexicographical order, and we again arrive at a contradiction.
Thus, we must have $t_1<t_2$.

However, $t_1<t_2$, combined with \cref{eq:contradiction_eq}, contradicts the fact that the solution $(x,y)$ satisfies Property~\ref{property:LP_solution_reduced} since $\sum_{S:i\in S} y_{t_2}^S>0$.
    Therefore, \cref{eq:to_be_shown} holds, which establishes that $\mathbb{S} = \Rect_j(\bar{\alpha},1;I)$. 
    
    Finally, the area of this level-set corresponds to the total order quantity within $\mathbb{S}$, which is guaranteed to be at least 1 due to the constraint in \eqref{eq: primal}:
    \begin{equation*}
        \Area_j(\Rect_j(\bar{\alpha},1;I)) = \int_{s\in I} \paren{\sum_{S:i\in S} y_s^S} ds \geq 1.
    \end{equation*}
    This completes the proof.
\Halmos\end{proof}

\cref{lemma: consistent levels} shows that for any demand point $(i,c,d)\in \Dcal$, the ordered sets that include $i$ within the demand interval correspond exactly to the ordered subsets within this interval with index up to some threshold $\alpha$. This implies that for any period in this demand interval, the algorithm first fills all ordered sets that include $i$ before filling any sets that do not include $i$. With this observation, we are ready to show the second step of the proof. 

\begin{lemma}\label{lemma: contains partition}
Fix any demand point $(i,c,d)\in\Dcal$ and suppose that $i$ belongs to category $j\in[k]$. The collection $R$ of subsets ordered by the LP solution to satisfy this demand point fully contains at least one constructed level-set from some layer for category $j$. That is, there exists $R'\in\Lcal(j)$ such that $R'\subseteq R$. 
\end{lemma}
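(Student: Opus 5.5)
The plan is to find a descendant in the level-set tree whose time interval lies inside $I:=(c-1,d]$ and whose water-filling index is at most the threshold $\bar\alpha$ from \cref{lemma: consistent levels}. By that lemma, $R=\Rect_j(\bar\alpha,1;I)$, every subset in it contains $i$, and $\Area_j(R)\ge 1$. For an interval $J$, write $a(J):=\Area_j(\Rect_j(\bar\alpha,1;J\cap I))$ for the area of $R$ lying over $J$.

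\emph{Step 1 (water level bound).} Let $J$ be the interval of a node, including the root with $J=(0,T]$, and suppose $a(J)\ge 1/2$. Then $\Area_j(J)\ge a(J)\ge 1/2$, so the algorithm splits $J$. Let $(\alpha_J,\beta_J)$ be the water level, so $\Area_j(\Rect_j(\alpha_J,\beta_J;J))=1/2$. Since $\Area_j(\Rect_j(\bar\alpha,1;J))\ge a(J)\ge 1/2$ and area is monotone in $(\alpha,\beta)$ in lexicographic order, we get $\alpha_J\le\bar\alpha$. Consequently each child $\Rect_j(\alpha_J,\beta_J;J')$ of this node with $J'\subseteq I$ contains only pairs $(S,t)$ with $t\in I$ and $\alpha_j(S,t)\le\bar\alpha$. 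Such a child is therefore contained in $R$, and we may take $R'$ to be that child.

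\emph{Step 2 (descent).} It remains to find a node $J$ with $a(J)\ge 1/2$ that has a child contained in $I$. Start at the root, where $a((0,T])=\Area_j(R)\ge 1$. At the current node $J$ with $a(J)\ge 1/2$, Step 1 guarantees a split into $J_L$ and $J_R$ at some time $\tau$; if either child interval lies in $I$, we stop. Otherwise there are two cases.
\begin{itemize}
\item If $\tau$ lies outside the interior of $J\cap I$, then $J\cap I$ lies within a single child. Move to that child; $a$ is unchanged.
\item If $\tau$ lies in the interior of $J\cap I$ and neither child lies in $I$, then $J\cap I$ must be $I$ itself, so $a(J)\ge 1$. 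Since $a(J_L)+a(J_R)=a(J)\ge 1$, one child has $a\ge 1/2$; move to it.
\end{itemize}
In both cases the invariant $a\ge 1/2$ is preserved. By Step 1, every node reached therefore has children, so the descent can only terminate by reaching a node with a child interval inside $I$.

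\emph{Step 3 (termination).} The main obstacle is ruling out an infinite descent. The tree $\Lcal(j)$ is finite by the termination of Step~1 of Algorithm~\ref{alg:water_filling}, which is established via \cref{lemma: runtime}. If that finiteness is not available, I would argue it directly. Each split of a node with area at least $1/2$ puts area exactly $1/4$ of filled mass on each side, so every child interval has $\Area_j\ge 1/4$. Since the total area is finite, the number of disjoint intervals in any layer is bounded and the areas along a descending path cannot shrink indefinitely. Thus the descent must stop at a node one of whose children lies in $I$, and that child $R'\in\Lcal(j)$ satisfies $R'\subseteq R$ by Step 1.

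A small point to verify is that, once $J$ shares an endpoint with $I$ after the second case, any later split point inside $J\cap I$ immediately yields a child inside $I$. This holds because the child on the $I$-side is bounded by that shared endpoint and by the split point.
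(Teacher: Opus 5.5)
Your proof is correct and follows essentially the paper's route. Like the paper, you take the first split inside $(c-1,d]$, pass to the side carrying at least $1/2$ of the $i$-containing area, and use that subsets containing $i$ are filled first (your bound $\alpha_J\le\bar\alpha$) to show the next split inside that side yields a level-set contained in $R$. Your invariant-based descent with finiteness of $\Lcal(j)$ from \cref{lemma: runtime} makes rigorous the paper's ``eventually'' step for splits landing outside the demand interval, though the Step~3 fallback would not stand alone (bounded width per layer does not bound depth; you would need that each split along a path consumes fresh area $1/4$ above the previous water level), but it is not needed.
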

\begin{proof}[Proof of \cref{lemma: contains partition}]
By feasibility, $\Area_j(R)\geq 1$. In other words, the total area of ordered subsets containing item $i$ within $(c-1,d]$ is at least 1, so the water-filling partitioning algorithm is guaranteed to make a split within this demand interval.

Consider the first time the algorithm makes a split within the demand interval $(c-1,d]$. Suppose that the split is triggered when a total area of $1/2$ has been filled over some interval $(a,b]$ that fully contains the demand interval, and suppose the split occurs at period $s$. This partitions the demand interval $(c-1,d]$ into two sub-intervals: $(c-1,s]$ and $(s,d]$. Since the total area of ordered subsets containing item $i$ within $(c-1,d]$ is at least 1, at least one of the two resulting sub-intervals ($(c-1,s]$ and $(s,d]$) must contain area at least $1/2$ from ordered subsets that include item $i$. Without loss of generality, suppose this is the right sub-interval $(s,d]$.

Because $(s,d]$ contains at least $1/2$ area and has not yet been split, the algorithm must eventually make another split within this interval. This split is triggered when a total area of $1/2$ has been filled over some interval $(s,s']$, where $b\geq s' \geq d$. Therefore, the portion of this filled area that lies within $(s,d]$ is at most $1/2$.

Moreover, within $(s,d]$, subsets containing item $i$ are always filled before any subsets that do not contain $i$. Since $(s,d]$ already contains at least $1/2$ area from subsets containing $i$, then all filled subsets within $(s,d]$ must contain $i$. It follows that one of the level-sets created by this split lies entirely within $(c-1,d]$ and consists only of ordered subsets that include item $i$, as claimed.
\Halmos
\end{proof}

By the design of Algorithm~\ref{alg:main}, every level-set \( R \in \Lcal \) is satisfied. Hence, the two lemmas above directly imply that the ordering schedule $(O_t)_{t \in [T]}$ is a feasible solution. This concludes the proof of \cref{prop:feasibility}.

\subsection{Bounding the cost of the orders}\label{subsection: bounding cost}

To bound the cost of the ordering schedule \( \Ocal^{(j)} \) produced by the dynamic program for each category $j\in [k]$, an alternate randomized algorithm is analyzed. Specifically, instead of the dynamic program in Step 2 of Algorithm~\ref{alg:water_filling}, the randomized procedure samples ordering pairs \( (S, t) \) to satisfy all the level-sets \( R \in \Lcal \). It is shown that the expected total ordering cost of this sampled schedule is at most four times the LP cost in \eqref{eq: primal}. Since the dynamic program computes the minimum-cost solution that satisfies all level-sets, its cost cannot exceed the expected cost of the randomized procedure. Therefore, the cost of the dynamic program solution is also at most four times the cost of the LP. This section provides an intuitive description of the randomized procedure and the proof for bounding its expected cost; the formal algorithm and proof are given in  \cref{subsection: appendix_expected_cost}.

Recall that in the water-filling procedure of Algorithm~\ref{alg:main}, each category \( j \in [k] \) is handled independently. The randomized procedure for constructing an ordering schedule is now described for a fixed category \( j \in[k]\). 

The procedure proceeds layer by layer, starting from the first layer $\Lcal_1$. For each level-set in $\Lcal_1$, a subset is independently sampled with probability proportional to its area within the level-set, where partially included subsets are sampled proportionally to their extent of inclusion. If a subset $S$ ordered at time $t$ is selected, the order $(S,t)$ is added to the schedule and all descendant level-sets whose intervals contain $t$ are marked as satisfied; these level-sets are then removed from further consideration.

The procedure then continues to subsequent layers. For each level-set that has not yet been satisfied, the same sampling procedure is repeated. Importantly, when sampling from a level-set $\Rect_j(\alpha,\beta;I)\in \Lcal_\ell$, the sampling distribution is restricted to the portion of the level-set that is not already covered by its parent $\Rect_j(\alpha_p,\beta_p;I_p)\in \Lcal_{\ell-1}$. That is, sampling occurs only from the region of $\Rect_j(\alpha,\beta;I)$ lying outside its parent. Again, the selected order $(S,t)$ is added to the schedule, and all descendant level-sets containing $t$ are removed.

By construction, this procedure ensures that every level-set is satisfied by at least one order. The randomized algorithm is formalized in Algorithm~\ref{alg:randomized} in \cref{subsection: appendix_expected_cost}, and it achieves cost at most four times that of the optimal objective value $C^\star$ of the LP relaxation \cref{eq: primal}.

\begin{proposition}\label{prop:cost_each_order}
   Fix a category $j\in[k]$, and let $\tilde\Ocal^{(j)}$ denote the schedule of orders returned by the randomized procedure (which is formally described in Algorithm~\ref{alg:randomized}) for category $j$. Then,
   \begin{equation*}
       \Ebb\sqb{\sum_{(S,t)\in\tilde\Ocal^{(j)}} K(S)} \leq 4C^\star.
   \end{equation*}
\end{proposition}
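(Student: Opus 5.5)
The plan is a charging argument in the area picture: every unit of LP area $(S,t)$ pays for the orders that are sampled from it. Since each area element $dA$ at $(S,t)$ carries LP cost $K(S)\,dA$, and the total LP cost is $C^\star=\int K(S)\,dA$ over all LP orders, it suffices to show two things. First, each area element lies in the sampling region of at most one level-set. Second, for each level-set $R$, the probability that $R$ is sampled from, divided by the area of its sampling region, is at most $4$. Then
\[
\Ebb\Big[\sum_{(S,t)\in\tilde\Ocal^{(j)}}K(S)\Big]=\sum_{R\in\Lcal}\Pr[R \text{ sampled}]\cdot\frac{1}{\Area(N(R))}\int_{N(R)}K(S)\,dA\le 4\int K(S)\,dA = 4C^\star ,
\]
where $N(R)$ denotes the sampling region of $R$ (the part of $R$ outside its parent) and $\int_{N(R)}K(S)\,dA$ is $N(R)$'s share of LP cost.

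\textbf{Step 1 (disjointness of sampling regions).} I first show that thresholds are monotone along the tree. Let a child $C$ of $P$ have interval $I_C$; $I_C$ is one half of $P$'s split, so $P$ restricted to $I_C$ has area exactly $1/4$. The layer built on $I_C$ fills area $1/2$ in lexicographic index order starting from the smallest index. Hence it is a superset of $P\cap I_C$: in the order $(\alpha,\beta)$ its threshold is at least $P$'s. Iterating, for a fixed time $t$ the nested chain of level-sets whose intervals contain $t$ have nondecreasing thresholds. Consequently a fixed element $(S,t)$ enters this chain at exactly one depth, i.e.\ it lies in $N(R)$ for exactly one $R$ on the chain. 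Level-sets in the same layer have disjoint intervals, and every level-set containing $t$ lies on this chain. So the regions $N(R)$, $R\in\Lcal$, are pairwise disjoint, and their union lies inside the LP support.

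\textbf{Step 2 (sampled mass per region).} The same computation gives areas. Let $I_C$ be split into $I_{C_1},I_{C_2}$ at the next layer, with children $C_1,C_2$. The layer-$(\ell+1)$ fill on $I_C$ has area $1/2$, of which $P$ accounts for $1/4$, so $\Area(N(C_1))+\Area(N(C_2))=1/4$. (For the first layer the parent is the empty dummy root, and each region has area $1/4$.) If every region had area at least $1/4$, the bound would follow directly. Since only the sibling pair is guaranteed total area $1/4$, I would instead bound $\Pr[C_i \text{ sampled}]$ by the probability that no ancestor's sample landed in $I_{C_i}$. The aim is to show this probability is at most $4\,\Area(N(C_i))$ (capped at $1$). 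The idea is that the ancestors' samples are drawn proportionally to area. Each ancestor's restriction to $I_{C_i}$ that lies outside $N(C_i)$ has area comparable to $1/4$, namely $1/4-\Area(N(C_i))$, up to the $1/4$ coming from the parent. So when $N(C_i)$ is small, $C_i$ is hit from above with probability close to $1$.

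\textbf{Main obstacle.} Step 2 is where I expect the difficulty. In the simple version, only the parent's sample is used: it lands in $I_{C_i}$ with probability $\Area(P\cap I_{C_i})/\Area(N(P))$. But $N(P)$ may be concentrated in the sibling's interval, so this hit probability can be small. I would first try an induction down the tree with the hypothesis
\[
\Pr[R \text{ unsatisfied before its layer}]\le 4\,\Area(N(R)) .
\]
If that fails, I would fall back to a direct accounting in which each level-set's expected sampled cost is charged to the sibling-pair region of total area $1/4$. The probability in that accounting would be bounded using the independence of samples across disjoint subtrees.
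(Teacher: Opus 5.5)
\textbf{Gap: Step~2, the key probability bound, is left unproven.}

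Your framework matches the paper's. Your $N(R)$ is its $Q_r^l$, Step~1 is its disjointness claim, and your final charging identity is how the paper concludes. One small correction there: an area element lies in \emph{at most} one $N(R)$, not exactly one, and a single pair $(S,t)$ can be split across several $N(R)$ via the fractional $\beta$'s.

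Step~2, however, is the whole content of the proof (the paper's \cref{lemma:bounded_ratio_probas}). Moreover, the induction you propose to try does not close. To show that $R$ is rarely left unhit by its ancestors, you need \emph{lower} bounds on the probability that each ancestor actually samples. A one-sided hypothesis $\Pr[A\text{ sampled}]\le 4\Area(N(A))$ only supplies upper bounds. The hypothesis that works is the exact identity $\Pr[A\text{ sampled}]=4\Area(N(A))$ for every level-set $A$. The base case holds with equality, because first-layer regions have area $1/4$ and are always sampled.

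\textbf{How the induction step goes with equality.} Let $R$ lie in layer $l$, and let $U_{l'}$ denote its ancestor in layer $l'<l$.
\begin{itemize}
\item Threshold monotonicity gives $R\cap U_{l-1}=U_{l-1}|_{I_R}=\bigsqcup_{l'<l}N(U_{l'})|_{I_R}$. This set has area $1/4-\Area(N(R))$.
\item Let $E_{l'}$ be the event that $U_{l'}$ samples and its sample lands in $I_R$. Then $R$ is sampled if and only if no $E_{l'}$ occurs.
\item The $E_{l'}$ are pairwise disjoint. If $E_m$ occurs, every $U_{l'}$ with $l'>m$ is deleted, because $I_R\subseteq I_{U_{l'}}$.
\item Conditional on sampling, $U_{l'}$ draws fresh, proportionally to area in $N(U_{l'})$. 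Hence $\Pr[E_{l'}]=4\Area(N(U_{l'}))\cdot\Area(N(U_{l'})|_{I_R})/\Area(N(U_{l'}))=4\Area(N(U_{l'})|_{I_R})$.
\end{itemize}
Summing over $l'$ gives $\Pr[R\text{ sampled}]=1-4\bigl(1/4-\Area(N(R))\bigr)=4\Area(N(R))$.

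This also resolves your ``main obstacle''. You must use all ancestors, not just the parent. The parent's small \emph{conditional} hit probability does not matter, because its \emph{unconditional} contribution is exactly $4\Area(N(P)|_{I_R})$ wherever $N(P)$ is concentrated. The part of $R\cap P$ outside $N(P)$ is covered by higher ancestors, each contributing four times its restricted area. Your fallback of charging to sibling pairs via independence across subtrees is therefore unnecessary.
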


Instead of presenting the full formal proof (deferred to \cref{subsection: appendix_expected_cost}), we provide intuition for the result. Step 1 of Algorithm~\ref{alg:water_filling} constructed a nested collection of level-sets with \emph{constant LP mass per region}; specifically, each region carries LP mass $1/4$. However, within each region, the algorithm selects a subset to order with probability 1. This mismatch—ordering with probability 1 from regions with LP mass $1/4$—is what contributes the factor of 4 in the cost bound of \cref{prop:cost_each_order}.

To relate the algorithm’s behavior to the LP, we aim to show that the probability with which any given subset is ordered is proportional, up to a factor of 4, to its LP ordering quantity. Equivalently, we show that the algorithm samples from any region (i.e., the portion of a level-set not covered by its parent) with probability equal to 4 times its area. Establishing this requires carefully accounting for overlap between level-sets and the fact that some level-sets may already be satisfied by orders selected in earlier layers.

We verify this claim by induction on the layers. In the base case, the algorithm samples from every level-set in the first layer. Each such region has area $1/4$ and we sample from each with probability 1, so the sampling probability is exactly $4 \times \Area(\cdot)$, as desired.

For the inductive step, assume the claim holds for all layers up to $\ell-1$, and consider a level-set $R$ in layer $\ell$. Let $Q$ denote the portion of its parent that overlaps with $R$. The algorithm samples from $R \setminus Q$ only if no subset from $Q$ has been ordered previously. By the induction hypothesis, the probability that an order has already been placed in $Q$ is $4 \cdot \Area(Q)$. Therefore, the probability that a subset is sampled from $R \setminus Q$ is $1-4\cdot \Area(Q),$ which is precisely $4 \cdot \Area(R \setminus Q)$ since $\Area(R) = 1/4$. This establishes the desired proportionality and completes the inductive argument.

Since Algorithm~\ref{alg:water_filling} computes a minimum-cost schedule that satisfies all level-sets, we can use \cref{prop:cost_each_order} to bound the cost of the resulting orders for each category $j\in[k]$. Then, using submodularity, we obtain the following overall cost bound:

\begin{proposition}\label{prop:final_cost}
    The cost for the orders $(O_t)_{t\in[T]}$ returned by Algorithm~\ref{alg:main} is at most $4k C^\star$.
\end{proposition}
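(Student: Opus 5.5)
The plan is to combine three facts: \cref{prop:cost_each_order}, the optimality of the dynamic program in Step~2 of Algorithm~\ref{alg:water_filling}, and subadditivity of $K$ (which follows from submodularity, nonnegativity and monotonicity).

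\textbf{Step 1: per-category bound.} Fix $j\in[k]$. The randomized procedure (Algorithm~\ref{alg:randomized}) returns a schedule $\tilde\Ocal^{(j)}$ that, by construction, satisfies every level-set in $\Lcal(j)$, i.e., every node of the tree. The DP recursion \eqref{eq: DP} computes, for each node $R$, the minimum cost of a collection of orders that satisfies $R$ and all its descendants. For this to cover every feasible collection, I would note that any feasible collection can be charged along the tree: pick one order in $R$; it satisfies exactly the path of descendants whose intervals contain its time, and the remaining side subtrees are handled recursively, which gives the $H(R,t)$ decomposition. Hence $\sum_{R\in\Lcal_1}\DP(R)$ is at most the cost of every realization of $\tilde\Ocal^{(j)}$, and therefore at most its expectation. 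By \cref{prop:cost_each_order}, this gives
\[
\sum_{(S,t)\in\Ocal^{(j)}} K(S)\le 4C^\star .
\]
The only delicate point is that realizations of the randomized procedure may contain orders that satisfy some level-sets ``redundantly.'' This is harmless, because dropping or reassigning orders only lowers cost and the DP minimizes over all assignments.

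\textbf{Step 2: combining categories.} The final order is $O_t=\bigcup\{S:\exists j,\exists t'\in(t-1,t],(S,t')\in\Ocal^{(j)}\}$. Since $K$ is submodular and nonnegative, we have $K(A)+K(B)\ge K(A\cup B)+K(A\cap B)\ge K(A\cup B)$, so $K$ is subadditive. Induction then gives
\[
K(O_t)\le \sum_{j\in[k]}\ \sum_{(S,t')\in\Ocal^{(j)},\,t'\in(t-1,t]} K(S).
\]
Repeated sets are counted with multiplicity, which only loosens the bound. Submodularity of $K$ comes from \cref{lemma: f conditions}. Summing over $t$, and using that the intervals $(t-1,t]$ partition $(0,T]$, yields
\[
\sum_t K(O_t)\le \sum_j \sum_{(S,t')\in\Ocal^{(j)}}K(S)\le 4kC^\star .
\]

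\textbf{Main obstacle.} The substantive work is already in \cref{prop:cost_each_order}. Within this proof, the step needing the most care is justifying that the DP's value lower-bounds every feasible satisfying collection, in particular each realization of the randomized procedure. I would argue this by induction on tree height. For a node $R$, fix the order in the collection used to satisfy $R$ at time $t$. The rest of the collection must satisfy all children off the $t$-path; by induction each such child costs at least $\DP(C)$, which gives exactly $H(R,t)$ via its recursive definition.
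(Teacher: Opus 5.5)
Your proposal is correct and matches the paper's proof: you bound each category's DP cost by the expected cost of the randomized procedure and then by $4C^\star$ via \cref{prop:cost_each_order}, and you combine the $k$ categories through subadditivity of $K$ before summing over periods. The paper asserts only ``by design'' that the DP value lower-bounds any satisfying collection; your induction on tree height supplies that step, once you also note that the off-path subtrees along the $t$-path occupy disjoint time intervals, so their orders are distinct and their costs add.
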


\begin{proof}[Proof of \cref{prop:final_cost}]
    Fix a category $j\in[k]$. In step 2 of Algorithm~\ref{alg:water_filling}, a dynamic program computes the minimum-cost schedule $\Ocal^{(j)}$ that satisfies all level-sets across all layers $\bigcup_{l\geq 1}\Lcal_l$ constructed in step 1 of Algorithm~\ref{alg:water_filling}. By design, this cost is at most the expected cost of the randomized schedule of orders constructed using Algorithm~\ref{alg:randomized}, denoted $\tilde \Ocal^{(j)}$. Hence, \cref{prop:cost_each_order} implies that 
    \begin{equation}\label{eq:cost_category_j_final}
        \sum_{(S,t)\in\Ocal^{(j)}} K(S) \leq \Ebb \sqb{\sum_{(S,t)\in\tilde \Ocal^{(j)}} K(S)} \leq 4C^\star.
    \end{equation}
    
    Let $O_t=\bigcup\{S:\exists j\in[k],\exists t'\in(t-1,t]:(S,t')\in\Ocal^{(j)}\}$ denote the final order placed at time $t\in[T]$ by Algorithm~\ref{alg:main}.
    By the submodularity of the cost function $K$, we have
    \begin{equation*}
        K(O_t)\leq \sum_{j\in[k]} \sum_{\substack{(S,t')\in \Ocal^{(j)}:\\ t'\in(t-1,t]}} K(S).
    \end{equation*}
    Summing over all time periods $t\in[T]$, 
    \begin{equation*}
        \sum_{t\in[T]}K(O_t) \leq \sum_{j\in[k]} \sum_{(S,t)\in\Ocal^{(j)}}K(S) \leq 4kC^\star,
    \end{equation*}
where the last inequality follows from \eqref{eq:cost_category_j_final}.
This completes the proof.
\Halmos\end{proof}

%The state space consists of all level-sets \( R \in \Lcal \), which we have previously shown to be polynomial in number. Each DP step involves minimizing over subsets contained in \( R \), and there are only polynomially many such subsets due to the polynomial-size LP solution. The summation over all descendants of \( R \) can also be computed in polynomial time, as each level-set has a polynomial number of descendants given the overall polynomial number of regions.

\section{SJRP Approximation Algorithm} \label{section: Reduction}
This section describes how to solve the original SJRP by leveraging the approximation algorithm for the simpler DICP, presented in \cref{section: algorithm}, as a subroutine.
The main result shows that the SJRP with cost functions of the form \cref{eq:form_cost} admits an efficient approximation algorithm, with an approximation guarantee that scales with the number of item categories $k$.

\begin{theorem}\label{thm:main}
    Suppose that the ordering cost function $K$ admits a decomposition of the form \cref{eq:form_cost}. Then Algorithm~\ref{alg:Reductions} produces a solution to the SJRP in polynomial time with an $(8k+1)$-approximation guarantee. 
\end{theorem}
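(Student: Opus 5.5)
We will prove \cref{thm:main} by a reduction from the SJRP to the DICP, following the LP-based reductions of \citet{cheung2016submodular} and \citet{nagarajan2016approximation}. The plan is to solve (LP) to obtain an optimal fractional solution $(x,y)$ and split its cost as $C_{LP} = C_K + C_H$. Here $C_K=\sum_{s,S}K(S)y_s^S$ is the ordering part and $C_H=\sum_{(i,d)}\sum_s q_{id}h^i_{sd}x^i_{sd}$ is the holding part.

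\textbf{Constructing the DICP instance.} For each demand point $(i,d)$, let $c_{id}$ be the latest period $c$ such that $\sum_{s=c}^{d}x^i_{sd}\ge 1/2$. This is the ``half-mass point'' of the LP's service distribution. Since $h^i_{sd}$ is non-increasing in $s$, at least half of the LP's fractional service of $(i,d)$ comes from periods $s\le c_{id}$, which gives
\[
q_{id}h^i_{c_{id}d}\le 2\sum_s q_{id}h^i_{sd}x^i_{sd}.
\]
The intervals $[c_{id},d]$ for a fixed item $i$ may overlap, so we make them disjoint by a greedy scan over deadlines, as in the standard JRP reduction. Process demands of item $i$ in increasing order of $d$. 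Select $(i,d)$ as a DICP demand $(i,c_{id},d)$ if its interval does not intersect the previously selected one. Every unselected demand $(i,d')$ then has an interval meeting that of some selected $(i,d)$ with $d\le d'$, so $c_{id'}\le d$. Any order of $i$ placed in $[c_{id},d]$ therefore lies at or after $c_{id'}$ and before $d'$, and serves $(i,d')$ at holding cost at most $q_{id'}h^i_{c_{id'}d'}$ by monotonicity of $h$.

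\textbf{Cost accounting.} Doubling the LP variables, $2y$ restricted to each interval gives mass $\sum_{s\in[c_{id},d]}\sum_{S\ni i}2y^S_s\ge 2\sum_{s\in[c_{id},d]}x^i_{sd}\ge 1$, so $2y$ is feasible for \eqref{eq: primal}. The DICP LP optimum $C^\star$ is thus at most $2C_K$. Apply Algorithm~\ref{alg:main} to this instance. By \cref{prop:feasibility} and \cref{prop:final_cost}, it returns a feasible schedule of ordering cost at most $4kC^\star\le 8kC_K$. Every demand, selected or not, is served by the latest order of $i$ at or before its deadline, and that order lies at or after $c_{id}$. Its holding cost is therefore at most $q_{id}h^i_{c_{id}d}\le 2\sum_s q_{id}h^i_{sd}x^i_{sd}$, so the total holding cost is at most $2C_H$.

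The total cost is then at most $8kC_K+2C_H$. For the ratio $8k+1$ this must be sharpened to $8kC_K+C_H$. The natural fix is to choose the threshold $c_{id}$ at fraction $1/2$ more carefully: use a random threshold $\theta\in(0,1]$ shared across demands, and bound the holding cost by $\int_0^1 h(\theta)\,d\theta = $ the LP holding cost. Scaling $y$ by $1/\theta$ costs $\mathbb E[1/\theta]$, which diverges, so this is not directly viable. Instead we will make the argument deterministic: fix the fraction at $1/2$, so the DICP ordering bound is $8kC_K$. For the holding cost we will follow \citet{cheung2016submodular} and charge via the LP dual, so that the holding of the greedy reduction is charged at most once, $C_H$, against the LP. I expect this tightening of the holding bound to be the main obstacle; if it fails, the argument still yields an $O(k)$ bound, namely $\max(8k,2)$.

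\textbf{Running time.} Polynomiality follows from solving (LP) via the ellipsoid method, the polynomial construction of the intervals, and \cref{lemma: runtime}.
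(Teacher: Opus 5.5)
\textbf{The gap is in the greedy disjointification step.} When an unselected interval $[c_{id'},d']$ meets the last selected interval $[c_{id},d]$ with $d\le d'$, you only learn that $c_{id'}\le d$. To conclude that every order of $i$ in $[c_{id},d]$ lies at or after $c_{id'}$, you would need $c_{id'}\le c_{id}$, and that does not follow. In fact, if the LP serves each demand from the latest available mass, $c_{id}$ is nondecreasing in $d$, so typically $c_{id}\le c_{id'}$, which is the wrong direction.

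Here is a concrete instance. Take $z^i_1=1$, $z^i_4=z^i_6=\tfrac12-\varepsilon$, and $z^i$ zero elsewhere. Then $c_{i5}=1$ and $c_{i7}=4$, so $[1,5]$ is selected and $[4,7]$ is not. A feasible DICP schedule may order $i$ in $[1,7]$ only at period $1$, in which case $(i,7)$ is held from period $1<c_{i7}$. Your half-mass inequality $q_{id'}h^i_{c_{id'}d'}\le 2\sum_s q_{id'}h^i_{sd'}x^i_{sd'}$ says nothing about $h^i_{1,7}$, and the LP serves only a $2\varepsilon$ fraction of $(i,7)$ at period $1$. So nothing in your argument bounds the holding cost of unselected demands, and the claim that every demand is served at or after its own $c_{id}$ is false. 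Scanning by left endpoints has the mirror defect: the order may fall after $d'$. Per-demand intervals cut from each $x^i_{\cdot d}$ cannot simply be thinned to a disjoint family while keeping your primal holding bound.

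\textbf{How the paper avoids this.} Algorithm~\ref{alg:Reductions} never selects among overlapping intervals. It cuts each item's cumulative mass of $z^i_s=\sum_{S\ni i}y^S_s$ into consecutive half-unit quantile intervals $(a^i_{\kappa-1},b^i_\kappa]$, which are disjoint by construction; a refined grid of atoms makes this a discrete DICP instance. Since $x^i_{sd}\le z^i_s$ and $\sum_s x^i_{sd}=1$, the window $(s^\star_{(i,d)},d]$ carries $z^i$-mass at least $1$ and therefore contains a whole quantile interval (\cref{lemma: Reduction_feasible}). Thus \emph{every} demand, with no exceptions, gets an order of $i$ inside its LP support window. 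Complementary slackness then gives $q_{id}h^i_{sd}\le M^i_d$ for all $s\ge s_{(i,d)}$, so total holding is bounded by the full LP value (\cref{lemma: Reduction_holding}). With $2y$ feasible for the DICP LP, the total is at most $8kC_K+(C_K+C_H)\le(8k+1)\,\mathrm{LP}$.

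\textbf{A possible repair of your route.} Normalize $x$ so each demand is served from the latest available $z^i$-mass. An unselected $(i,d')$ then satisfies $\sum_{s=c_{id}+1}^{d'}z^i_s<1$, hence $s_{(i,d')}\le c_{id}$. Then use the dual bound instead of the half-mass bound. That is a different argument from the one you wrote, and it analyzes a reduction other than Algorithm~\ref{alg:Reductions}.

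\textbf{The ``main obstacle'' you identify is not one.} Since $k\ge 1$, we have $8kC_K+2C_H\le 8k(C_K+C_H)\le(8k+1)\,\mathrm{OPT}$, so no sharpening of the holding bound (random threshold or otherwise) is needed. Your fallback $\max(8k,2)=8k$ already meets the stated ratio.
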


We begin with an optimal solution to (LP), the LP relaxation of the SJRP, whose objective value provides a lower bound on the optimal integral solution. Using this solution, a corresponding DICP instance is constructed such that the \textit{ordering cost} incurred by the fractional solution to its LP relaxation is at most twice that of (LP). In \cref{section: algorithm}, it was shown how to round this fractional DICP solution to an integer one, with an ordering cost at most $4k$ times larger, where $k$ is the number of item categories from the decomposition given in \cref{def:form_cost}. 
This section shows that the integral solution obtained for the DICP yields a feasible solution to the original SJRP instance and bounds both its ordering and holding cost. %Moreover, it establishes that the holding cost incurred by this integral solution is bounded by the cost of the LP relaxation of the SJRP.

% with ordering cost at most $8k$ times the cost of (LP), and a holding cost bounded by it.

\subsection{Algorithm}\label{subsection: Reduction alg}
This subsection describes how to construct a DICP instance $\mathcal{I}_{\text{DICP}}$ from a given SJRP instance $\mathcal{I}$. Intuitively, for each item, a sequence of time intervals is identified based on cumulative fractional order quantities reaching 0.5. These intervals become the demand intervals for the corresponding item in the constructed DICP instance. This intuition is formalized below.

Fix an optimal LP solution $(x,y)$ to the linear relaxation (LP) of $\Ical$. For convenience, we use a continuous relaxation of the time periods $[T]$: each discrete time period $t$ is viewed as the continuous interval $(t-1,t]$. In particular, the entire time horizon becomes $(0,T]$. We then naturally extend the LP solution to this continuous setting by defining $x_{rd}^i=x_{sd}^i$ and $y_{r}^S=y_s^S$ for all $r\in(s-1,s]$, for each demand point $(i,d)$ and item subset $S\subseteq [N]$. 

Fix an item $i\in[N]$ and define $z_s^i := \sum_{S:i\in S} y_s^S$ for $s\in(0,T]$, which corresponds to the total fractional order of item $i$ in period $s$. We can also view $z_s^i$ as the density of a measure over the interval $(0,T]$. Next, we define the set of all $1/2$-quantiles of this measure. Precisely, let $K^i=\lfloor 2\int_0^T z_s^i ds \rfloor$ denote the number of such quantiles within the time horizon. For each $\kappa \in\{0,\ldots,K^i\}$, we define the lower and upper $\kappa /2$-quantile as follows:
\begin{align}\label{eq:definition_a_b}
    &a_\kappa^i:= \max Q_\kappa^i, \quad b_\kappa^i:= \min Q_\kappa^i, \\
    &\text{where}\quad Q_\kappa^i:=\set{t\in[0,T]: \int_{s=0}^{t} z_s^i ds = \frac{\kappa}{2}}.\notag
\end{align}

% We now define the reduced DICP instance, denoted $\Ical_{DICP}$. For each item $i\in[N]$, the demand intervals are given by the intervals $(a_{\kappa-1}^i,b_\kappa^i]$ for all $\kappa\in[K^i]$. We have now obtained an instance for which the demand intervals are in continuous time. To turn it into a discrete problem, recall that each discrete time period $t$ in the original instance $\Ical$ from the SJRP was associated to the continuous interval $(t-1,t]$. We therefore map each continuous-time interval $(a_{\kappa-1}^i,b_\kappa^i]$ to the discrete-time interval $\sqb{\lfloor a_{\kappa-1}^i\rfloor+1,\lceil b_\kappa^i \rceil}$ for all $i\in[N]$ and $\kappa\in[K^i]$. This concludes the construction of the reduced instance $\Ical_{DICP}$.

We now define the reduced DICP instance, denoted $\Ical_{DICP}$. The reduced instance is an ordinary discrete-time DICP instance, but its periods are obtained from a refinement of the original time horizon. Define
\begin{equation}\label{eq:refined_grid}
\Ecal:=\{0,1,\ldots,T\}\cup\{a_\kappa^i,b_\kappa^i:i\in[N],\kappa\in[K^i]\}.
\end{equation}
Let $0=\tau_0<\tau_1<\cdots<\tau_M=T$ denote the sorted elements of $\Ecal$, and let $J_m:=(\tau_{m-1},\tau_m]$ for $m\in[M]$. Since $\{0,1,\ldots,T\}\subseteq\Ecal$, each atom $J_m$ is contained in a unique original period $(t(m)-1,t(m)]$, where $t(m)\in[T]$.

For each item $i\in[N]$ and $\kappa\in[K^i]$, let $c_\kappa^i,d_\kappa^i\in[M]$ be such that
\begin{equation}\label{eq:atom_index_interval}
(a_{\kappa-1}^i,b_\kappa^i]=\bigcup_{m=c_\kappa^i}^{d_\kappa^i}J_m.
\end{equation}
The reduced DICP instance has horizon $[M]$ and contains the demand interval $(i,c_\kappa^i,d_\kappa^i)$ for every $i\in[N]$ and $\kappa\in[K^i]$. Since the continuous intervals $(a_{\kappa-1}^i,b_\kappa^i]$ are disjoint for each fixed item $i$, the corresponding discrete intervals $[c_\kappa^i,d_\kappa^i]$ are also disjoint. This concludes the construction of the reduced instance $\Ical_{DICP}$.

\begin{algorithm}[t]

\caption{Solving an SJRP Instance via Reduction to a DICP Instance}\label{alg:Reductions}

%\SetAlgoLined
\LinesNumbered
\everypar={\nl}

% \hrule height\algoheightrule\kern3pt\relax
\KwIn{SJRP instance}
\KwOut{Feasible ordering schedule for the SJRP instance}

\vspace{3mm}

Compute an LP solution $x^i_{sd}, y_s^S$ for $(i,d)\in \mathcal D, s\in [T], S\subseteq [N]$ to the SJRP instance

Continuous time extension: 
$x_{rd}^i:=x_{sd}^i$ and $y_{r}^S:=y_s^S$ for $(i,d)\in \mathcal D, s\in [T], r\in(s-1,s], S\subseteq [N]$

For each $i\in[N]$, let $z_s^i:=\sum_{S:i\in S} y_s^S$ for $s\in(0,T]$, and let $K^i:=\lfloor 2\int_0^T z_s^i ds \rfloor$

For each $i\in[N]$ and $\kappa\in[K^i]$, define $a_\kappa^i,b_\kappa^i$ as in \cref{eq:definition_a_b}

Construct the reduced DICP instance $\Ical_{DICP}$ using \cref{eq:refined_grid,eq:atom_index_interval}

Run Algorithm~\ref{alg:main} on $\Ical_{DICP}$ and obtain the output schedule $(\tilde O_m)_{m\in[M]}$

For each $t\in[T]$, define $\widehat O_t:=\bigcup_{m:t(m)=t}\tilde O_m$

Define $O_s:= \set{i\in[N]: \exists (i,d)\in\Dcal \text{ such that } 
s=\max\{r\leq d: i\in \widehat O_r\}}$

\Return $(O_s)_{s\in[T]}$

% \hrule height\algoheightrule\kern3pt\relax
\end{algorithm}

It remains to describe how to reconstruct a solution to $\Ical$ from a solution to $\Ical_{DICP}$. Let $(\tilde O_m)_{m\in[M]}$ denote a feasible integral schedule of orders for $\Ical_{DICP}$. We first map this atom-indexed schedule back to the original time periods by defining, for every $t\in[T]$,
\[
\widehat O_t:=\bigcup_{m:t(m)=t}\tilde O_m.
\]
To reconstruct a feasible solution for $\Ical$, we assign each demand point $(i,d)$ to the latest order in $(\widehat O_t)_{t\in[T]}$ that includes item $i$ and occurs no later than time $d$. Formally, we define
\[
s_{(i,d)}:=\max\{r\leq d:i\in \widehat O_r\},
\]
and fulfill demand $(i,d)$ at time $s_{(i,d)}$. The resulting schedule of discrete-time orders for $\Ical$ is then defined as follows for every period $s\in[T]$:
\begin{equation}\label{eq:construction_final_schedules}
    O_s:=\set{i\in[N]:\exists (i,d)\in\Dcal,\ s_{(i,d)}=s}.
\end{equation}
The entire procedure is summarized in Algorithm~\ref{alg:Reductions}.

\subsection{Correctness}

We now show that the resulting schedule of orders $(O_s)_{s\in[T]}$, defined in \cref{eq:construction_final_schedules}, is feasible for the SJRP and incurs only a constant-factor loss in the approximation guarantee relative to the cost of the associated DICP instance. This is formalized in the following result:

\begin{proposition}\label{proposition: Reduction}
    Consider any SJRP instance $\Ical$. Algorithm~\ref{alg:Reductions} constructs, in polynomial time, a corresponding DICP instance $\Ical_{DICP}$  such that if the solution for $\Ical_{DICP}$ has cost at most $\gamma$ times the cost of its LP relaxation, then the schedule constructed in \cref{eq:construction_final_schedules} is a $(2\gamma+1)$-approximation for $\mathcal{I}$.
\end{proposition}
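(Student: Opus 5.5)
Let $C_{\mathrm{ord}}$ and $C_{\mathrm{hold}}$ denote the ordering and holding parts of the optimal (LP) value, so that $\mathrm{OPT}_{LP}=C_{\mathrm{ord}}+C_{\mathrm{hold}}$ lower-bounds the optimum of $\Ical$. The proof has three parts. First, we show that $\Ical_{DICP}$ has a cheap fractional solution with LP cost at most $2C_{\mathrm{ord}}$. Second, we show that the reconstructed schedule is feasible and has ordering cost at most $2\gamma C_{\mathrm{ord}}$. Third, we show that its holding cost is at most $2\,\mathrm{OPT}_{LP}$; more precisely, we aim for $2C_{\mathrm{ord}}+\ldots$ bookkeeping that totals $(2\gamma+1)\mathrm{OPT}_{LP}$. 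Polynomial time is immediate: $M\le T+2\sum_i K^i+1$, and each $K^i$ is polynomially bounded, because the LP solution can be taken with $z^i_s\le 1$ without loss of generality.

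\textbf{Fractional DICP solution.} On atom $J_m$ set $\tilde y_m^S := 2\,y_{t(m)}^S\,|J_m|$. For the demand interval $(a^i_{\kappa-1},b^i_\kappa]$, the definition of the quantiles in \eqref{eq:definition_a_b} gives
\[
\int_{a^i_{\kappa-1}}^{b^i_\kappa} z^i_s\,ds=\tfrac12 ,
\]
so $\sum_{m=c^i_\kappa}^{d^i_\kappa}\sum_{S\ni i}\tilde y^S_m=1$. Hence $\tilde y$ is feasible for the DICP relaxation of $\Ical_{DICP}$. Its cost is
\[
2\sum_t\sum_S K(S)\,y^S_t = 2C_{\mathrm{ord}},
\]
since the atoms partition each period. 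The rounded schedule $\tilde O$ therefore costs at most $2\gamma C_{\mathrm{ord}}$. Merging atoms into $\widehat O_t$ only lowers cost by subadditivity of $K$, since $K$ is submodular and nonnegative. Passing from $\widehat O$ to $O_s\subseteq\widehat O_s$ also only lowers cost by monotonicity. So the ordering cost of $O$ is at most $2\gamma C_{\mathrm{ord}}$.

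\textbf{Feasibility and holding cost.} Each item $i$ appears in $\widehat O$ inside every interval $(a^i_{\kappa-1},b^i_\kappa]$. For a demand point $(i,d)$ we then need:
\begin{enumerate}[label=(\roman*)]
\item some order of $i$ occurs at or before $d$, so that $s_{(i,d)}$ is well defined;
\item $q_{id}h^i_{s_{(i,d)}d}$ is bounded by a constant times the LP holding cost of $(i,d)$, plus an amount chargeable to ordering.
\end{enumerate}
For (i): the LP satisfies $\sum_{s\le d}x^i_{sd}=1$ and $x^i_{sd}\le z^i_s$, so $\int_0^d z^i\ge1$. Hence the first interval $(a^i_0,b^i_1]$ with $b^i_1\le d$ lies before $d$, and $i$ is ordered in it.

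For (ii): let $\kappa$ be the largest index with $b^i_\kappa\le d$, so that $\int_{b^i_\kappa}^{d}z^i<\tfrac12$ up to the atom rounding. Then $i$ is ordered at some time $\ge a^i_{\kappa-1}$, and $\int_{a^i_{\kappa-1}}^{d} z^i<1$. Thus $s_{(i,d)}$ lies after the point at which at least $1$ unit of mass of $z^i$ measured backward from $d$ has not yet been accumulated. Since $x^i_{sd}\le z^i_s$ and $\sum_s x^i_{sd}=1$, the LP must place positive $x^i_{\cdot d}$ mass at times no later than $s_{(i,d)}$'s period.

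The weak point is that this only gives $h^i_{s_{(i,d)}d}\le h^i_{s d}$ for some LP-used $s$, not a bound by the average. The standard fix, following \citet{levi2004primal,cheung2016submodular}, is to use the $1/2$-quantile: at least $1/2$ of the $x^i_{\cdot d}$ mass sits at times $\le a^i_{\kappa-1}$ or earlier. Because $h$ is nonincreasing in $s$, this gives
\[
q_{id}h^i_{s_{(i,d)}d}\le 2\sum_s q_{id}h^i_{sd}x^i_{sd}.
\]

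\textbf{Main obstacle and constant.} The main obstacle is the holding-cost step: choosing the quantile accounting so that the constant is exactly what is needed. A factor-$2$ holding bound yields $2\gamma C_{\mathrm{ord}}+2C_{\mathrm{hold}}$, which is at most $(2\gamma+1)\mathrm{OPT}_{LP}$ only if $\gamma\ge1$ and we charge cleverly. I would first try to sharpen the argument to show that $s_{(i,d)}$ is at least the LP's ``last-$1$-unit'' point, giving factor $1$ on $C_{\mathrm{hold}}$. Failing that, I would settle for $\max(2\gamma,2)\le 2\gamma+1$ when $\gamma\ge1$, which is what the DICP guarantee provides.
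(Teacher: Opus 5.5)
\textbf{Verdict.} Your ordering-cost argument and your feasibility argument are correct and match the paper; the holding-cost step has a genuine gap. On the ordering side, you scale to $\tilde y^S_m=2|J_m|y^S_{t(m)}$, then use subadditivity when merging atoms into $\widehat O_t$ and monotonicity when passing to $O_s\subseteq\widehat O_s$, exactly as the paper does.

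\textbf{Where the holding-cost step fails.} You assert that at least half of the $x^i_{\cdot d}$ mass sits at times $\le a^i_{\kappa-1}$. This is false in general, and it points the wrong way. We have $\int_{a^i_{\kappa-1}}^{d}z^i\ge\int_{a^i_{\kappa-1}}^{b^i_\kappa}z^i=\tfrac12$. Since $h^i_{sd}$ is nonincreasing in $s$, an optimal LP may choose $x^i_{\cdot d}$ to fill $z^i$ greedily backward from $d$. Then \emph{at most} half of that mass lies before $a^i_{\kappa-1}$, and it can be an arbitrarily small amount $\eta$.

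Nothing in your argument excludes the following configuration:
\begin{itemize}
\item the last complete interval $(a^i_{\kappa-1},b^i_\kappa]$ starts inside a period $\sigma$ carrying only $O(\eta)$ of $x^i_{\cdot d}$;
\item later periods have zero holding cost;
\item the DICP solution covers that interval only through its atom inside period $\sigma$ and does not order $i$ again before $d$.
\end{itemize}
Then $(i,d)$ is served at cost $q_{id}h^i_{\sigma d}$, while its own LP holding cost is only $O(\eta)\,q_{id}h^i_{\sigma d}$. So no per-demand bound $q_{id}h^i_{s_{(i,d)}d}\le c\sum_s q_{id}h^i_{sd}x^i_{sd}$ is available. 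Neither your factor-$2$ bound on $C_{\mathrm{hold}}$ nor the $\max(2\gamma,2)$ accounting built on it is established. Separately, your opening target of $2\,\mathrm{OPT}_{LP}$ for holding would only give $2\gamma+2$.

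\textbf{The missing idea.} Charge the holding cost to the LP dual, not to the primal holding cost of each demand. The weaker fact you did prove is exactly what is needed: $s_{(i,d)}$ is no earlier than some period $s'$ with $x^i_{s'd}>0$.

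Take optimal dual variables $M^i_d$ for $\sum_s x^i_{sd}=1$ and $L^i_{sd}\ge 0$ for $x^i_{sd}\le\sum_{S\ni i}y^S_s$. Complementary slackness at $x^i_{s'd}>0$ gives $M^i_d=q_{id}h^i_{s'd}+L^i_{s'd}\ge q_{id}h^i_{s'd}\ge q_{id}h^i_{s_{(i,d)}d}$, where the last step uses monotonicity of $h$ in $s$.

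Summing over demand points and using weak duality, the holding cost of $(O_s)$ is at most $\sum_{(i,d)}M^i_d\le\mathrm{OPT}_{LP}$. It is paid for by the \emph{entire} LP objective, ordering cost included, and that is what absorbs the bad configuration above. Adding the ordering bound $2\gamma C_{\mathrm{ord}}\le 2\gamma\,\mathrm{OPT}_{LP}$ gives $(2\gamma+1)\mathrm{OPT}_{LP}$ directly, with no condition on $\gamma$. This is how the paper proves its holding-cost lemma.
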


Since the LP relaxation of an SJRP instance can be solved in polynomial time, it is readily verified that Algorithm~\ref{alg:Reductions} also constructs the DICP instance in polynomial time. Indeed, the refined grid contains only the original integer endpoints and the quantile endpoints generated by the reduction, so $M$ is polynomial in the input size and the support size of the LP solution. To complete the proof of \cref{proposition: Reduction}, we proceed in three steps: we establish feasibility in \cref{lemma: Reduction_feasible}, bound the holding cost in \cref{lemma: Reduction_holding}, and bound the ordering cost in \cref{lemma: Reduction_ordering} and \cref{lemma: atom_mapping_cost}. The proofs of all these lemmas are provided in Appendix~\ref{section: appendix Reduction}.

% To show that the schedule constructed in \cref{eq:construction_final_schedules} yields a feasible solution to the original SJRP instance $\Ical$, we show that for each demand point $(i,d)$ in $\Ical$, there is a demand interval in $\Ical_{DICP}$ that lies entirely within its \textit{active interval}---the smallest interval containing all times during which the LP solution places fractional orders to fulfill this demand $\set{s: x^i_{sd}>0}$. Since $i$ is guaranteed to be ordered within each of its demand intervals, any feasible solution to $\Ical_{DICP}$ is also feasible for $\Ical$.

To show that the schedule constructed in \cref{eq:construction_final_schedules} yields a feasible solution to the original SJRP instance $\Ical$, we show that for each demand point $(i,d)$ in $\Ical$, there is a demand interval in $\Ical_{DICP}$ whose associated atoms lie within the continuous interval $(s^\star_{(i,d)},d]$. Since the DICP solution orders item $i$ in one of these atoms, the mapped schedule $(\widehat O_t)_{t\in[T]}$ orders item $i$ no later than period $d$.

\begin{lemma}\label{lemma: Reduction_feasible}
For any demand point $(i,d)$ in $\Ical$, there exists $\kappa\in[K^i]$ such that
\begin{equation}\label{eq:important_property_Reduction}
        (a_{\kappa-1}^i,b_\kappa^i]\subseteq (s^\star_{(i,d)},d],
\end{equation}
where $s^\star_{(i,d)}:=\inf\set{s\in(0,d]:x_{sd}^i>0}$. Hence, the schedule $(O_s)_{s\in[T]}$ satisfies all demand points in $\Ical$.
\end{lemma}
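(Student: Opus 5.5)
The plan is to show that the window $(s^\star_{(i,d)},d]$ carries at least one unit of fractional $i$-ordering mass, and then to argue that any window of cumulative mass at least $1$ must contain a whole quantile interval $(a_{\kappa-1}^i,b_\kappa^i]$.

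\textbf{Step 1 (mass in the window).} Write $s^\star:=s^\star_{(i,d)}$ and $F(t):=\int_0^t z_s^i\,ds$, which is continuous and nondecreasing. Since $x$ is piecewise constant on the periods, $s^\star$ is an integer $t_0-1$, where $t_0$ is the earliest discrete period with $x^i_{t_0 d}>0$. Constraints 1 and 2 then give
\begin{equation*}
1=\sum_{s=t_0}^d x^i_{sd}\le \sum_{s=t_0}^d z_s^i = F(d)-F(s^\star).
\end{equation*}

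\textbf{Step 2 (choice of $\kappa$).} Let $\kappa$ be the smallest integer with $\kappa-1\ge 2F(s^\star)$, so that $(\kappa-1)/2\ge F(s^\star)$. Minimality gives $(\kappa-1)/2<F(s^\star)+1/2$, hence
\begin{equation*}
\frac{\kappa}{2}<F(s^\star)+1\le F(d)\le F(T).
\end{equation*}
Consequently $\kappa\le K^i$ and $\kappa\ge1$, so $\kappa\in[K^i]$. By continuity of $F$, both sets $Q_{\kappa-1}^i$ and $Q_\kappa^i$ are nonempty.

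\textbf{Step 3 (containment).} There are two inclusions to check.
\begin{itemize}[leftmargin=*]
\item \emph{Right endpoint.} Since $F(d)\ge\kappa/2$, the intermediate value theorem gives a point of $Q_\kappa^i$ in $[0,d]$, so $b_\kappa^i=\min Q_\kappa^i\le d$.
\item \emph{Left endpoint.} This is the delicate step. I need $a_{\kappa-1}^i\ge s^\star$, where $a_{\kappa-1}^i=\max Q^i_{\kappa-1}$ is the \emph{last} time $F$ equals $(\kappa-1)/2$.
 \begin{itemize}
 \item If $(\kappa-1)/2>F(s^\star)$, every point of $Q_{\kappa-1}^i$ exceeds $s^\star$ by monotonicity.
 \item If $(\kappa-1)/2=F(s^\star)$, then $s^\star\in Q^i_{\kappa-1}$, so $a^i_{\kappa-1}\ge s^\star$.
 \end{itemize}
\end{itemize}
Either way the two inclusions give $(a_{\kappa-1}^i,b_\kappa^i]\subseteq(s^\star,d]$.

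The main obstacle is this endpoint bookkeeping. Using the maximal point $a$ on the left and the minimal point $b$ on the right is exactly what makes flat stretches of $F$ harmless. The edge case $\kappa-1=0$ is covered by the same argument, since $0\in Q_0^i$.

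\textbf{Step 4 (feasibility).} By \eqref{eq:atom_index_interval}, the interval found in Step 3 is the union of the atoms $J_m$ for $m\in[c^i_\kappa,d^i_\kappa]$. The DICP solution orders $i$ in some such atom $m$, and $J_m\subseteq(s^\star,d]$ forces $t(m)\le d$ because $d$ is an integer. Hence $i\in\widehat O_{t(m)}$ with $t(m)\le d$, so $s_{(i,d)}$ is well defined. Then $i\in O_{s_{(i,d)}}$ by \eqref{eq:construction_final_schedules}, and demand $(i,d)$ is served no later than $d$, which proves that $(O_s)_{s\in[T]}$ is feasible.
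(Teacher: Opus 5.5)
Your proof is correct and follows essentially the same route as the paper. You use Constraints 1--2 to show the window $(s^\star_{(i,d)},d]$ carries at least one unit of $i$-mass, and pick $\kappa$ with $F(s^\star_{(i,d)})\le(\kappa-1)/2<\kappa/2\le F(d)$. You then use that $a^i_{\kappa-1}$ is the maximal and $b^i_\kappa$ the minimal quantile point, and finish with the atom-mapping argument. You are more explicit than the paper in constructing $\kappa$, checking $\kappa\le K^i$, and handling the equality case $F(s^\star_{(i,d)})=(\kappa-1)/2$, all of which the paper leaves implicit.
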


We next bound the holding cost. \cref{lemma: Reduction_holding} shows that the holding cost incurred by the schedule $(O_s)_{s\in[T]}$ is at most the total cost of the optimal LP solution $(x,y)$ to (LP), the LP relaxation of $\Ical$. This result also appears in \cite{cheung2016submodular}; for completeness, we include a proof in the appendix based on a weak duality argument.

\begin{lemma}\label{lemma: Reduction_holding}
The holding cost incurred by the schedule $(O_s)_{s\in[T]}$ is at most the total cost of the optimal LP solution $(x,y)$ to (LP), the LP relaxation of $\Ical$.
\end{lemma}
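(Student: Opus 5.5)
The plan is a weak-duality argument. First I show that every demand point is served no earlier than the earliest period the LP uses for it. Then I show, via complementary slackness, that the holding cost at that earliest period is paid for by the dual variable of the demand point.

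\textbf{Step 1 (service time is no earlier than the LP support).} Fix $(i,d)\in\Dcal$ and let $s_0:=\min\{s\in[d]: x^i_{sd}>0\}$ be the earliest discrete period in the LP support. In the continuous extension this gives $s^\star_{(i,d)}=s_0-1$, so $(s^\star_{(i,d)},d]=(s_0-1,d]$. By \cref{lemma: Reduction_feasible} there is some $\kappa$ with $(a^i_{\kappa-1},b^i_\kappa]\subseteq(s_0-1,d]$. The DICP solution orders $i$ in some atom $J_m$ inside this interval, and that atom lies in an original period $t(m)\in[s_0,d]$. Hence $i\in\widehat O_{t(m)}$, and so $s_{(i,d)}=\max\{r\le d: i\in\widehat O_r\}\ge s_0$. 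Since $h^i_{sd}$ is non-increasing in $s$, the holding cost of $(i,d)$ in our schedule satisfies $q_{id}h^i_{s_{(i,d)}d}\le q_{id}h^i_{s_0 d}$.

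\textbf{Step 2 (dual of (LP)).} I write the dual with a free variable $v_{id}$ for Constraint 1 and $w_{isd}\ge 0$ for Constraint 2. Its constraints are
\[
v_{id}\le q_{id}h^i_{sd}+w_{isd}\quad\text{for all } s\le d,
\qquad
\sum_{i\in S}\sum_{d\ge s} w_{isd}\le K(S)\quad\text{for all } s,S,
\]
and its objective is $\max\sum_{(i,d)}v_{id}$. Let $(v,w)$ be an optimal dual solution. Because $x^i_{s_0 d}>0$, complementary slackness makes the first constraint tight at $s=s_0$, so
\[
v_{id}=q_{id}h^i_{s_0d}+w_{is_0d}\ge q_{id}h^i_{s_0d},
\]
using $w\ge0$.

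\textbf{Step 3 (summing).} Summing over $\Dcal$, the holding cost of $(O_s)_{s\in[T]}$ is at most
\[
\sum_{(i,d)\in\Dcal}q_{id}h^i_{s_0d}\le\sum_{(i,d)\in\Dcal}v_{id},
\]
and by strong duality the right-hand side equals the optimal (LP) value. This proves \cref{lemma: Reduction_holding}.

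The main obstacle I expect is Step 1. It requires translating between continuous time, atoms and discrete periods carefully: the infimum defining $s^\star$ must coincide with $s_0-1$, and an atom contained in $(s_0-1,d]$ must map to a period $t(m)$ in $[s_0,d]$ rather than to $s_0-1$. The half-open convention $J_m=(\tau_{m-1},\tau_m]\subseteq(t(m)-1,t(m)]$ is exactly what makes this work.

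A second point to handle is that complementary slackness needs a primal–dual optimal pair. If $(x,y)$ is only some optimal primal solution, I would take any optimal dual, which exists because the LP is feasible and bounded. Complementary slackness holds for every pair of optimal primal and dual solutions, so the argument goes through.
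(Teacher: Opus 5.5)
Your proposal is correct and follows the paper's proof essentially step for step. Both use complementary slackness at the earliest LP support period, nonnegativity of the dual variables for Constraint 2, monotonicity of $h^i_{sd}$ in $s$, \cref{lemma: Reduction_feasible} to show the service time is no earlier than that period, and duality to bound the sum of the demand duals by the LP value. Your Step 1 is slightly more explicit than the paper, which only cites \cref{lemma: Reduction_feasible} ``and its proof'' for the atom-to-period translation, and using strong rather than weak duality is harmless since only the inequality is needed.
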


We next bound the ordering cost in the LP relaxation of $\mathcal{I}_{DICP}$ in \cref{lemma: Reduction_ordering}, which follows by observing that scaling $(x, y)$ by a factor of two yields a feasible solution to this LP relaxation.

\begin{lemma}\label{lemma: Reduction_ordering}
The ordering cost in the LP relaxation of $\Ical_{DICP}$ is at most twice that of the optimal LP solution $(x,y)$ to (LP), the LP relaxation of $\Ical$.
\end{lemma}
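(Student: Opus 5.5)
We need to bound the ordering cost in the LP relaxation of $\Ical_{DICP}$ by $2\sum_{s}\sum_S K(S)y_s^S$. The plan is to exhibit an explicit feasible fractional solution to the DICP relaxation of $\Ical_{DICP}$ whose cost is exactly twice the ordering cost of $(x,y)$. Since the LP optimum of $\Ical_{DICP}$ is at most the cost of any feasible solution, this suffices.

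\textbf{Construction.} For each atom $m\in[M]$ and each subset $S\subseteq[N]$, define
\[
\tilde y_m^S := 2\,|J_m|\, y_{t(m)}^S ,
\]
where $|J_m|=\tau_m-\tau_{m-1}$ is the length of the atom. This is well defined and consistent with the continuous extension, because each atom $J_m$ lies inside the single original period $(t(m)-1,t(m)]$, on which $y^S$ is constant. Clearly $\tilde y\ge 0$.

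\textbf{Cost.} The atoms contained in period $t$ partition $(t-1,t]$, so their lengths sum to $1$. Hence
\[
\sum_m\sum_S K(S)\tilde y_m^S = 2\sum_{t\in[T]}\sum_S K(S)y_t^S ,
\]
which is twice the ordering cost of $(x,y)$.

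\textbf{Feasibility.} Fix a demand interval $(i,c_\kappa^i,d_\kappa^i)$. By \cref{eq:atom_index_interval},
\[
\sum_{m=c_\kappa^i}^{d_\kappa^i}\ \sum_{S\ni i}\tilde y_m^S \;=\; 2\int_{a_{\kappa-1}^i}^{b_\kappa^i} z_s^i\,ds .
\]
By \cref{eq:definition_a_b}, $a_{\kappa-1}^i\in Q_{\kappa-1}^i$ and $b_\kappa^i\in Q_\kappa^i$, so the cumulative integral $\int_0^t z_s^i\,ds$ equals $(\kappa-1)/2$ at $t=a_{\kappa-1}^i$ and $\kappa/2$ at $t=b_\kappa^i$. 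The right-hand side therefore equals $2\cdot\tfrac12=1$, so every covering constraint holds. The one subtle point is the case $\kappa=1$: there $a_0^i=\max Q_0^i$ is the last time at which the cumulative mass is still $0$, and the identity is unaffected.

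\textbf{Main obstacle.} The only step needing care is checking that these endpoints are well defined. The cumulative function is continuous and nondecreasing, and $\kappa\le K^i$ ensures $\kappa/2\le\int_0^T z^i$, so each $Q_\kappa^i$ is a nonempty closed interval. Consequently the max and min exist and lie in $\Ecal$, which is exactly what makes \cref{eq:atom_index_interval} valid.
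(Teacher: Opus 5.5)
Your proof is correct and is essentially the paper's argument: it uses the same scaled solution $\tilde y_m^S = 2|J_m|\,y_{t(m)}^S$, the same quantile computation giving coverage exactly $1$ on each demand interval, and the same observation that the atoms partition each original period, so the ordering cost is exactly doubled. One small precision in your last paragraph: $a_0^i$ is not among the points listed in $\Ecal$ (there $\kappa$ ranges over $[K^i]$). It still lies in $\Ecal$ because it is an integer, since $\int_0^t z_s^i\,ds$ is piecewise linear with breakpoints only at integers.
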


The previous lemma bounds the LP cost of the reduced DICP instance relative to the LP cost of the original SJRP instance. It remains to check that converting an atom-indexed DICP schedule back to the original time horizon does not introduce any additional ordering cost. This is because several atom-level orders may be mapped to the same original period, but they are consolidated into a single order in that period.

\begin{lemma}\label{lemma: atom_mapping_cost}
Let $(\tilde O_m)_{m\in[M]}$ be any feasible integral schedule for $\Ical_{DICP}$, and let $(O_t)_{t\in[T]}$ be the corresponding SJRP schedule constructed in Algorithm~\ref{alg:Reductions}. Then
\[
    \sum_{t=1}^T K(O_t)
    \leq
    \sum_{m=1}^M K(\tilde O_m).
\]
\end{lemma}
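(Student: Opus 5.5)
The plan is to pass through the intermediate schedule $(\widehat O_t)_{t\in[T]}$ and prove two inequalities: $K(O_t)\le K(\widehat O_t)$ for every period $t$, and $\sum_{t}K(\widehat O_t)\le\sum_m K(\tilde O_m)$. Both only use that $K$ is nonnegative, monotone, and submodular. I will adopt the convention implicit in the IP, that a period with an empty order incurs no cost; equivalently, only nonempty sets are charged, or $K(\emptyset)=0$.

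\emph{Step 1: $O_s\subseteq \widehat O_s$.} Take any $i\in O_s$. By \eqref{eq:construction_final_schedules} there is a demand point $(i,d)\in\Dcal$ with $s=s_{(i,d)}=\max\{r\le d: i\in\widehat O_r\}$. The maximum is attained, because the set is finite and nonempty by \cref{lemma: Reduction_feasible}. Hence $i\in\widehat O_s$. Monotonicity of $K$ then gives $K(O_s)\le K(\widehat O_s)$. If $O_s=\emptyset$, the period contributes nothing and the inequality is trivial.

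\emph{Step 2: subadditivity within a period.} Fix $t$ and let $m_1,\dots,m_p$ be the atoms with $t(m)=t$ whose orders $\tilde O_m$ are nonempty. For any sets $A,B$, submodularity gives $K(A\cup B)+K(A\cap B)\le K(A)+K(B)$. Nonnegativity of $K(A\cap B)$ then yields $K(A\cup B)\le K(A)+K(B)$. Inducting on $p$ gives
\[
K(\widehat O_t)=K\Bigl(\textstyle\bigcup_{q=1}^{p}\tilde O_{m_q}\Bigr)\le \sum_{q=1}^{p}K(\tilde O_{m_q})\le \sum_{m:t(m)=t}K(\tilde O_m).
\]
If $p=0$, then $\widehat O_t=\emptyset$, and by Step 1 also $O_t=\emptyset$, so both sides vanish under the convention.

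\emph{Step 3: sum over periods.} Each atom $J_m$ lies in exactly one original period $(t(m)-1,t(m)]$, so the sets $\{m: t(m)=t\}$ for $t\in[T]$ partition $[M]$. Summing Steps 1 and 2 over $t$ gives
\[
\sum_t K(O_t)\le\sum_t K(\widehat O_t)\le\sum_{m=1}^M K(\tilde O_m).
\]

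There is no real obstacle in this argument. The only delicate point is bookkeeping the cost of empty orders: subadditivity must be applied only to the nonempty atom orders, and Step 1 must confirm that $O_t\neq\emptyset$ forces $\widehat O_t\neq\emptyset$, so that no spurious $K(\emptyset)$ terms appear on the left-hand side.
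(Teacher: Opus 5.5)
Your proof is correct and follows the paper's argument: show $O_t\subseteq\widehat O_t$ and apply monotonicity, then bound $K(\widehat O_t)$ by subadditivity over the atoms with $t(m)=t$, and sum over the partition of $[M]$. The empty-set convention is harmless but not needed, because every original period $(t-1,t]$ contains at least one atom, and subadditivity of a nonnegative submodular $K$ over a nonempty family of sets holds even when $K(\emptyset)>0$.
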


Together, these lemmas establish the approximation guarantee claimed in \cref{proposition: Reduction}. By \cref{lemma: Reduction_ordering}, the LP relaxation of the reduced DICP instance has ordering cost at most twice the ordering cost of the optimal LP solution to $\Ical$. Therefore, if the DICP algorithm achieves an ordering-cost approximation ratio of $\gamma$, then
\[
\sum_{m=1}^M K(\tilde O_m)
\leq 2\gamma \sum_{s=1}^T\sum_{S\subseteq[N]}K(S)y_s^S.
\]
By \cref{lemma: atom_mapping_cost}, mapping the atom-indexed DICP schedule back to the original SJRP periods does not increase ordering cost, and hence
\[
\sum_{t=1}^T K(O_t)
\leq 2\gamma \sum_{s=1}^T\sum_{S\subseteq[N]}K(S)y_s^S.
\]
Moreover, by \cref{lemma: Reduction_holding}, the holding cost of the constructed schedule is at most the total objective value of the optimal LP solution to $\Ical$. It follows that Algorithm~\ref{alg:Reductions} gives a $(2\gamma+1)$-approximation for the SJRP instance $\Ical$.

Finally, all results are combined to establish \cref{thm:main}. \Cref{prop:feasibility} guarantees that the orders returned by Algorithm~\ref{alg:main} form a feasible solution to the DICP, while \cref{prop:final_cost} bounds the total ordering cost across all $k$ categories by at most $4k$ times the LP relaxation cost of the DICP. Substituting $\gamma=4k$ into \cref{proposition: Reduction} then yields an overall $(8k+1)$-approximation for the SJRP, as claimed in \cref{thm:main}.

\section{Conclusions}\label{section: conclusion}

This paper presents a novel approximation algorithm for the submodular joint replenishment problem (SJRP) under a broad family of submodular ordering cost functions that admit a structured decomposition over item categories. This framework substantially generalizes prior work by capturing complex cross-category interactions through an arbitrary multivariate function over category-level aggregates. Our main result establishes an $O(k)$-approximation guarantee, where $k$ is the number of item categories. For fixed $k$, this yields the first known constant-factor bound for this general class—a guarantee that was previously unknown even in the case $k=1$.

The analysis proceeds by recasting the problem as a disjoint interval covering problem (DICP), where the objective is to select orders that cover disjoint demand intervals. A novel rounding algorithm is introduced, combining a water-filling partitioning of the fractional LP solution with a dynamic programming procedure to produce a feasible integral schedule. This approach departs substantially from prior methods.

We believe the techniques developed here may inform future work on approximation algorithms for inventory problems, including the longstanding open problems of obtaining constant-factor guarantees for the SJRP under general submodular cost functions and the IRP under arbitrary metrics.

% Acknowledgments here
\section*{Acknowledgments}
This work was supported by the National Science Foundation Graduate Research Fellowship under Grant No. 2141064. Any opinions, findings, conclusions, or recommendations expressed in this material are those of the authors and do not necessarily reflect the views of the National Science Foundation.

% References here (outcomment the appropriate case)

% CASE 1: BiBTeX used to constantly update the references
%   (while the paper is being written).
\bibliographystyle{plainnat} % outcomment this and next line in Case 1
\bibliography{sample} % if more than one, comma separated

%\bibliographystyle{plainnat} % outcomment this and next line in Case 1
%\bibliography{sample} % if more than one, comma separated

% CASE 2: BiBTeX used to generate mypaper.bbl (to be further fine tuned)
%\input{mypaper.bbl} % outcomment this line in Case 2

%If you don't use BiBTex, you can manually itemize references as shown below.

%\bibliographystyle{nonumber}

% \begin{thebibliography}{3}
% \providecommand{\natexlab}[1]{#1}
% \providecommand{\url}[1]{\texttt{#1}}
% \providecommand{\urlprefix}{URL }

% \bibitem[{Smith(2005)}]{smith2005}
% Smith J (2005) Optimal resource allocation in humanitarian logistics.
%   \emph{Journal of Operations Research} 30(2):123--135.
  
% \bibitem[{Jones(2010)}]{jones2010}
% Jones S (2010) Stochastic programming models for humanitarian logistics.
%   \emph{INFORMS Mathematics of Operations Research} 35(4):567--580.

% \bibitem[{Brown(2015)}]{brown2015}
% Brown D (2015) \emph{Introduction to Stochastic Programming} (Springer).

% \end{thebibliography}

% Appendix here
% Options are (1) APPENDIX (with or without general title) or
%             (2) APPENDICES (if it has more than one unrelated sections)
% Outcomment the appropriate case if necessary
%

\newpage
% \begin{APPENDIX}{Missing Proofs}

%
%   or
% %
\appendix
\section{Proof of Correctness for the DICP Approximation Algorithm} \label{section: appendix approx alg}

We include here the proofs omitted from the main text that justify the approximation guarantee for the DICP.

Below, we prove that the notion of level and index introduced in \cref{def:level,def:index} follow the desired monotonicity property.

\begin{proof}[Proof of \cref{lemma:monotonicity_levels}]
    It suffices to show that for any subset of items $S$ and $i\notin S$, we have $L_j(S\cup\{i\}) \leq L_j(S)$. If $P_j\subseteq S$ the result is immediate from $L_j(S\cup\{i\}) = L_j(S)=0$. We now suppose that this is not the case and distinguish between two cases.

    First, suppose that $i\notin \argmin\set{w_{i'},i'\in P_j\setminus S}$. Fix $i_0\in \argmin\set{w_{i'},i'\in P_j\setminus S}$. Deleting $i$ does not change this argmin, hence we also have $i_0\in \argmin\set{w_{i'},i'\in P_j\setminus (S\cup\{i\})}$ Then,
    \begin{align*}
        &L_j(S\cup\{i\}) \\
        &= \frac{f(\mathbf u(S\cup\{i\}) + w_{i_0} \cdot \mathbf e_j) - f(\mathbf u(S\cup\{i\})) }{w_{i_0}} \\
        &= \frac{K(S\cup\{i_0,i\})- K(S\cup\{i\})) }{w_{i_0}} \\
        &\leq \frac{K(S\cup\{i_0\})- K(S)) }{w_{i_0}} \\
        &= L_j(S).
    \end{align*}
    In the inequality, we used the fact that $K$ is submodular.

    Next, consider the case where $i\in  \argmin\set{w_{i'},i'\in P_j\setminus S}$. Then, 
    \begin{align*}
        &L_j(S) \\
        &= \frac{f(\mathbf u(S) + w_i \cdot \mathbf e_j) - f(\mathbf u(S)) }{w_i}\\
        &\overset{(i)}{\geq} \left.\frac{\partial f}{\partial u_j}\right|_{\mathbf u(S) + w_i \cdot \mathbf e_j}  = \left.\frac{\partial f}{\partial u_j}\right|_{\mathbf u(S\cup\{i\})}\\
        &\overset{(ii)}{\geq} \sup_{w>0}\frac{f(\mathbf u(S\cup\{i\}) + w \cdot \mathbf e_j) - f(\mathbf u(S\cup\{i\})) }{w} \\
        &\geq L_j(S\cup\{i\}).
    \end{align*}
    In $(i)$ and $(ii)$, we used the component-wise concavity of \( f \). 
\Halmos\end{proof}

\begin{proof}[Proof of \cref{claim:optimality_condition_dual_variable}]
    Assume for contradiction that the statement is not true. Then there exist time periods \( t, \tau \in [c,d] \) such that
    \begin{align*}
        K\paren{A^i_t \cup \set{i}} - K\paren{A^i_t}  <  K\paren{B^i_{\tau}} - K\paren{B^i_{\tau}\setminus \set{i}}.
    \end{align*}

    Consider a perturbation of the LP solution where we increase the order quantity of a subset containing item \( i \) at time \( t \) by a sufficiently small amount \( \varepsilon > 0 \), and simultaneously decrease the order quantity of a subset containing \( i \) at time $\tau$ by the same amount \( \varepsilon \), preserving feasibility.

The change in the total cost from this perturbation is:
\[
\varepsilon \left[ K\left(A_t^i \cup \{i\}\right) - K\left(A_t^i\right) - K\left(B_{\tau}^i\right) + K\left(B_{\tau}^i \setminus \{i\}\right) \right].
\]

By the assumption above, the expression in brackets is strictly negative, so the total cost strictly decreases. This contradicts the optimality of the LP solution.
\Halmos\end{proof}

We now prove that Algorithm~\ref{alg:main} has polynomial running time.

\begin{lemma}\label{lemma: runtime}
    Algorithm~\ref{alg:main} has polynomial running time.
\end{lemma}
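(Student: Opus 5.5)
The plan is to bound separately the running time of each component of Algorithm~\ref{alg:main}: preprocessing, the water-filling of Step~1 in Algorithm~\ref{alg:water_filling} for each of the $k$ categories, the dynamic program of Step~2, and the final consolidation. Two facts drive everything. First, the optimal LP solution returned by the ellipsoid method has polynomial support, since it is recovered from the dual in polynomial time \citep{schrijver1998theory}. Second, the preprocessing of \cref{appendix:preprocessing}, which enforces \cref{property: nested} and Property~\ref{property:LP_solution_reduced}, keeps this support polynomial. It follows that $\sum_t n(t)\le N T$ by nestedness, since each chain $S_1(t)\supsetneq\cdots$ has length at most $N$. Computing all $j$-levels and $j$-indices then takes polynomially many evaluations of $K$ (i.e.\ of $f$) and a sort.

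The central step is to bound the number of level-sets $|\Lcal(j)|$. I would argue as follows.
\begin{itemize}
\item Every level-set has area exactly $1/4$.
\item Level-sets within a single layer correspond to disjoint time intervals, so the total area in a layer is at most $\Area_j((0,T])$. Hence each layer has at most $4\Area_j((0,T])$ level-sets.
\item $\Area_j((0,T])=\sum_t\sum_r y_r(t)$ is polynomially bounded. I would justify this through the LP cost: any optimal solution has $\sum_S y_t^S$ bounded, because otherwise mass could be reduced without losing feasibility. More concretely, each item needs total mass at most $1$ per demand interval. So after preprocessing one can cap the total mass per period at $|\Dcal|$, or note that $\sum_S y_t^S\le$ the number of demand intervals containing $t$ (a pruning step preserving optimality).
\end{itemize}

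For the number of layers, each split point is either a period boundary or a point determined by an area condition. Each interval in layer $\ell$ has area at most half its parent's unless it contains only a piece of one period. Intervals lying inside a single period $(t-1,t]$ have area proportional to their length times the column height $\sum_r y_r(t)$, so they halve geometrically. The recursion stops once area drops below $1/2$, so the depth is $O(\log(\text{total area}) + \log T)$, up to such halvings. This gives a polynomial number of nodes, and I expect this counting step to be the main obstacle, specifically ensuring the tree depth stays polynomial when splits cluster. The cleanest route may be to bound the number of nodes directly, at most $2\cdot 4\Area_j((0,T])$ per layer times the depth, with the depth bounded by $O(\log(T\cdot\Area_j((0,T])))$ because sibling areas shrink. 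Each individual water-filling step only sorts the subsets inside $I$ by index and takes a prefix sum, which is polynomial. Split times are found by solving a linear equation within one period, using exact rational arithmetic, so the bit-sizes stay polynomial across polynomially many operations.

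Finally, the DP has one state per node $R\in\Lcal(j)$. Each minimization ranges over the polynomially many pairs $(S,t)\in R$. Here the continuous $t$ needs only one representative per combination of period piece and descendant-interval pattern, and there are polynomially many such combinations since the interval endpoints come from the tree. Each $H(R,t)$ is computed by walking one root-to-leaf path. Memoizing over nodes and these representatives gives polynomial time. The consolidation into $O_t$ and the loop over $j\in[k]$ with $k\le N$ add only polynomial factors.
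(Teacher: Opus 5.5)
\textbf{There is a gap in your depth bound, and your node count (level-sets per layer times number of layers) depends on it.} You claim that a child interval carries at most half of its parent's total area unless it lies inside a single period. This is false. The split point only balances the water-filled region, which has area $1/2$; it says nothing about where the rest of $\Area_j(I)$ lies. If the lowest-index mass of $I$ sits at its left end, the split lands near the left end, and the right child keeps area $\Area_j(I)-1/4$.

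A concrete instance: take $N=1$ and a demand interval $[t,t]$ for every $t\in[T]$. An optimal LP solution orders $\{1\}$ with mass $1$ in every period. All $j$-levels coincide, so the index order is the time order. The rightmost interval then evolves as $(0,T]\to(1/2,T]\to(3/4,T]\to(1,T]\to\cdots$, losing one period every three layers. The depth is therefore $\Theta(T)$, not $O(\log T)$. So neither the ``sibling areas shrink'' halving nor the $O(\log(T\cdot\Area_j((0,T])))$ depth bound holds. As written, your proposal does not show that the tree has polynomially many nodes.

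\textbf{The repair is short and keeps your structure.} Your per-layer bound of $4\Area_j((0,T])$ level-sets is correct. Along any root-to-leaf path, the sibling of the current node is a level-set of area exactly $1/4$ inside the sibling's own interval. So the total area $\Area_j(I)$ of the current interval drops by at least $1/4$ per layer, and the recursion stops once it falls below $1/2$. The depth is therefore at most about $4\Area_j((0,T])$, giving $O(\Area_j((0,T])^2)$ level-sets, which is polynomial by your per-period mass bound.

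\textbf{The paper's argument is sharper and avoids depth altogether.} Take an internal node $R$ with children $R_1,R_2$. On the interval of $R$, the children have combined area $1/2$ and contain $R$, which has area $1/4$ there. Hence $(R_1\cup R_2)\setminus R$ has area $1/4$. These residual regions are pairwise disjoint across internal nodes, so the number of internal nodes is at most $4$ times the total LP area, which gives $O(|\Dcal|T)$ level-sets directly.

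The rest of your plan is fine: polynomial support after preprocessing, computing levels and indices, and choosing finitely many representative times in the DP. In places it is more explicit than the paper.
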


\begin{proof}[Proof of \cref{lemma: runtime}]
We first verify that the LPs \cref{eq: primal,eq: primal_modified} can be solved in polynomial time. This follows from classical results: their duals are finite-dimensional but have exponentially many constraints, which can be handled with the ellipsoid method. A separation oracle for identifying violated dual constraints reduces to minimizing a submodular function (namely, the cost function $K$ plus a linear term). Since submodular minimization is polynomial-time solvable, and a primal solution can be recovered from the dual in polynomial time \citep{schrijver1998theory}, both LPs are solvable in polynomial time.

%\eqref{eq: primal} can be solved in polymomial time using the ellipsoid method, since a separation oracle for identifying violated dual constraints can be implemented using submodular function minimization. Namely, The same argument applies for solving the preprocessing problem \eqref{eq: primal_modified}. the 

Although Algorithm~\ref{alg:main} is described using a continuous-time extension over $(0,T]$, it can be implemented in a discrete manner. Specifically, it only requires storing information at discrete time points: the original times $t \in [T]$ and the additional split points $t\in(0,T]$ used for partitioning layers. At each such time, it suffices to track the subsets ordered by the LP, the regions at each layer, and their associated levels $\alpha$ and proportion $\beta$ for the last level. Therefore, to show that Algorithm~\ref{alg:main} runs in polynomial time, it suffices to show that Algorithm~\ref{alg:water_filling} constructs a polynomial number of level-set regions (and layers) (this would imply that the dynamic program \cref{eq: DP} would also run in polynomial time). 

To bound the number of level-sets constructed by Algorithm~\ref{alg:water_filling} for a given category $j\in[k]$, consider a parent level-set $R=\Rect_j(\alpha,\beta;I)\in\Lcal_l$ in some layer $l\geq 0$. Its two children collectively cover area exactly $1/2$, while the parent has area $1/4$. Let $P(R):=(R_1\cup R_2)\setminus R$, where $R_1$ and $R_2$ are the children of $R$. For all parent level-sets $R$, these sets $P(R)$ are disjoint and each have area $1/4$. Therefore,
\begin{align}\label{eq:bound_nb_parents}
    &\text{Number of parent level-sets} \\
    &\leq 4\cdot (\text{Total area of LP solution}) \notag\\
    &\leq 4|\Dcal|T.\notag
\end{align}
The final bound follows since for all non-empty $S\subseteq[N]$, we have $K(S)>0$, so if we had $\sum_{S\subseteq[N]}y_t^S>|\Dcal|$ at any time $t\in[T]$, we could strictly reduce the LP cost by decreasing some $y_t^S$ while maintaining feasibility. Since each parent region gives rise to at most two children, the total number of level-sets constructed by Algorithm~\ref{alg:water_filling} is at most $12|\Dcal|T$, which ends the proof.
\Halmos\end{proof}
% \end{APPENDIX}

Finally, we establish conditions on the decomposition function $f$ that guarantee the submodularity of the ordering cost function $K$.

\begin{lemma}\label{lemma: f conditions}
    Let $f:\Rbb^k_+\to\Rbb_+$ be a function that is twice continuously differentiable on $\Rbb^k_+$. If for any $i,j\in [k]$,
        \begin{equation*}
            \forall \mathbf v\in\Rbb_+^k,\quad \frac{\partial^2 f}{\partial x_i\partial x_j}(\mathbf v)\leq 0,
        \end{equation*}
    then, for any weight vectors $\mathbf w_1,\ldots,\mathbf w_k\in\Rbb_+^N$ with strictly positive entries on items in $P_j$ and zeroes elsewhere, the set function $K:S\subseteq[N]\mapsto f(\mathbf w_1^\top \mathbf e_S,\ldots,\mathbf w_k^\top \mathbf e_S)$ is submodular.
    
    Moreover, if $|P_j|\geq 3$ for all categories $j\in[k]$, the two statements are equivalent.
\end{lemma}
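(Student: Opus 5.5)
The plan is to prove the two directions separately. Sufficiency is an integral representation of marginal costs. The converse comes from choosing weight vectors that realise a mixed second difference of $f$ at an arbitrary point.

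\emph{Sufficiency.} Let $S\subseteq S'\subseteq[N]$ and $i\notin S'$, and let $j$ be the category with $i\in P_j$. Since $\mathbf u(S\cup\{i\})=\mathbf u(S)+w_i\mathbf e_j$, the fundamental theorem of calculus gives
\begin{equation*}
K(S\cup\{i\})-K(S)=\int_0^{w_i}\frac{\partial f}{\partial x_j}\paren{\mathbf u(S)+s\,\mathbf e_j}\,ds,
\end{equation*}
and the same formula holds with $S$ replaced by $S'$. Because all weights are nonnegative, $\mathbf u(S)\le \mathbf u(S')$ componentwise. Along the segment from $\mathbf x:=\mathbf u(S)+s\mathbf e_j$ to $\mathbf x':=\mathbf u(S')+s\mathbf e_j$, the derivative of $\partial f/\partial x_j$ is $\sum_l \frac{\partial^2 f}{\partial x_l\partial x_j}\,(x'_l-x_l)\le 0$. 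Each term is a nonpositive Hessian entry times a nonnegative increment. Hence $\partial_j f(\mathbf x')\le\partial_j f(\mathbf x)$ for every $s$. Integrating over $s\in[0,w_i]$ gives $K(S'\cup\{i\})-K(S')\le K(S\cup\{i\})-K(S)$, which is submodularity.

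\emph{Converse, assuming $|P_l|\ge 3$ for every $l$.} I read the hypothesis as: $K$ is submodular for \emph{every} admissible choice of weights. Fix $\mathbf v\in\Rbb^k_+$ and indices $i,j\in[k]$; the goal is $\partial^2 f/\partial x_i\partial x_j(\mathbf v)\le 0$. By continuity of the second derivatives, it suffices to treat $\mathbf v$ with all coordinates strictly positive, since boundary points are limits of such points. The weights are chosen as follows.
\begin{itemize}[leftmargin=*]
\item In each category $P_l$, pick a \emph{base} item $g_l$ with weight $v_l$.
\item Pick an item $a\in P_i\setminus\{g_i\}$ with weight $\varepsilon>0$.
\item Pick an item $b\in P_j\setminus\{g_j,a\}$ with weight $\delta>0$.
\item Give all remaining items arbitrary positive weights.
\end{itemize}
The choice of $b$ is where $|P_j|\ge 3$ is needed: when $i=j$, the three items $g_j,a,b$ must all lie in $P_j$. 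Now let $S:=\{g_1,\dots,g_k\}$, so that $\mathbf u(S)=\mathbf v$. Submodularity applied to $S\subseteq S\cup\{b\}$ with the added item $a$ gives
\begin{equation*}
f(\mathbf v+\varepsilon\mathbf e_i+\delta\mathbf e_j)-f(\mathbf v+\delta\mathbf e_j)-f(\mathbf v+\varepsilon\mathbf e_i)+f(\mathbf v)\le 0 .
\end{equation*}
Dividing by $\varepsilon\delta$ and letting $\varepsilon,\delta\to 0$, the left-hand side converges to $\partial^2 f/\partial x_i\partial x_j(\mathbf v)$ because $f$ is $C^2$. This gives the claimed sign.

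\emph{Main obstacle.} I expect the delicate points to be in the converse. One is the quantifier: the equivalence only holds if submodularity is assumed for all weight vectors, and that is how I will state it. The other is handling points $\mathbf v$ on the boundary of $\Rbb^k_+$. Continuity of the Hessian resolves the boundary case. The cardinality condition $|P_l|\ge 3$ is exactly what makes the diagonal case $i=j$ work.
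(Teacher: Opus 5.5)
Your proposal is correct and follows essentially the same route as the paper. Sufficiency uses the integral representation of marginal costs together with the nonpositive mixed partials. The converse chooses weights so that submodularity forces a nonpositive mixed second difference, then lets the perturbations vanish using continuity of the Hessian: a base item of weight $v_l$ in each category plus two small-weight items, which is where $|P_j|\ge 3$ enters for the diagonal case.

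The differences are minor. You compare marginals for arbitrary $S\subseteq S'$ directly along a segment, where the paper telescopes one-item steps. You handle zero coordinates of $\mathbf v$ by a density/continuity argument, where the paper simply omits the base item for categories with $v_j=0$.
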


We note that $f$ need not be jointly concave for $K$ to be submodular under this structure.

\begin{proof}[Proof of \cref{lemma: f conditions}]
    We start with the first claim. Assume that $\frac{\partial^2 f}{\partial x_i\partial x_j}\leq 0$ for all $i,j\in[k]$, and fix weight vectors $\mathbf w_1,\ldots,\mathbf w_k$ satisfying the lemma's conditions. 
    
    Let $A\subset[N]$ and pick distinct items $i,  b \in[N]\setminus A$. Set $B:=A\cup\{b\}$---in other words, $\mathbf u(B)= \mathbf u(A)+w_b\mathbf e_{j(b)}$, where $j(b)$ denotes the category of item $b$. Then,
    \begin{align}
        &(K(B\cup\{i\})-K(B)) - (K(A\cup\{i\})-K(A))\notag \\
        &= \paren{f(\mathbf u(B)+w_i \mathbf e_{j(i)}) - f(\mathbf u(B))}\notag\\
&\qquad - \paren{f(\mathbf u(A)+w_i \mathbf e_{j(i)}) - f(\mathbf u(A))}  \notag \\
        & =\int_{y=0}^{w_i} \Bigg( \frac{\partial f}{\partial x_{j(i)}}(\mathbf u(B)+ y\mathbf e_{j(i)}) \notag\\
&\qquad\qquad - \frac{\partial f}{\partial x_{j(i)}}(\mathbf u(A)+ y\mathbf e_{j(i)})\Bigg) dy  \notag \\
        &= \int_{y=0}^{w_i} \int_{z=0}^{w_b}  \frac{\partial^2 f}{\partial x_{j(i)} \partial x_{j(b)}}\Big(\mathbf u(A) \notag\\
&\qquad\qquad\qquad \qquad  + y\mathbf e_{j(i)} + z\mathbf e_{j(b)}\Big) dz dy\label{eq:useful_computation_square}\\
        &\leq 0. \notag
    \end{align}
    That is, we showed that
    \begin{equation}\label{eq:submodular_constraint}
        K(B\cup\{i\})-K(B) \leq K(A\cup\{i\})-K(A),
    \end{equation}
    whenever $B$ contains exactly one more item than $A$. Summing these equations then shows that for any $A\subseteq B$ and $i\in[N]$, \cref{eq:submodular_constraint} holds. This shows that $K$ is submodular.

    Next, we prove the second claim assuming that $|P_j|\geq 3$ for all $j\in[k]$. Suppose that the function $f$ satisfies the second statement---that is, for any choice of weights, the corresponding function $K$ is submodular. Fix $\mathbf v\in\Rbb_+^k$, and for each category $j\in[k]$, we fix three distinct items $i_1(j),i_2(j),i_3(j)\in P_j$. Finally, fix two (possibly equal) categories $l_1,l_2\in[k]$. We consider any weights $\mathbf w_1,\ldots,\mathbf w_k$ that satisfy the requirements in the lemma statement, and additionally satisfy the following: $ w_j(i_1(j))=v_j$ for any $j \in [k]$ with $v_j > 0$, and $w_{l_1}(i_2(l_1))=w_{l_2}(i_3(l_2))=\epsilon$ for some $\epsilon>0$. Denote by $K^\epsilon$ the corresponding cost function, which by assumption is submodular. Next, we consider the following sets:
    \begin{equation*}
        A=\set{i_1(j):j\in[k],\, v_j>0},  \quad  B=A\cup\{i_2(l_1)\}.
    \end{equation*}
    By construction, we have $\mathbf u(A)=\mathbf v$ and $\mathbf u(B) = \mathbf v + \epsilon \mathbf e_{l_1}$. Moreover, note that $i_3(l_2)$ is not contained in either $A$ or $B$ by construction. Then, since $K^\epsilon$ is submodular, we have
    \begin{align*}
        0&\geq (K^\epsilon(B\cup\{i_3(l_2)\})-K^\epsilon(B)) \\
        &\qquad- (K^\epsilon(A\cup\{i_3(l_2)\})-K^\epsilon(A)) \\
        &\overset{(i)}{=} \int_{y=0}^{\epsilon} \int_{z=0}^{\epsilon}  \frac{\partial^2 f}{\partial x_{l_1} \partial x_{l_2}}(\mathbf v+ y\mathbf e_{l_1} + z\mathbf e_{l_2}) dy dz \\
        &\overset{(ii)}{ =} \epsilon^2 \frac{\partial^2 f}{\partial x_{l_1} \partial x_{l_2}}(\mathbf v) + o(\epsilon^2)
    \end{align*}
    In $(i)$ we used \cref{eq:useful_computation_square} and in $(ii)$ we used the fact that $ \frac{\partial^2 f}{\partial x_{l_1} \partial x_{l_2}}$ is continuous since $f$ is twice continuously differentiable. Therefore, this shows that $ \frac{\partial^2 f}{\partial x_{l_1} \partial x_{l_2}}(\mathbf v)\leq 0$. This ends the proof.
\Halmos\end{proof}

\subsection{Preprocessing}\label{appendix:preprocessing}

In this subsection, we prove the following lemma using a constructive argument, which also provides a technique for obtaining the desired solution.

\begin{lemma}\label{lemma: exists_satisfying_properties}
    There exists an optimal solution to \eqref{eq: primal} that satisfies both Property~\ref{property: nested} and Property~\ref{property:LP_solution_reduced}.
\end{lemma}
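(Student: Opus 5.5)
Among all optimal solutions of \eqref{eq: primal} I will select one by a secondary criterion, and then show that this selection satisfies both properties. The label \eqref{eq: primal_modified} cited in the proof of \cref{lemma: runtime} suggests the paper does the same. Let $\Phi(y):=\sum_{t\in[T]}\sum_{S} (T+1-t)\,|S|\,y_t^S$. I take an optimal solution $y^\star$ of the modified LP that minimizes $\Phi(y)$ over the optimal face of \eqref{eq: primal}. Equivalently, $y^\star$ minimizes $\sum K(S)y_t^S+\epsilon\,\Phi(y)$ for $\epsilon>0$ small enough, which is solvable in polynomial time by the same ellipsoid and submodular-minimization argument. Then I perform the uncrossing step of \citet{cheung2016submodular} in each period. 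This step replaces $y^R_t,y^S_t$ by orders of $R\cap S$ and $R\cup S$. It does not increase cost, by submodularity, and it preserves coverage of every item in period $t$. It also leaves $\Phi$ unchanged, because $|R\cap S|+|R\cup S|=|R|+|S|$. The procedure terminates by the standard potential argument (e.g.\ $\sum y_t^S|S|^2$ strictly increases). So the result is still optimal and $\Phi$-minimal, and it satisfies Property~\ref{property: nested}.

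Next I check Property~\ref{property:LP_solution_reduced} for this solution. Suppose it fails for some $(i,c,d)$ and times $t_1<t_2$ in $[c,d]$ with $z^i_{t_2}>0$ and $K(B^i_{t_2})-K(B^i_{t_2}\setminus\{i\})\ge K(A^i_{t_1}\cup\{i\})-K(A^i_{t_1})$. By \cref{claim:optimality_condition_dual_variable} this must hold with equality. I then perform the exchange from that lemma's proof with some $\varepsilon>0$:
\begin{itemize}
\item move $\varepsilon$ of $A^i_{t_1}$ to $A^i_{t_1}\cup\{i\}$ at $t_1$;
\item move $\varepsilon$ of $B^i_{t_2}$ to $B^i_{t_2}\setminus\{i\}$ at $t_2$.
\end{itemize}
If $A^i_{t_1}=\emptyset$ has no mass, I instead add $\varepsilon$ of $\{i\}$ and treat $K(\emptyset)$ as the baseline. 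The first-order cost change is zero by the equality, so the cost is exactly unchanged; no higher-order terms arise, since the objective is linear in $y$. The constraints for $(i,c,d)$ are preserved because both times lie in its interval. Constraints for other demands are unaffected, because other items' coverage at $t_1,t_2$ is unchanged and item $i$'s intervals are disjoint. The change in $\Phi$ is $\varepsilon[(T+1-t_1)-(T+1-t_2)]=\varepsilon(t_2-t_1)>0$, which goes the wrong way. I therefore flip the sign of the weight and use $\Phi(y)=\sum_t t\,|S|\,y_t^S$, so the change becomes $\varepsilon(t_1-t_2)<0$. This contradicts $\Phi$-minimality. Moving mass earlier is exactly the tie-breaking the property prescribes, so this orientation is the consistent one.

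The main obstacle is feasibility of the exchange itself. It requires positive mass on $A^i_{t_1}$ (or that $A^i_{t_1}$ is the empty set) and positive mass on $B^i_{t_2}$, and the latter holds since $z^i_{t_2}>0$. After the move the solution may no longer be nested. That is acceptable, since it is an optimal solution with strictly smaller $\Phi$, and re-uncrossing does not change $\Phi$. So the contradiction is with the choice of $y^\star$ as $\Phi$-minimal among all optimal solutions, not only the nested ones. The remaining care is that the minimum of $\Phi$ over the optimal face is attained, which holds since the face is a polytope, and that uncrossing preserves both optimality and $\Phi$, which I verified above.
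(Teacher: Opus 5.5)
Your proposal is correct and is essentially the paper's argument. Since $\sum_S |S|\,y_t^S=\sum_i z_t^i$, your sign-corrected potential $\Phi(y)=\sum_t t\sum_S|S|\,y_t^S$ is exactly the objective of \eqref{eq: primal_modified} over the optimal face; nesting preserves every $z_t^i$ and hence $\Phi$; and your exchange is the paper's $\delta$-shift of item $i$ from $t_2$ to $t_1$. The one loose step is finite termination of pairwise uncrossing, because a bounded, strictly increasing potential does not by itself force termination. As the paper does, it is cleaner to build the chain $S(1,t)\supseteq\cdots\supseteq S(N,t)$ directly from the sorted values $z_t^i$ and invoke Lemma~1 of \citet{cheung2016submodular}.
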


\begin{proof}
The procedure used to ensure that Properties \ref{property: nested} and \ref{property:LP_solution_reduced} hold involves solving a modified submodular optimization problem. This problem is similar to \eqref{eq: primal}, but also optimizes over an additional variable $z^i_t$, representing the total quantity of item $i\in[N]$ ordered in period $t\in[T]$. In contrast to \eqref{eq: primal}, this formulation prioritizes placing orders earlier in time. Let $C^\star$ denote the optimal objective value of \eqref{eq: primal}. We solve the following problem:
\begin{align}\label{eq: primal_modified}
    \text{minimize}  &  \sum_{t=1}^T  \sum_{i=1}^N t\cdot z_t^i \tag{P2}\\
    \text{subject to}&\sum_{S\subseteq [N] }\sum_{t=1}^T K(S) y^S_t \leq C^\star \nonumber\\
    & z^i_t = \sum_{S: i\in S \subseteq [N]}y_t^S, &&i\in[N],t\in [T] \nonumber\\
    & \sum_{t\in [c,d]} z^i_{t} \geq 1, &&(i,c,d)\in \mathcal D \nonumber\\
    &y_t^S \geq 0, &&t\in [T], S\subseteq [N]\nonumber
\end{align}

The $y$ component of the optimal solution $(y,z)$ to \eqref{eq: primal_modified} is also optimal for \eqref{eq: primal}, since the first constraint guarantees that the order cost is bounded by $C^\star$. Using the procedure from Lemma 1 in \cite{cheung2016submodular}, this $y$ component can be transformed to satisfy the nested structure property. The following claim formalizes this transformation and shows that the resulting solution $\tilde y$ remains optimal for \eqref{eq: primal} while also satisfying both structural properties: \cref{property: nested} and \cref{property:LP_solution_reduced}.

%: for each time period $t\in[T]$, without loss of generality, index the items so that $z^1_t\leq z^2_t \leq \dots\leq z^N_t$, and set $z_t^0:=0$. Define $S(i):=\set{i, i+1, \dots, N}$, and set $\tilde y_t^{S(i)} = z_t^{i}-z_t^{i-1}$ for each $i\in[N]$ and $\tilde y_s^S=0$ for all other $S\subseteq [N]$. 

\begin{myclaim}\label{lemma:check_preprocessing}
    Let $(y,z)$ be an optimal solution to \cref{eq: primal_modified}. 
    For each $t\in[T]$, let $\pi_t(1),\ldots,\pi_t(N)$ be the items ordered so that 
    \[
        z_t^{\pi_t(1)} \leq \ldots \leq z_t^{\pi_t(N)}.
    \]
    For $i\in[N]$, define the subset
    \[
        S(i,t) := \{\pi_t(i),\ldots,\pi_t(N)\}
    \]
    of items whose order quantity is at least $z_t^{\pi_t(i)}$. 
    We then define
    \[
        \tilde y_t^{S(i,t)} := z_t^{\pi_t(i)} - z_t^{\pi_t(i-1)},
    \]
    where $z_t^{\pi_t(0)} := 0$.
    For all other sets $S\subseteq[N]$, set $\tilde y_t^S=0$.
    Then, $\tilde y$ is also an optimal solution to \cref{eq: primal} and satisfies Property~\ref{property: nested} and Property~\ref{property:LP_solution_reduced}. 
\end{myclaim}
\end{proof}

% We now prove that the pre-processed solution satisfies the structural properties we require. 

\begin{proof}[Proof of \cref{lemma:check_preprocessing}.]
    For any $t\in[T]$, $\tilde y_t^S$ has positive value only for sets of the form 
    $S(i,t) = \{\pi_t(i),\ldots,\pi_t(N)\}$ with $i\in[N]$.  
    These sets are nested, so $\tilde y$ satisfies Property~\ref{property: nested}. 
    
    Let $j \in [N]$ denote an actual item label and let $i\in[N]$ be such that $j = \pi_t(i)$. Then
    \[
        \sum_{S: j \in S \subseteq [N]} \tilde y_t^S
        = \sum_{i' \in [i]} \tilde y_t^{S(i',t)}
        = z_t^{\pi_t(i)} 
        = z_t^{j}.
    \]
    Also, for any $t\in[T]$, submodularity of $K$ implies (see \cite[Lemma 1]{cheung2016submodular})
    \[
        \sum_{S\subseteq [N] } K(S) \tilde y_t^S \leq \sum_{S\subseteq [N] } K(S) y^S_t.
    \]
    Thus $(\tilde y,z)$ is feasible for \cref{eq: primal_modified} and achieves the same objective value as $(y,z)$. Since $(y,z)$ is optimal, $(\tilde y,z)$ is also optimal for \cref{eq: primal_modified}.  
    By definition, $C^\star$ is the optimal value of \cref{eq: primal}, so the constraints of \cref{eq: primal_modified} imply $\tilde y$ is also optimal for \cref{eq: primal}. 
    
    It remains to show Property~\ref{property:LP_solution_reduced}.  
    Suppose, for contradiction, that it does not hold. Then there exists a demand point $(j,c,d)$ and two distinct times $t_1,t_2 \in [c,d]$ with $t_1 < t_2$, $z_{t_2}^j > 0$, and
    \begin{equation}\label{eq:contradiction_hypothesis}
        K(A_{t_1}^j \cup\{j\}) - K(A_{t_1}^j) \leq K(B_{t_2}^j) - K(B_{t_2}^j\setminus\{j\}).
    \end{equation}
    By \cref{claim:optimality_condition_dual_variable}, since $\tilde y$ is optimal for \cref{eq: primal}, the inequality in \eqref{eq:contradiction_hypothesis} must hold with equality.  
    Furthermore, since $z_{t_2}^j > 0$, we have $\tilde y_{t_2}^{B_{t_2}^j} > 0$.  
    Now consider decreasing the amount of $j$ ordered at $t_2$ and increasing the amount ordered at $t_1$ by
    \[
        \delta := \min\{\tilde y_{t_2}^{B_{t_2}^j},\ \tilde y_{t_1}^{A_{t_1}^j} + (1 - z_{t_1}^j) \cdot \mathbf{1}[A_{t_1}^j = \emptyset] \}.
    \]
    This strictly improves the objective of \cref{eq: primal_modified} by $(t_2 - t_1)\delta > 0$, contradicting the optimality of $(\tilde y,z)$.  
    Therefore, $\tilde y$ satisfies Property~\ref{property:LP_solution_reduced}.
\Halmos\end{proof}

\subsection{Bounding the expected cost of the randomized procedure}
\label{subsection: appendix_expected_cost}

We now formalize the randomized procedure. 
Let $q_j(S,t;\alpha,\beta)$ denote the inclusion proportion of each subset $(S,t)$ in $\Rect_j(\alpha,\beta;I)$, defined as $1$ if $\alpha_j(S,t)\leq \alpha-1$, $\beta$ if $\alpha_j(S,t)=\alpha$, and $0$ otherwise. Then, the area of $\Rect_j(\alpha,\beta;I)$ is
    \begin{align*}
        \Area_j & \paren{\Rect_j(\alpha,\beta;I)}\\
        &:= \int_{t\in I} \paren{\sum_{S\subseteq[N]} y_t^S q_j(S,t;\alpha,\beta)} dt.
    \end{align*}
    
Fix a category $j \in [k]$. For each level-set $\Rect_j(\alpha,\beta;I)$, we define a probability distribution $\Qcal_j(\alpha,\beta;I)$ supported on pairs $(S_r(t), t)$ with $t \in I$ and $r \in [n(t)]$. Intuitively, this distribution samples subsets proportionally to their (possibly fractional) contribution within the level-set.
Formally, for any $t \in I$ and $S \subseteq [N]$, the density of $\Qcal_j(\alpha,\beta;I)$ at $(S,t)$ is defined as:
\begin{align*}
    &p_{\Qcal_j(\alpha,\beta;I)}(S,t) \\
    &:= \frac{y_t^S \paren{ q(S,t;\alpha,\beta)-q(S,t;\alpha_p,\beta_p) }}{\Area_j(\Rect_j(\alpha,\beta;I)) - \Area_j(\Rect_j(\alpha_p,\beta_p;I))}.
\end{align*}
That is, the probability of sampling $(S,t)$ is proportional to its associated fractional order quantity $y_t^S$, scaled by $q(S,t;\alpha,\beta)-q(S,t;\alpha_p,\beta_p)$ to account for the portion of this fractional subset contained in the level-set $\Rect_j(\alpha,\beta;I)$ but not in its parent. The denominator is given by the area of $\Rect_j(\alpha,\beta;I)$ outside its parent.

To satisfy $\Rect_j(\alpha,\beta;I)$, we sample $(S,t)\sim\Qcal_j(\alpha,\beta;I)$, add the corresponding order to the schedule, and delete all level-sets in higher layers that contain time $t$. By construction, this procedure ensures that every level-set in every layer for category $j$ is satisfied by at least one order. The randomized algorithm is summarized in Algorithm~\ref{alg:randomized}. 

\begin{algorithm}[t]

\caption{Randomized version of Step 2 of Algorithm~\ref{alg:water_filling}}\label{alg:randomized}

%\SetAlgoLined
\LinesNumbered
\everypar={\nl}

% \hrule height\algoheightrule\kern3pt\relax
\KwIn{Category $j\in[k]$. Constructed layers $\Lcal_l(j)$ for $l\in[L]$ where $L$ is the total number of layers}
\KwOut{Schedule of orders $\tilde\Ocal^{(j)}$}

{\nonl \textbf{Step 2: Order placements for each level-set}}

Initialize all regions remaining to be satisfied $\Rcal_l\gets \Lcal_l(j)$ for $l\in[L]$

Initialize orders $\tilde\Ocal^{(j)}\gets \emptyset$

\For{$l=1,\ldots,L$}{
    \For{$\Rect_j(\alpha,\beta;I)\in \Rcal_l$}{
        Sample $(S,t)\sim \Qcal_j(\alpha,\beta;I)$ and update orders $\tilde\Ocal^{(j)}\gets \tilde\Ocal^{(j)}\cup\{(S,t)\}$

        For every layer $l'>l$, delete from $\Rcal_{l'}$ all level-sets $\Rect_j(\alpha',\beta';I')$ with $t\in I'$
    }
}

\Return $\tilde\Ocal^{(j)}$

% \hrule height\algoheightrule\kern3pt\relax
\end{algorithm}

We now formalize notation for the layers. Each layer $\Lcal_l$ consists of $N_l:=|\Lcal_l|$ disjoint level-sets, denoted $(R_r^l)_{r\in[N_l]}$, ordered by increasing time intervals. Each level-set \( R_r^l \in \Lcal_l \) is associated with a parent level-set in the previous layer \( \Lcal_{l-1} \), from which the time interval was split into two; one of these subintervals corresponds to \( R_r^l \). We now define notation to specify the parent and ancestors of a given level-set. Recall that the parent of any level-set in the first layer \( \Lcal_1 \) is \( \Rect_j(0,1; (0, T]) \).

\begin{definition}[Parents and ancestors]
Given a level-set $R^l_r\in\Lcal_l$, we denote by $\pi(r,l)\in[N_{l-1}]$ the index of its parent level-set $R_{\pi(r,l)}^{l-1}$. More generally, we define $\pi(r, l, l')\in[N_{l'}]$ to be the index of the ancestor of $R_r^l$ at layer $l'\in[l-1]$. For example, $\pi(r, l) = \pi(r, l, l-1)$ and $\pi(r,l,l-2)=\pi(\pi(r,l),l-1)$, and so on.
\end{definition}

We are now ready to bound the expected cost of the sampled schedule. In particular, we prove that the expected ordering cost for each of the $k$ categories is at most $4C^\star$, where $C^\star$ is the optimal objective value of the LP relaxation in \eqref{eq: primal}.

\begin{proof}[Proof of \cref{prop:cost_each_order}]
    Since we focus on a single category $j\in[k]$ throughout this proof, we omit subscripts and superscripts referring to category $j$. 
    In Algorithm~\ref{alg:randomized}, each level-set is satisfied exactly once in the sense that either it is first satisfied by an ancestral sample, or the algorithm samples once from its residual region.
    % exactly one order is placed to satisfy each level-set $R_r^l$ for all layers $l\geq 1$ and level-sets $r\in[N_l]$.

For each such level-set, we define a corresponding subregion $Q_r^l$, which intuitively captures the portion of $R_r^l$ not covered by its parent level-set. Formally, let $R_r^l=\Rect_j(\alpha,\beta;I)$ and let $R_{\pi(r,l)}^{l-1}=\Rect_j(\alpha_p,\beta_p;I_p)$ be its parent level-set, where $I\subseteq I_p$. We define $Q_r^l $ to consist of a $q(S',t;\alpha,\beta) - q(S',t;\alpha_p,\beta_p)$ fraction of each subset $S':=S_{r'}(t)$, for all $t\in I$ and $r'\in[n(t)]$. Accordingly, the area of subregion $Q_r^l$ is given by $\Area_j(Q_r^l)=\Area_j(R_r^l)-\Area_j(R_r^l \cap R_{\pi(r,l)}^{l-1})$. Importantly, unlike the level-sets $R_r^l$, the sub-regions $(Q_r^l)_{l\geq 1,r\in[N_l]}$ are disjoint. 
    
    For any region $R$, let $p_{alg}(R)$ and $p_{LP}(R):= \Area_j(R)$ denote the probability that Algorithm~\ref{alg:randomized} and the pre-processed LP solution $(y_t^S)_{t\in[T],S\subseteq [N]}$ to the DICP relaxation \eqref{eq: primal} place an order in $R$, respectively. Similarly, let $\text{cost}_{alg}(R)$ and $\text{cost}_{LP}(R)$ denote the total expected cost of orders from $R$ under Algorithm~\ref{alg:randomized} and the LP solution, respectively. Whenever Algorithm~\ref{alg:randomized} orders from a region $Q_r^l$, it samples this order with probabilities proportional to the LP solution. Hence,
    \begin{equation*}
        \text{cost}_{alg}(Q_r^l) = \text{cost}_{LP} (Q_r^l)\cdot \frac{p_{alg}(Q_r^l)}{p_{LP}(Q_r^l)}
    \end{equation*}
when $p_{LP}(Q_r^l) > 0$.

    The benefit of this sampling procedure is that the probability ratio is bounded:

    \begin{lemma}\label{lemma:bounded_ratio_probas}
        For any layer $l\geq 1$ and region $r\in[N_l]$, we have $p_{alg}(Q_r^l) = 4 \cdot p_{LP}(Q_r^l)$.
    \end{lemma}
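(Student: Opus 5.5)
We prove the lemma by induction on the layer $l$. The identity we carry through the induction is slightly stronger than the statement. Here $p_{alg}(Q_r^l)$ means the probability that Algorithm~\ref{alg:randomized} samples from $Q_r^l$, which happens exactly when $R_r^l$ is still in $\Rcal_l$ at iteration $l$. Call this event $E_r^l$. The claim to establish is $\Pbb(E_r^l) = 4\,\Area(Q_r^l)$.

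\emph{Base case.} Every $R_r^1\in\Lcal_1$ is sampled with probability one. Its parent is the dummy level-set $\Rect_j(0,1;(0,T])$, which has zero area, so $Q_r^1=R_r^1$ and $\Area(Q_r^1)=1/4$. Hence $p_{alg}(Q_r^1)=1=4\,p_{LP}(Q_r^1)$.

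\emph{Inductive step.} Fix $R=R_r^l$ with parent $R_p=R^{l-1}_{\pi(r,l)}$. By construction the parent's index and proportion $(\alpha_p,\beta_p)$ are at most $(\alpha,\beta)$ in the lexicographic sense. The reason is that the parent's water level was reached on a larger interval $I_p\supseteq I$, and the restriction of that level to $I$ has area $1/4<1/2$. So $R_p\cap R$ equals the parent level-set restricted to $I$, and we have
\[
\Area(Q_r^l)=\tfrac14-\Area(R_p\cap R).
\]
The event $E_r^l$ fails exactly when some earlier order $(S,t)$ has $t\in I$. Because of the tree structure, any earlier order whose time lies in $I$ was placed at an ancestor of $R$, namely at the parent or higher.

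The key step is to show that the regions $Q$ of ancestors $R^{l'}_{\pi(r,l,l')}$ for $l'<l$, each restricted to times in $I$, partition $R_p\cap R$. This holds because ancestors have nested intervals and nondecreasing water levels, so the increments $Q$ stack up to exactly the parent's level on $I$. The orders drawn from these ancestor regions are mutually exclusive: once an ancestor order lands at a time in $I$, every descendant containing that time is deleted, so no further ancestor on the path to $R$ samples. Therefore
\[
\Pbb(\text{earlier order in } I) = \sum_{l'<l} \Pbb\bigl(E^{l'}_{\pi(r,l,l')}\bigr)\cdot \Pbb\bigl(t\in I \mid \text{sample from } Q^{l'}_{\pi(r,l,l')}\bigr).
\]
By the induction hypothesis the first factor is $4\,\Area(Q^{l'}_{\pi})$. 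Since the sampling density is proportional to the LP density, the second factor is $\Area(Q^{l'}_{\pi}\cap I)/\Area(Q^{l'}_{\pi})$. Each term is therefore $4\,\Area(Q^{l'}_\pi\restriction I)$, and summing over $l'$ gives $4\,\Area(R_p\cap R)$. Hence
\[
\Pbb(E_r^l)=1-4\,\Area(R_p\cap R)=4\bigl(\tfrac14-\Area(R_p\cap R)\bigr)=4\,\Area(Q_r^l).
\]

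The main obstacle is the bookkeeping in the inductive step. Two things must be checked carefully. First, the ancestor increments restricted to $I$ must tile $R_p\cap R$ exactly, which requires monotonicity of the water levels along root-to-leaf paths. Second, the induction hypothesis must be applied in conditional form: we need that, given $E^{l'}_\pi$, the sampled time is distributed according to LP density independently of how $E^{l'}_\pi$ arose. This holds because the sample is drawn freshly from $\Qcal$ each time. For the first point, I would prove monotonicity via the fact that a level-set on a subinterval reaching area $1/2$ must reach at least as far in index order as its parent, whose restriction to the subinterval had area only $1/4$.
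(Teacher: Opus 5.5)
Your proof is correct and follows the paper's argument essentially step for step. Both use induction on layers and the disjoint decomposition of $R_r^l$ into $Q_r^l$ and the pieces $R_r^l\cap Q^{l'}_{\pi(r,l,l')}$, and both use the same ingredients: mutual exclusivity of the ancestor events, proportional sampling conditioned on reaching an ancestor, the induction hypothesis, and $\Area(R_r^l)=1/4$.

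You also check explicitly that water levels are nondecreasing along root-to-leaf paths, which the paper uses only implicitly in its partition identity. In that check, compare areas on the parent's interval $I_p$, where the parent has area $1/4$ and the children's level has area $1/2$, rather than on $I$ as your first justification says; your closing remark already states it correctly.
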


Using this, we can bound the total expected cost of the algorithm:

    \begin{align*}
        \Ebb\sqb{\sum_{(S,t)\in\tilde\Ocal^{(j)}} K(S)} &= \sum_{l\geq 1}\sum_{r\in[N_l]} \text{cost}_{alg}(Q_r^l) \\
        &= 4 \sum_{l\geq 1}\sum_{r\in[N_l]} \text{cost}_{LP}(Q_r^l) \\
        &\leq 4C^\star.%\cdot\text{cost(LP)}.
    \end{align*}
    In the last inequality, we used the fact that all regions $Q_r^l$ for $l\geq 1$ and $r\in[N_l]$ are disjoint. 
\Halmos\end{proof}

We now prove the lemma bounding the probability that the algorithm places an order from each region $Q_r^l$ for $l\geq 1$ and $r\in[N_l]$.

\begin{proof}[Proof of \cref{lemma:bounded_ratio_probas}]
    We prove the result by induction on the layer index $l\geq 1$. In the first layer $l=1$, Algorithm~\ref{alg:randomized} always samples an order from each of the two regions $Q_r^l=R_r^l$ for $r\in[2]$. Hence,
    \begin{equation*}
        p_{alg}(Q_r^1) = 1 = 4\cdot  p_{LP}(Q_r^1),\quad r\in[2].
    \end{equation*}
    In the last equality, we used the fact that each constructed region has area $1/4$ by design.

    Next, fix $l\geq 2$ and $r\in[N_l]$, and assume that the claim holds for all level-sets in all layers $l'<l$. By construction, Algorithm~\ref{alg:randomized} samples from $Q_r^l$ if and only if region $R_r^l$ has not been satisfied by any order sampled in previous layers. That is, sampling from $Q_r^l$ occurs if and only if there were no order from the region $R_r^l\cap R_{\pi(r,l)}^{l-1}$ while processing earlier layers. In particular, we have
    \begin{align}\label{eq:disjoint_regions_partition}
    \begin{split}
        R_r^l &= Q_r^l \sqcup (R_r^l\cap R_{\pi(r, l)}^{l-1}) \\
        &= Q_r^l \sqcup \bigsqcup_{l'\in[l-1]} (R_r^l \cap Q_{\pi(r, l, l')}^{l'}), 
    \end{split}
    \end{align}
    where $\bigsqcup$ denotes a disjoint union. Therefore,
    \begin{align*}
        &p_{alg}(Q_r^l) \\
        &= 1- \Pbb\sqb{\bigcup_{ l'\in[l-1]} \{\text{Alg}~\ref{alg:randomized}\text{ sampled in }R_r^l \cap Q_{\pi(r, l, l')}^{l'}\}}\\
        &\overset{(i)}{=} 1- \sum_{l'\in[l-1]} \Pbb\sqb{ \text{Alg}~\ref{alg:randomized}\text{ sampled in }R_r^l \cap Q_{\pi(r, l, l')}^{l'} }\\
&=
1-
\sum_{l'\in[l-1]}
p_{alg}(Q_{\pi(r,l,l')}^{l'})
\notag\\
&\qquad \times
\Pbb\Big[
\text{Alg}~\ref{alg:randomized}
\text{ sampled in }
R_r^l \cap Q_{\pi(r,l,l')}^{l'}
\;\Big|\;
\notag\\
&\qquad\qquad
\text{Alg}~\ref{alg:randomized}
\text{ sampled in }
Q_{\pi(r,l,l')}^{l'}
\Big]\\
        &\overset{(ii)}{=} 1- \sum_{l'\in[l-1]} p_{alg}(Q_{\pi(r, l, l')}^{l'})\cdot \frac{p_{LP}(R_r^l \cap Q_{\pi(r, l, l')}^{l'})}{p_{LP}(Q_{\pi(r, l, l')}^{l'})} \\
        &\overset{(iii)}{=} 1- 4 \sum_{l'\in[l-1]} p_{LP}(R_r^l \cap Q_{\pi(r, l, l')}^{l'}) \\
        &\overset{(iv)}{=} 1-4(p_{LP}(R_r^l) - p_{LP}(Q_r^l)) \\
        &\overset{(v)}{=} 4\cdot p_{LP}(Q_r^l).
    \end{align*}
    In $(i)$, we used the fact that the events in the union are disjoint: Algorithm~\ref{alg:randomized} samples an order from region $Q_{\pi(r, l, l')}^{l'}$ only if $R_{\pi(r, l, l')}^{l'}$ has not been satisfied, which implies that for all $m\in[l'-1]$, no order was sampled in $R_r^l \cap Q_{\pi(r, l, m)}^{m}$. In $(ii)$ we used the fact that Algorithm~\ref{alg:randomized} samples in $R_{\pi(r, l, l')}^{l'}$ proportionally to the LP solution. In $(iii)$ we used the induction hypothesis, and in $(iv)$ we used \cref{eq:disjoint_regions_partition}. Finally, in $(v)$ we used the fact that $p_{LP}(R_r^l)=\Area_j(R_r^l)= 1/4$. This completes the induction and the proof.
\Halmos\end{proof}

\section{Proof of correctness for the SJRP approximation algorithm}\label{section: appendix Reduction}

This appendix provides the proofs omitted from the main text that establish the correctness of the SJRP algorithm, assuming access to the approximation algorithm for the DICP. We use the same notations as in the construction of $\Ical_{DICP}$ described in \cref{subsection: Reduction alg}. In particular, we let $(x,y)$ be the optimal solution to (LP)---the linear relaxation of $\Ical$ that was used to construct $\Ical_{DICP}$. 

% Let $\tilde O_t$ for $t\in[T]$ denote a feasible integral schedule of orders for $\Ical_{DICP}$. We denote by $O_t$ for $t\in[T]$ the corresponding schedule of orders for $\Ical$ as defined in \cref{eq:construction_final_schedules}.

Let $\tilde O_m$ for $m\in[M]$ denote a feasible integral schedule of orders for $\Ical_{DICP}$. We let $\widehat O_t=\bigcup_{m:t(m)=t}\tilde O_m$ for $t\in[T]$, and denote by $O_t$ the corresponding schedule of orders for $\Ical$ as defined in \cref{eq:construction_final_schedules}.

\begin{proof}[Proof of \cref{lemma: Reduction_feasible}]
Fix a demand point $(i,d)$. Define $z$ as in the construction of $\Ical_{DICP}$: that is, $z_s^i = \sum_{S:i\in S} y_s^S$. Since the solution $y$ is feasible for (LP), we have
    \begin{align*}\label{eq:useful_eq_int}
       \int_{s=0}^d  z_s^i ds &\geq  \int_{s=0}^{s^\star_{(i,d)}}  z_s^i ds + \int_{s=s^\star_{(i,d)}}^d  x_s^i ds \\
       &\geq \int_{s=0}^{s^\star_{(i,d)}}  z_s^i ds + 1.
    \end{align*}
    In particular, there must exist an integer $\kappa\geq 1$ such that
    \begin{equation*}
        \int_{s=0}^{s^\star_{(i,d)}}  z_s^i ds \leq \frac{\kappa-1}{2} \leq \frac{\kappa}{2} \leq \int_{s=0}^{d}  z_s^i ds.
    \end{equation*}
    Next, recall that $a_{\kappa-1}^i$ is defined as the largest value $\theta$ for which $\int_{s=0}^{\theta}  z_s^i ds=\frac{\kappa-1}{2}$ and similarly, $b_{\kappa}^i$ is the smallest value $\theta$ for which $\int_{s=0}^{\theta}  z_s^i ds=\frac{\kappa}{2}$. In summary, we showed that
    \begin{equation*}
        s^\star_{(i,d)} \leq a_{\kappa-1}^i \leq b_\kappa^i \leq d.
    \end{equation*}
    This implies \cref{eq:important_property_Reduction}: $(a_{\kappa-1}^i,b_\kappa^i]\subseteq (s_{(i,d)}^\star,d]$.

% By the construction of the continuous-time extension, if $x_{sd}^i>0$ for a given $s\in[T]$, then $x_{rd}^i>0$ for all $r\in(s-1, s]$. Hence, we have $s_{(i,d)}^\star = s_{(i,d)}-1$, where $s_{(i,d)}:=\min\{s\in[T]: x_{sd}^i>0\}$. As a result, in discrete time we have $\sqb{\lfloor a_{\kappa-1}^i\rfloor+1,\lceil b_\kappa^i \rceil} \subseteq \sqb{s_{(i,d)},d}$. Since $x_{sd}^i$ is nonzero at time $s=s_{(i,d)}$, this shows that $[s_{(i,d)},d]$ is included in the active interval of $(i,d)$. Hence, $\sqb{\lfloor a_{\kappa-1}^i\rfloor+1,\lceil b_\kappa^i \rceil}$, which is a demand interval in $\Ical_{DICP}$, is also included in the active interval of $(i,d)$. This shows that the final schedule $(O_s)_{s\in[T]}$ satisfies the demand point $(i,d)$ and ends the proof.
By construction of the reduced DICP instance, the interval $(a_{\kappa-1}^i,b_\kappa^i]$ is exactly the union of the atoms $J_m$ for $m\in[c_\kappa^i,d_\kappa^i]$. Since the DICP solution is feasible, item $i$ is ordered in some atom $m\in[c_\kappa^i,d_\kappa^i]$. Therefore $J_m\subseteq(a_{\kappa-1}^i,b_\kappa^i]\subseteq(s^\star_{(i,d)},d]$. Since $J_m\subseteq(t(m)-1,t(m)]$, this implies $t(m)\leq d$. Hence $i\in\widehat O_{t(m)}$ for some $t(m)\leq d$, and the final schedule $(O_s)_{s\in[T]}$ satisfies demand point $(i,d)$.
\Halmos\end{proof}

Below, we prove \cref{lemma: Reduction_holding}, which shows that the holding cost incurred by the schedule ${O_t}$ for ${t \in [T]}$ is at most the cost of the solution $(x, y)$ to (LP). This result also appears in \cite{cheung2016submodular}, but we include its proof for completeness.

\begin{proof}[Proof of \cref{lemma: Reduction_holding}]
    We write the dual program of (LP). Let $M^i_d$ and $L^i_{sd}$ be the dual variables corresponding to the first and second constraints in the linear relaxation of \cref{eq: IP} respectively. Then the dual program of (LP) is 
\begin{align*}
    \text{maximize}\quad&\sum_{(i,d)\in \mathcal D} M_d^i\tag{D}\\
    \text{subject to}\quad& M_d^i\leq q_{id}h^i_{sd} + L_{sd}^i, &&(i,d)\in \mathcal D, s\in[d]\nonumber\\
    &\sum_{i\in S}\sum_{t=s}^T L_{st}^i\leq K(S), &&s\in[T], S\subseteq [N]\nonumber\\
    &L_{sd}^i\geq 0, &&(i,d)\in \mathcal D, s\in[d]\nonumber
\end{align*}

By complementary slackness in the optimal (LP) solution, any variable $x^i_{sd} > 0$ implies that the corresponding constraint $M_d^i \leq q_{id}h^i_{sd} + L_{sd}^i$ is tight. Since $L_{sd}^i \geq 0$, it follows that $M_d^i \geq q_{id}h^i_{sd}$ for all $i, s, d$ such that $x^i_{sd} > 0$. In particular, due to the monotonicity of $h^i_{sd}$ in $s$, this holds for every $(i,d)\in\mathcal{D}$ and every $s \geq \min\{s\in[d]:x_{sd}^i>0\}=:s_{(i,d)}$.

We now bound the holding cost incurred by the schedule of orders $O_t$ for $t\in[T]$. By \cref{lemma: Reduction_feasible} and its proof, each demand point $(i,d)$ is fulfilled by an order placed at some original period $s\geq s_{(i,d)}$, where $s_{(i,d)}:=\min\{s\in[d]:x_{sd}^i>0\}$. Therefore, the holding cost incurred for each $(i,d)$ is $q_{id}h^i_{sd}\leq M_d^i$. Summing over all demand points, the total holding cost is at most $\sum_{(i,d) \in \mathcal{D}} M_d^i$, which is itself bounded by the cost of (LP) by weak duality.

\Halmos\end{proof}

Finally, we prove the bound on the ordering cost:

% We consider the solution $\tilde y$, defined as follows:
%     \begin{align*}
%         \tilde y_s^S := 2 y_s^S, \quad  s\in[T],S\subseteq [N].
%     \end{align*}
%     We now show that this is feasible for $\Ical_{DCIP}$. For convenience, consider the continuous extension $\tilde y_{r}^S:=\tilde y_s^S$ for any $r\in(s-1,s]$ and $s\in[T]$. Then, for any $i\in[N]$ and $\kappa\in[K^i]$, we check that the constraint for $\Ical_{DCIP}$ is satisfied: 
%     \begin{align*}
% \sum_{s=\floor{a_{\kappa-1}^i}+1}^{\ceil{b_\kappa^i}} \,\sum_{S:i\in S} \tilde y_s^S &= \sum_{s=\floor{a_{\kappa-1}^i}+1}^{\ceil{b_\kappa^i}} \,
% \int_{r=s-1}^s \paren{\sum_{S:i\in S} \tilde y_r^S} dr\\
% &\geq\int_{r=a_{\kappa-1}^i}^{b_\kappa^i} \paren{ \sum_{S:i\in S} \tilde y_r^S }dr \\
% &= \int_{r=a_{\kappa-1}^i}^{b_\kappa^i} 2z_{r}^i dr \overset{(i)}{=}1.
%     \end{align*}
%     In $(i)$ we used the definition of $a_{\kappa-1}^i$ and $b_\kappa^i$ as $\frac{\kappa-1}{2}$ and $\frac{\kappa}{2}$ quantiles of $z_{s}^i$ respectively.
%     Hence, $\tilde y$ is feasible for $\Ical_{DICP}$, and it has ordering cost at most 2 times the ordering cost of $(x,y)$ by construction.
% \Halmos\end{proof}

\begin{proof}[Proof of \cref{lemma: Reduction_ordering}]
We define a fractional solution $\tilde y$ for the LP relaxation of $\Ical_{DICP}$ by setting
\[
\tilde y_m^S:=2|J_m|y_{t(m)}^S,\qquad m\in[M],\ S\subseteq[N].
\]
We first show that $\tilde y$ is feasible for $\Ical_{DICP}$. Fix $i\in[N]$ and $\kappa\in[K^i]$. Since $(a_{\kappa-1}^i,b_\kappa^i]=\bigcup_{m=c_\kappa^i}^{d_\kappa^i}J_m$, we have
\begin{align*}
    \sum_{m=c_\kappa^i}^{d_\kappa^i}\sum_{S:i\in S}\tilde y_m^S
&=
\sum_{m=c_\kappa^i}^{d_\kappa^i}2|J_m|\sum_{S:i\in S}y_{t(m)}^S\\
&
=
\int_{a_{\kappa-1}^i}^{b_\kappa^i}2z_r^i\,dr\\
&
=
1,
\end{align*}
where the last equality follows from the definition of $a_{\kappa-1}^i$ and $b_\kappa^i$ as the $(\kappa-1)/2$ and $\kappa/2$ quantiles of $z_s^i$. Hence, $\tilde y$ is feasible for the LP relaxation of $\Ical_{DICP}$.

The ordering cost of $\tilde y$ is
\begin{align*}
    \sum_{m=1}^M\sum_{S\subseteq[N]}K(S)\tilde y_m^S
&=
2\sum_{m=1}^M |J_m|\sum_{S\subseteq[N]}K(S)y_{t(m)}^S\\
&=
2\sum_{t=1}^T\sum_{S\subseteq[N]}K(S)y_t^S,
\end{align*}
where the last equality uses the fact that the atoms contained in each original period $(t-1,t]$ partition that period. Therefore, the ordering cost in the LP relaxation of $\Ical_{DICP}$ is at most twice the ordering cost of the optimal LP solution $(x,y)$ to $\Ical$.
\Halmos\end{proof}

\begin{proof}[Proof of \cref{lemma: atom_mapping_cost}]
For each original period $t\in[T]$, Algorithm~\ref{alg:Reductions} first defines
\[
\widehat O_t:=\bigcup_{m:t(m)=t}\tilde O_m.
\]
By subadditivity of the monotone submodular function $K$,
\[
K(\widehat O_t)\leq \sum_{m:t(m)=t}K(\tilde O_m).
\]
Since the final schedule satisfies $O_t\subseteq \widehat O_t$ for every $t\in[T]$, monotonicity implies
\[
\sum_{t=1}^T K(O_t)
\leq
\sum_{t=1}^T K(\widehat O_t)
\leq
\sum_{m=1}^M K(\tilde O_m).
\]
\end{proof}

%%%%%%%%%%%%%%%%%
\end{document}